\pdfoutput=1
\documentclass[a4paper,UKenglish,cleveref, autoref,thm-restate]{lipics-v2021}
\hideLIPIcs
\usepackage{booktabs}
\usepackage{textgreek}
\usepackage{graphicx}
\usepackage{framed}
\usepackage{caption}
\usepackage{tikz-cd}
\usepackage[unicode]{hyperref}
\usepackage{doi}
\Crefname{remark}{Remark}{Remarks}
\Crefname{observation}{Observation}{Observations}
\usepackage{thm-restate}
\usepackage{algorithm}
\usepackage{algorithmicx}
\usepackage{algpseudocode}
\usepackage{todonotes}
\usepackage{xspace}

\usepackage{microtype}
\usepackage{xcolor}
\usepackage{amsmath}
\usepackage{amssymb}
\usepackage{amsthm}
\usepackage{enumitem}
\usepackage{mathtools}

\newcommand{\goodInput}{non-prime-divisible and order-invariant}

\DeclareMathOperator{\poly}{poly}

\newcommand{\LOCAL}{\ensuremath{\textsf{LOCAL}}\xspace}
\newcommand{\CONGEST}{\ensuremath{\textsf{CONGEST}}\xspace}
\newcommand{\WFLOC}{\ensuremath{\textsf{ASYNC LOCAL}}\xspace}

\definecolor{darkgreen}{rgb}{0,0.5,0}
\definecolor{darkred}{rgb}{0.4,0,0}
\hypersetup{
	colorlinks=true,
	linkcolor=darkred,
	citecolor=darkgreen,
	filecolor=black,
	urlcolor=[rgb]{0,0.1,0.5},
	pdftitle={Asynchronous Fault-Tolerant Distributed Proper Coloring of Graphs},
	pdfauthor={TBD}
}

\crefname{algocf}{Alg.}{Algs.}
\Crefname{algocf}{Algorithm}{Algorithms}

\graphicspath{{figs/}}

\newcommand{\stateold}[0]{\ensuremath{\mathrm{OldState}}}
\newcommand{\statenew}[0]{\ensuremath{\mathrm{NewState}}}
\newcommand{\stateinit}[0]{\mathrm{Init}}
\newcommand{\statenext}[0]{\mathrm{Alg}}
\newcommand{\stateterm}[0]{\mathrm{Terminated}}
\newcommand{\imm}[0]{\mathrm{WriteSnapshot}}
\newcommand{\id}[1]{\mathrm{id}_{#1}}
\newcommand{\inp}[1]{\mathrm{input}_{#1}}

\addtocounter{footnote}{0}

\newcommand*{\sign}[1]{\operatorname{sign}(#1)}

\title{Asynchronous Fault-Tolerant Distributed Proper Coloring of Graphs} 

\titlerunning{Asynchronous Fault-Tolerant Distributed Proper Coloring of Graphs}

\author{Alkida Balliu}{Gran Sasso Science Institute, L'Aquila, Italy}{alkida.balliu@gssi.it}{}{}
\author{Pierre Fraigniaud}{IRIF - CNRS \& Univ. Paris Cité, France}{pierre.fraigniaud@irif.fr}{}{}
\author{Patrick Lambein-Monette}{Unaffiliated}{patrick@lambein.name}{}{}
\author{Dennis Olivetti}{Gran Sasso Science Institute, L'Aquila, Italy}{dennis.olivetti@gssi.it}{}{}
\author{Mikaël Rabie}{IRIF - Université Paris Cité, France}{mikael.rabie@irif.fr}{}{}

\authorrunning{A. Balliu, P. Fraigniaud, P. Lambein-Monette, D. Olivetti, M. Rabie} 

\Copyright{Alkida Balliu, Pierre Fraigniaud, Patrick Lambein-Monette, Dennis Olivetti, Mikaël Rabie} 
\ccsdesc[500]{Theory of computation~Distributed algorithms}

\keywords{\LOCAL model, Graph Coloring, Renaming, Weak Symmetry-Breaking, Fault-Tolerance, Wait-Free Computing}

\funding{Partially funded by MUR (Italy) Department of Excellence 2023 - 2027, the PNRR MIUR research project GAMING ``Graph Algorithms and MinINg for Green agents'' (PE0000013, CUP D13C24000430001), by the research project RASTA ``Realtà Aumentata e Story-Telling Automatizzato per la valorizzazione di Beni Culturali ed Itinerari'' (Italian MUR PON Project ARS01 00540), and by the French ANR projects DUCAT (ANR-20-CE48-0006) and QuDATA (ANR-18-CE47-0010).
}

\begin{document}

\maketitle

\begin{abstract}
We revisit \emph{asynchronous} computing in networks of \emph{crash-prone} processes, under the asynchronous variant of the standard \LOCAL model, recently introduced by Fraigniaud et al. [DISC 2022]. We focus on the vertex coloring problem, and our contributions concern both lower and upper bounds for this problem.

On the upper bound side, we design an algorithm tolerating an arbitrarily large number of crash failures that computes an $O(\Delta^2)$-coloring of any $n$-node graph of maximum degree~$\Delta$, in $O(\log^\star n)$ rounds. This extends Linial's seminal result from the (synchronous failure-free) \LOCAL model to its asynchronous crash-prone variant. 
Then, by allowing a dependency on $\Delta$ on the runtime, we show that we can reduce the colors to $\big(\frac12(\Delta+1)(\Delta+2)-1 \big)$.
For cycles (i.e., for $\Delta=2$), our algorithm achieves a 5-coloring of any $n$-node cycle, in $O(\log^\star n)$ rounds. This improves the known 6-coloring algorithm by Fraigniaud et al., and fixes a bug in their algorithm, which was erroneously claimed to produce a 5-coloring. 

On the lower bound side, we show that, for $k<5$, and for every prime integer~$n$, no algorithm can $k$-color the $n$-node cycle in the asynchronous crash-prone variant of \LOCAL, independently from the round-complexities of the algorithms. This lower bound is obtained by reduction from an original extension of the impossibility of solving \emph{weak symmetry-breaking} in the wait-free shared-memory model. We show that this impossibility still holds even if the processes are provided with inputs susceptible to help breaking symmetry.

\end{abstract}

\begin{CCSXML}
<ccs2012>
<concept>
<concept_id>10003752.10003809.10010172</concept_id>
<concept_desc>Theory of computation~Distributed algorithms</concept_desc>
<concept_significance>500</concept_significance>
</concept>
</ccs2012>
\end{CCSXML}

\ccsdesc[500]{Theory of computation~Distributed algorithms}

\keywords{\LOCAL model, Graph Coloring, Renaming, Weak Symmetry-Breaking, Fault-Tolerance, Wait-Free Computing}

\maketitle

\section{Introduction}

\subsection{Asynchrony, Failures, and Networks}

To what extent a global solution to a computational problem can be obtained from locally available data? What can be computed locally? These are some of the questions that were asked, and partially answered 30 years ago in two seminal papers \cite{Linial92,NaorS95} in the field of distributed network computing. Since then, tremendous progress has been made about these questions, and even detailed books~\cite{HirvonenS2020,Peleg2000} can only touch a small fraction of the content of the current literature on this topic. Nevertheless, the vast majority of the achievements on \emph{local computing} have been obtained in \emph{synchronous failure-free} models, among which the most common ones are referred to as \LOCAL~\cite{Linial92} and \CONGEST~\cite{Peleg2000}. 

In both models, processing nodes occupy the vertices of a graph, and exchange messages along the edges of that graph. They all start at the same time, and computing proceeds as a sequence of synchronous rounds. At each round, every pair of adjacent nodes can exchange messages (one in each direction), and every node can perform some individual computation. \CONGEST differs from \LOCAL only as far as the message size is concerned: messages are bounded to  be of size at most~$B$ bits in \CONGEST (it is common to set $B=O(\log n)$). There are at least two solid reasons why such elegant but simplistic models should be considered. First, they ideally capture the notion of \emph{spatial locality}, as algorithms performing in $t$ rounds produce an output at each node that is solely based on the $t$-neighborhood of the node. Second, the existence of efficient synchronizers~\cite{AwerbuchPPS92,AwerbuchP90b,GhaffariT23} enables to implement algorithms designed for synchronous models on asynchronous networks, with only limited slowdown. 

Yet, models such as \LOCAL and \CONGEST suffer from one notable limitation: they ignore the potential presence of failures. Indeed, transient failures have been addressed in the framework of \emph{self-stabilization}, but crash or malign  failures are mostly ignored in the framework of local computing in networks. Instead, studying the interplay of asynchrony and failures has been the main topic of interest of distributed computing in general~\cite{AttiyaW04,Lynch96,Raynal18}, since the seminal ``FLP impossibility result'' stating that consensus is impossible in asynchronous systems with failures, even under the restriction that at most one crash failure may occur~\cite{FischerLP85}. However, the design of algorithms dedicated to asynchronous crash-prone systems have been mostly performed in \emph{shared-memory} or \emph{message-passing} models: the former assumes that processes exchange information by writing and reading in a shared memory; the latter assumes that any two processes can exchange messages directly along a private channel. While these two models are excellent abstractions of very many types of distributed systems, ranging from multi-core architectures to large-scale computing platforms, they do not enable the study of spatial locality, as the structure of the physical network is abstracted away. 

An attempt to resolve this tension between synchronous failure-free computing in networks, and asynchronous computing in crash-prone systems has been recently proposed~\cite{FraigniaudLR22}, by considering asynchronous networks subject to crash failures. 

\subsection{The ASYNC LOCAL Model}

The asynchronous crash-prone \LOCAL model\footnote{One could also consider the variant \textsf{ASYNC CONGEST} of \WFLOC by limiting to $O(\log n)$ bits the size of the registers in which nodes read and write, but we restrict ourselves to the \LOCAL variant, as standard wait-free computing does not generally restrict the size of the registers. } (\WFLOC in short),
  introduced in \cite{FraigniaudLR22},
  aims at capturing a setting
  that is a hybrid between shared memory and network computing.
This model can be described conceptually in two possible ways
  (see Section~\ref{ssec:model} for more details): 
\begin{itemize}
    \item The \WFLOC model can be viewed as the standard wait-free shared-memory model~\cite{AttiyaW04,Herlihy2013} in which the read-access to other process's registers is restricted. It bears similarities with the \emph{atomic state} model in self-stabilization~\cite{BlinT13}. In an $n$-process system, each process $i\in[n]$ can solely read the registers of processes $j\in N_G(i)$, where $N_G(i)$ denotes the set of neighbors of vertex~$i$ in a graph~$G$. That is, the wait-free shared-memory model is the \WFLOC model in which the graph $G$ is fixed to be the complete graph (or clique)~$K_n$. 
    
    \item The \WFLOC model can alternatively be viewed as the standard \LOCAL model~\cite{HirvonenS2020,Peleg2000} in which each node writes in its local register(s) instead of sending messages, and reads the registers of its neighbors instead of receiving messages from them. In addition, \WFLOC allows asynchronous executions, that is, each process reads and writes at its own pace, which may vary with time, and it may even crash (i.e., stop functioning, and never recover). Note that, as for \LOCAL, the graph~$G$ is unknown to the nodes in \WFLOC, as it is typically the \emph{input} to the problems of interest in network computing.
\end{itemize}

In the framework of asynchronous computing, the computing elements are referred to as \emph{processes}, whereas they are referred to as \emph{nodes} in the context of computing in networks, but we use these two terms indistinctly. The terminology ``wait-free'' refers to the fact that (1)~an arbitrarily large number of processes can crash, and (2)~a node cannot distinguish whether a neighboring node has crashed or is simply slow, from which it follows that a node must never ``wait'' for some action performed by another node, and must terminate independently from which of the other nodes have crashed (unless itself has crashed). 

\medskip

It was shown in~\cite{FraigniaudLR22} that the computing power of \WFLOC is radically different from the one of \LOCAL. Indeed, the authors proved that constructing a maximal independent set (MIS) is simply \emph{impossible} in \WFLOC, even in the $n$-node cycles $C_n$, $n\geq 3$, while, on cycles, it just takes $\Theta(\log^\star n)$ rounds in \LOCAL~\cite{ColeV86,Linial92}. However, the authors show also that proper coloring $C_n$ is possible in \WFLOC, to the expense of using a larger palette of colors, i.e., 6~colors instead of just 3 as in \LOCAL (a 5-coloring algorithm is also claimed in \cite{FraigniaudLR22}, but, as we shall show later, there is a bug in that algorithm). Indeed, a simple reduction to \emph{renaming} (see~\cite{AttiyaW04} for the definition) shows that, under the \WFLOC model, no algorithms can proper color all graphs of maximum degree~$\Delta$ using less than $2\Delta+1$ colors whenever $\Delta+1$ is a power of a prime. This is because \WFLOC and standard shared-memory coincides when the graph is a clique of $n=\Delta+1$ nodes. The main result in~\cite{FraigniaudLR22} is a distributed asynchronous algorithm in the \WFLOC model that  achieves proper 6-coloring of any $n$-node cycle, $n\geq 3$, in $O(\log^\star n)$ rounds, which is optimal thanks to~\cite{Linial92}. In \WFLOC, the \emph{round-complexity} of an algorithm is the maximum, taken over all nodes, and all executions, of the number of times a node writes in its register, and reads the registers of its neighbors.  

\subsection{Our results}

In a nutshell, we show that there exists an algorithm for proper coloring graphs with maximum degree $\Delta$ in the \WFLOC\/ model, using a palette of $\frac12(\Delta+1)(\Delta+2)-1$ colors, resulting into a 5-coloring algorithm for the cycles. This result was obtained by first showing how to implement Linial's coloring algorithm in the asynchronous setting, and then by developing a new technique based on reallocating identifiers to nodes. Note that even implementing Linial's coloring algorithm asynchronously is not straightforward, as it is not even clear whether the trivial recoloring algorithm that proceeds iteratively over all color classes can be implemented in the \WFLOC\/ model. Moreover, we show that, for infinitely many values of~$n$, 5-coloring the $n$-node cycles is the best that can be achieved in \WFLOC. This significantly improves the lower bound in~\cite{FraigniaudLR22} on the number of colors required for proper coloring cycles under \WFLOC, which held for $n=3$ only. 

Obtaining our lower bound required to revisit entirely the known lower bound on weak symmetry breaking\footnote{Weak symmetry breaking is the task in which processes start with no inputs, and each process must output~0 or~1, under the constraint that, whenever all processes terminate, at least one process must output~0, and at least one process must output~1. } in the standard asynchronous shared-memory model, by considering the impact of a priori ``knowledge'' given to the processes. For instance, if the processes know a priori that one process is given advice~0, and one process is given advice~1, then weak symmetry breaking becomes trivially solvable. For which a priori knowledge weak symmetry breaking becomes trivially solvable, and for which it remains unsolvable? We show that answering this novel question for specific types of a priori knowledge results into new impossibility results for the standard asynchronous shared-memory model, which translate into lower bounds and impossibility results in the \WFLOC\/ model. 

We stress the fact that while all (Turing computable) tasks are solvable in the \LOCAL\/ model, not all taks are solvable in \WFLOC, yet we also address \emph{complexity} issues, by showing that, for constant~$\Delta$, our $\big(\frac12(\Delta+1)(\Delta+2)-1\big)$-coloring algorithm performs in $O(\log^\star n)$ rounds in \WFLOC, that is, as fast as the $\Omega(\log^\star n)$ lower bound~\cite{Linial92} on the number of rounds required for coloring cycles in the synchronous failure-free \LOCAL\/ model. These results are detailed next. 

\subsubsection{Proper Coloring}

We mostly focus on distributed proper coloring, arguably one of the most important and thoroughly studied symmetry-breaking tasks in network computing --- see, e.g., \cite{FuchsK23,GhaffariK21,HalldorssonKNT22} for recent results on the matter\footnote{In the context of distributed computing in networks, especially in the \LOCAL\/ and \CONGEST\/ models, one is interested in properly coloring graphs with maximum degree~$\Delta$ using a palette of $f(\Delta)$ colors, where $f(\Delta)$ grows slowly with~$\Delta$. One typical example is $f(\Delta)=\Delta+1$ as all graphs of maximum degree $\Delta$ can be properly colored with $\Delta+1$ colors, but one is also interested in larger functions $f$, e.g., $f(\Delta)=\Theta(\Delta^2)$, whenever this choice enables to obtain faster algorithms.}. First, we show that Linial's technique from~\cite{Linial92} based on cover-free families of set systems can be used asynchronously, for the design of an $O(\Delta^2)$-coloring of graphs of maximum degree~$\Delta$, running in $O(\log^\star n)$ rounds in $n$-node graphs under \WFLOC. Then we show that the approach from~\cite{FraigniaudLR22} for 6-coloring cycles can be generalized to color arbitrary graphs. Specifically, we design an algorithm computing a $\frac{(\Delta+1)(\Delta+2)}{2}$-coloring in graphs of maximum degree~$\Delta$ running in $O(\log^\star n)+f(\Delta)$ rounds under \WFLOC, where the additional term $f(\Delta)$ depends on $\Delta$ only. This line of results culminates in the design of an algorithm enabling to save one color, i.e., that computes a $\big(\frac{(\Delta+1)(\Delta+2)}{2}-1\big)$-coloring, still running in $O(\log^\star n)+f(\Delta)$ for some function~$f$. Reducing the color palette by just one color may seem of little importance, but it is not, for two reasons. First, a palette of size $\frac{(\Delta+1)(\Delta+2)}{2}-1$ is the best that we are aware of for which it is possible to proper color all graphs of maximum degree~$\Delta$ in $O(\log^\star n)$ rounds in \WFLOC (ignoring the additional term depending on~$\Delta$ only). Saving one more color appears to be challenging. Second, in the case of cycles, i.e., $\Delta=2$, this allows us to fix a bug in the 5-coloring algorithm from~\cite{FraigniaudLR22}. Indeed, this latter algorithm is shown to be erroneous, as there are schedulings of the nodes that result in livelocks preventing the algorithm from terminating. Nevertheless, our algorithm shows that 5-coloring the $n$-node cycles in $O(\log^\star n)$ rounds under \WFLOC is indeed possible. 

\subsubsection{Lower Bounds and Impossibility Results}

Our second line of contribution is related to lower bounds on the size of the color palette enabling to proper color graphs asynchronously. It was observed in~\cite{FraigniaudLR22} that since the class of graphs with maximum degree~$\Delta$ includes the clique with $n=\Delta+1$ nodes, and since \emph{renaming}~\cite{AttiyaW04} in a set of less than $2N-1$ names cannot be done wait-free in $N$-process shared-memory systems whenever $N$ is a power of a prime, proper coloring graphs of maximum degree~$\Delta$ in \WFLOC cannot be achieved with a color palette smaller than $2\Delta+1$ colors, i.e., 5~colors in the case of cycles (independently from the number of rounds). However, the question of whether one can 4- or even 3-color long cycles (i.e., excluding the specific case of the clique~$C_3$) under \WFLOC was  left open in~\cite{FraigniaudLR22}. We show that this is impossible whenever $n$ is prime, that is, there are infinitely many values of $n$ for which 5-coloring the $n$-node cycle is the best that can be achieved in \WFLOC. 

\subsubsection{Reduction from Weak Symmetry-Breaking with Inputs}

We achieve our lower bound on the number of colors thanks to a result of independent interest in the standard framework of wait-free shared-memory computing. We show that there are no symmetric wait-free algorithms solving \emph{weak symmetry-breaking}~\cite{AttiyaW04} in $n$-process asynchronous shared-memory systems whenever $n$ is prime, \emph{even if processes are provided with inputs from a \goodInput\/ set of inputs}. We achieve this impossibility result by extending the proof in~\cite{AttiyaP16} for weak symmetry-breaking to the case in which processes have inputs that do not trivially break symmetry. Our impossibility result for weak symmetry-breaking with inputs has other consequences on the \WFLOC model, including the facts that weak 2-coloring is impossible in cycles of prime size, and that, for every even $\Delta\geq 2$, there is an infinite family of regular graphs for which $(\Delta+2)$-coloring cannot be solved in \WFLOC. 

Finally, using different techniques, we also show that even a weak variant of maximal independent set (MIS) cannot be solved in cycles with at least 7~nodes, and that, for every $\Delta\geq 2$,  $(\Delta+1)$-coloring trees of maximum degree~$\Delta$ is impossible under \WFLOC. 

\subsection{Related Work}

The combination of asynchrony \emph{and} failures in the general framework of distributed computing in networks has been studied a lot in the context of \emph{self-stabilization}. The latter deals with \emph{transient} failures susceptible to modify the content of some of the variables defining the states of the nodes. The role of a self-stabilizing algorithm is therefore to guarantee that if the network is in an illegal configuration (i.e., a configuration not satisfying some specific correctness condition), then it will automatically return to a legal configuration, and will remain in a legal configuration, unless some other failure(s) occur. Self-stabilizing graph coloring algorithms have been designed~\cite{BarenboimEG18,BernardDPT09,BlairM12,BlinFB19}. However, these algorithms provide solutions only for executions during which there are no failures. Instead, in \WFLOC, failures may occur at any time during the execution, and once a process crashes it never recovers. This has important consequences on what can or cannot be computed in \WFLOC. For instance, 3-coloring the $n$-node cycle is possible in a self-stabilizing manner for every $n\geq 3$, while we show that even 4-coloring  the $n$-node cycle is impossible for infinitely many~$n$ (namely, for all prime~$n$). 

It is also worth mentioning \cite{CastanedaDFRR19,Delporte-Gallet19}, which introduced the \textsf{DECOUPLED} model, where crash-prone processes occupy the nodes of a \emph{reliable} and \emph{synchronous} network. The \textsf{DECOUPLED} model is stronger than \WFLOC, and indeed it was shown that if there exists an algorithm solving a task in the \LOCAL model, then there exists an algorithm solving that task in the \textsf{DECOUPLED} model as well, with limited slowdown. Instead, we show that even a weak variant of MIS is impossible in large cycles under \WFLOC. 

Another field of research very much related to our work is the study of \emph{synchronous} networks with failures, whether it be crash or even malicious process failures, or message omission failures (see, e.g., \cite{CastanedaFPRRT23,Nowak0W19,SantoroW89,WinklerPG0023}). In these models, the focus has mostly been put on the study of tasks such as consensus and set-agreement. The  \WFLOC\/ model somehow mixes some of the key aspects of the models considered in these work, including the presence of crash failures, and the fact that the communications are mediated by a graph distinct from the complete graph. The same way standard wait-free computing in shared-memory systems can be viewed as one specific instance of the oblivious message adversary model, wait-free computing in the \WFLOC\/ model in a graph~$G$ may be viewed as the instance of the oblivious message adversary model in which messages can only be sent along the edges of the graph~$G$. We however focus on solving graphs problems such as coloring or independent set, motivated by the need to solve various symmetry breaking problems in networks, including frequency assignment and cluster decomposition. For such problems, it is more more convenient to use the framework of \WFLOC, in which the graph $G$ is part of the input, as in the \LOCAL\/ model. 

\section{Model and Definitions}

We first recall the \WFLOC model as introduced in~\cite{FraigniaudLR22}, and then provide an example for an algorithm in this model. 

\subsection{The ASYNC LOCAL model}\label{ssec:model}

Like the \LOCAL model~\cite{Peleg2000},
    the \WFLOC model assumes a set of $n \geq 1$ processes,
    each process occupying a distinct node
    of an $n$-vertex graph~$G = (V,E)$,
    which is supposed to be simple and connected.
Each process, i.e.,
    each node $v\in V$,
    has an identifier $\id{v}$
    that is supposed to be unique in the graph.
The identifiers are not necessarily between~1 and~$n$,
    but they are supposed to be stored on $O(\log n)$ bits.
That is,
    all node identifiers lie in the integer interval~$[1, N]$
    for some bound~$N = \operatorname{poly}(n)$.
Like in the asynchronous shared-memory model,
    every node~$v$ comes equipped
    with a single-writer/multiple-reader register~$R(v)$
    in which it can write values.
However, in contrast with the shared-memory model,
    \emph{only $v$'s neighbors in the graph~$G$}
    are able to read its register~$R(v)$,
    and symmetrically,
    node~$v$ can only read the registers~$R(w)$
    of nodes~${w \in N_G(v)= \{u \in V \mid \{u,v\} \in E\}}$.
We assume that each node can
    write in its register,
    and then read all its neighbors' registers,
    in a single atomic operation.
Neighboring nodes can perform this write\&{}read operation concurrently,
    in which case they both read the value concurrently written
    in the other node's register.
This communication primitive is thus akin
    to an \emph{immediate snapshot} object
    with read accesses mediated by a graph,
    in a similar manner to the \emph{atomic state} model
    in the context of self-stabilizing algorithms~\cite{BlinT13}.
Computation proceeds asynchronously,
    and each node may crash,
    in which case it stops functioning,
    and it never recovers.
Therefore,
    in the particular case of the clique~$G=K_n$,
    \WFLOC boils down to the standard
    asynchronous crash-prone shared-memory
    model with immediate snapshots~\cite{AttiyaW04}. 
The registers are of unbounded size.
Therefore,
    as in the \LOCAL model,
    and as in most wait-free computing models~\cite{Herlihy2013} as well,
    we can assume \emph{full-information protocols},
    in which every node writes its entire state in its register,
    and read the states of its neighbors in their registers.

\medbreak

\noindent\emph{Remark.}  Due to its nature, the \WFLOC\/ model may have also been named ``iterated immediate local snapshot''. Nevertheless, for its close connection to the standard \LOCAL\/ model used for the study of graph problems (e.g., coloring) in distributed computing, we preferred to stick to the terminology \WFLOC.    

\subparagraph{Input.}

In addition to its identifier $\id{v}$, every node~$v$ may be provided with some input, denoted by $\inp{v}$. The latter may be the number $n$ of nodes in the graph, or an upper bound~$N$ on~$n$, or any label $\ell(v)\in\{0,1\}^*$ whose semantic depends on the context (e.g., it may represent a boolean mark, or  a color, etc.). Note that the network $G$ is typically unknown to the nodes, even if some specific parameters may be provided to each node as input, such as the maximum degree $\Delta$ of~$G$. 

\subparagraph{Algorithm.}

An algorithm $\mathcal{A}$ for the \WFLOC model may be described by two functions:
\begin{itemize}
    \item $\stateinit$: used to initialize the state of each node, as a function of its input;
    \item $\statenext$: used to update the state of a node, as a function of its current state, and of the states of its neighbors.
\end{itemize}

\subparagraph{Scheduling.}

An execution of an algorithm  $\mathcal{A}$ depends on how the nodes are scheduled.
A \emph{scheduling} is a sequence $\mathcal{S} = S_1,S_2,\ldots$ of subsets $S_i \subseteq V$ of nodes. For every $i\geq 1$, the set $S_i$ denotes the set of nodes that are activated at \emph{step}~$i$. Each of these nodes performs an immediate-snapshot, and updates its state accordingly. For instance, the scheduling $\{u,v\},\{v\},\{v\},\{v\},\dots$  represents the execution in which nodes $u$ and $v$ run concurrently at the first step, and then $v$ runs solo, i.e., $v$ is the only node activated at every step $i\geq 2$. That is, $u$ has crashed after step~1, and all the nodes $w\notin\{u,v\}$ had crashed initially, none of them taking any step. Instead, the scheduling $V,V,V,\dots$ represents a synchronous execution in which no node crashes.   

\subparagraph{Full-Information Protocols.}
For every $v \in V$, let 
$
\stateold_{v,1} \gets \bot, \;\mbox{and}\; \statenew_{v,1} \gets \stateinit(\id{v},\inp{v}).
$
For every $i\geq 1$, the variable $\stateold_{v,i}$ represents what a neighbor of $v$ gets whenever reading the memory of $v$, and  $\statenew_{v,i}$ represents the updated state of $v$, which will become visible to its neighbors the next time $v$ is scheduled. 
More specifically, for every $i\geq 1$, if $v \notin S_i$, then 
$\stateold_{v,i+1} \gets \stateold_{v,i}$ and $\statenew_{v,i+1} \gets\statenew_{v,i}$.
Instead, if  $v \in S_i$, then $\stateold_{v,i+1} \gets \statenew_{v,i}$, and 
$
\statenew_{v,i+1} \gets \statenext(\stateold_{v,i+1},\{\stateold_{u,i+1} \mid u \in N_G(v)\}).
$
In other words, all nodes that are scheduled at step $i$ write their current state, then read the state of their neighbors, and then use the obtained knowledge in order to update their state. The new states resulting from these updates will become visible to their neighbors the next time that they are scheduled. That is, we model a setting in which writing and then reading the state of the neighbors is an atomic operation, but it may take some time to compute a new state.

\subparagraph{Termination.}

We let $\stateterm(x)$ be a special state denoting that a node terminates with output~$x$.
If a node $v$ satisfies $\statenew_{v,t} = \stateterm(x)$ at some step $t\geq 1$, then $v$ decides the output~$x$, and it is assumed that if $v$ is scheduled again in the future, then its state does not change, that is, $\statenew_{v,t+i} = \statenew_{v,t}$ for all $i \ge 1$.

\subparagraph{Round complexity.}

The runtime of a node $v$ is defined as 
\[
T_v = |\{i\geq 1 \mid v \in S_i \text{ and } \statenew_{v,i} \neq \stateterm(x) \text{ for any  possible output } x\}|.
\] 
That is, the runtime of a $v$ is equal to how many times $v$ is scheduled before it terminates. The runtime of an algorithm on a graph $G = (V,E)$ is then $\max\{T_v \mid v \in V\}$. The runtime of an algorithm in a graph class $\mathcal{G}$ is the maximum runtime of the algorithm, over all graphs $G\in\mathcal{G}$. The runtime of an algorithm may depend on the identifiers given to the nodes. However, as said before, we use the standard assumption that the identifiers are from the interval $[1,N]$ where $N=\mbox{poly}(n)$. The runtime is thus typically expressed as a function of $n$ (the order of the graph) and $\Delta$ (the maximum degree of the graph). The complexity of a problem is the minimum runtime (as a function of $n$ and $\Delta$) among all possible algorithms that solve the problem. The typical graph class we are interested in is~$\mathcal{G}_\Delta$, the class of all graphs with maximum degree~$\Delta$.

\medbreak

\noindent\emph{Remark.} In absence of failures, and if all nodes run synchronously, the runtime of an algorithm in the \WFLOC\/ model is identical to its runtime in the \LOCAL\/ model. 

\subsection{Algorithm Description}

While an algorithm can be formally described by providing the two functions $\stateinit$ and $\statenext$, we now describe an alternative, and possibly easier way of describing an algorithm. An example is provided in \Cref{alg:cyclesixcoloring} from~\cite{FraigniaudLR22}, which is aiming at solving $6$-coloring in cycles.
This algorithm uses the function $\imm(s)$, which allows to perform an immediate snapshot (i.e., a write of the current state~$s$ immediately followed by a snapshot of all the states of the neighbors), and uses the function \textsf{return}, which explicitly provides the output (instead of using $\stateterm(x)$).

\algblockdefx[forever]{Forever}{Endforever}%
[1][]{\textbf{repeat forever}}%
{\textbf{end repeat}}
\begin{algorithm}[tb]
\caption{An algorithm for 6-coloring cycles. Code of node $v$, with sole input $\id{v}$.} \label{alg:cyclesixcoloring}
\begin{algorithmic}
\Procedure{CycleSixColoring}{$\id{v}$}
  \State $x \gets \id{v}$; \; $a \gets 0$; \; $b \gets 0$; \Comment{$(x,a,b)$ is the state $s$ of $v$}
  \Forever
     \State $(s_1,s_2) \gets \imm(s)$ \Comment{$s_1$ and $s_2$ are the states of the two neighbors of $v$} 
     \If{$(a,b) \notin \{(s_1.a,s_1.b),(s_2.a,s_2.b)\}$}
         \Return $(a,b)$
     \Else \Comment{In the following: $s_i = \bot \Longrightarrow (s_i.x=\bot) \land (s_i.a=\bot) \land (s_i.b=\bot)$.}
        \State $a \gets \min \mathbb{N} \smallsetminus \{s_i.a \mid (i\in\{1,2\}) \land (s_i \neq \bot) \land  (s_i.x > x)\}$
        \State $b \gets \min \mathbb{N} \smallsetminus \{s_i.b \mid (i\in\{1,2\}) \land (s_i \neq \bot) \land (s_i.x < x)\}$
     \EndIf
  \Endforever
\EndProcedure
\end{algorithmic}
\end{algorithm}

In \Cref{alg:cyclesixcoloring}, the state $s$ of each (non terminated) node is a triplet $s=(x,a,b)$ of natural numbers. Given a state $s$, $s.x$, $s.a$, and $s.b$ respectively denote the first, second, and third element in~$s$. The state of a terminated node is a pair $(a,b)$ of natural numbers. One can check (see~\cite{FraigniaudLR22}) that the output pairs $(a,b)$ can take at most 6 different values.

The state $s$ of a node $v$ is updated by updating some of all of its components $x$, $a$, or~$b$. Actually, the entry $x=\id{v}$ does not change. The entry $a$ is updated to the smallest natural number excluding the $a$-values used by neighbors of larger identifiers, and $b$ is updated to the smallest natural number excluding the $b$-values used by the neighbors of smaller identifiers. These values are equal to~$\bot$ if they have not yet been written in the register (i.e., if a neighbor has not yet performed a single write). 
If a node $v$ notices that its current state $(x,a,b)$ is such that $(a,b)$ is different from the $(a,b)$-pairs of both neighbors, then $v$ terminates, and decides color $(a,b)$. An example of an execution of \Cref{alg:cyclesixcoloring} is provided in Appendix~\ref{app:example:alg:cyclesixcoloring}.

\section{Results and Road Map}

We have now all ingredients sufficient to formally state our results. 

\subsection{Algorithms for ASYNC LOCAL}

We first show (cf. \Cref{sec:linial}) that Linial's $O(\Delta^2)$-coloring algorithm can be adapted to work in the asynchronous wait-free setting.

\begin{restatable}{theorem}{linial}
\label{thm:linial}
For every $\Delta\geq 2$, the round-complexity of $O(\Delta^2)$-coloring graphs of maximum degree $\Delta$ in the \WFLOC model is $O(\log^* n)$.
\end{restatable}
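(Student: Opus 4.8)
The plan is to adapt Linial's classical synchronous $O(\Delta^2)$-coloring algorithm to the \WFLOC\/ model. Recall that Linial's algorithm works in two conceptual phases. First, the identifiers in $[1,N]$ with $N=\poly(n)$ constitute a proper coloring of the graph with $N$ colors (since identifiers are unique). Then, starting from this coloring, one iteratively reduces the number of colors using a cover-free family of sets: if each node currently holds a color in $[1,m]$, and $\mathcal{F}=\{F_1,\dots,F_m\}$ is a family of subsets of $[1,q]$ such that no set $F_i$ is covered by the union of any $\Delta$ other sets $F_{j_1},\dots,F_{j_\Delta}$, then each node recolors itself to some element of $F_i\setminus\bigcup_{k} F_{j_k}$, where $i$ is its own color and the $j_k$ are its neighbors' colors. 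This map sends $[1,m]$ to $[1,q]$ with $q=O(\Delta^2\log^2 m)$, and iterating $O(\log^\star n)$ times brings the palette down from $N$ to $O(\Delta^2)$.

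\medbreak

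The heart of the adaptation is showing each such color-reduction step is implementable with a single \textbf{write-and-read} (immediate snapshot) operation per node. This is where asynchrony and crashes must be handled carefully. My plan is to make every node maintain its full history of colors, so the algorithm is a full-information protocol: at round $r$, a node writes its current color $c^{(r)}_v$ together with the round index, performs a snapshot of its neighbors' registers, and then computes its next color $c^{(r+1)}_v$ by applying the cover-free recoloring rule to the neighbors' colors \emph{read at the same round index}. The key observation that makes this wait-free is that the recoloring rule for round $r+1$ depends only on round-$r$ colors, and a node that has not yet been scheduled (or has crashed) simply registers as $\bot$ in the snapshot. I would argue that $\bot$ can be treated as ``absent,'' i.e., excluded from the set of forbidden neighbor-colors: since the cover-free property guarantees that $F_i$ is not covered even by a full complement of $\Delta$ neighbors, treating fewer of them (ignoring those still reading $\bot$) only makes the available set $F_i\setminus\bigcup F_{j_k}$ larger, so a valid target color always exists.

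\medbreak

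The main obstacle — and the step I expect to require the most care — is \textbf{ensuring properness under asynchronous scheduling}, because two adjacent nodes may be at different round indices when one reads the other. The crucial invariant to establish is: whenever node $v$ is about to compute $c^{(r+1)}_v$, every neighbor $u$ that $v$ sees with a non-$\bot$ round-$r$ color has a color that $v$ correctly avoids. Because each node writes \emph{before} it reads in the atomic immediate-snapshot, and because colors at round $r$ are determined solely by round-$(r-1)$ data, I would show by induction on the round index that if two adjacent nodes both reach round $r$, the one scheduled later in real time sees the other's round-$r$ color and properly avoids it; and if a neighbor is lagging (still at a lower round or $\bot$), that neighbor's \emph{eventual} round-$r$ color will in turn be chosen to avoid $v$'s already-committed round-$r$ color, by symmetry of the rule applied to the faster node's written history. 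Formalizing this ``later writer defers to earlier writer'' argument, using the identifiers only to break the symmetry in the base $N$-coloring (not in later rounds, which rely purely on the colors), is the delicate part.

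\medbreak

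Finally, for the \textbf{round complexity}, since each of the $O(\log^\star n)$ color-reduction steps is a single immediate-snapshot operation, and termination is detected locally once a node observes that its current color differs from all non-$\bot$ neighbor colors and it has completed the prescribed number of reduction rounds, the runtime per node is $O(\log^\star n)$, matching the \LOCAL\/ bound of \cite{Linial92} and the claimed statement. I would close by noting that crashes never block termination: a crashed neighbor only ever contributes $\bot$ or a frozen terminal color, both of which the wait-free recoloring rule tolerates.
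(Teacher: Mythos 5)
Your proposal is correct and takes essentially the same route as the paper's proof: keep the full per-round color history in the register, spend one immediate snapshot per Linial iteration, ignore neighbors whose entry at the current round index is $\bot$, and prove properness by induction on the round index using the fact that a node's new color always lies inside $f_i$ of its old color, so whichever adjacent node reads the other's round-$i$ entry (the strictly-later one, or both in a concurrent step, since writes precede reads) excludes the entire set containing the other's round-$(i+1)$ color. One small correction: the termination rule should be solely ``terminate after the prescribed $T$ iterations'' --- the additional clause requiring the current color to differ from all non-$\bot$ neighbor colors compares colors across different levels, can fail forever against a crashed lagging neighbor whose stale color numerically coincides with your final one, and is not needed since your induction already gives properness among nodes that complete all $T$ iterations.
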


Then, we show (cf. \Cref{sec:savecolors}) that, at the cost of increasing the runtime by an additive factor depending on $\Delta$, it is possible to reduce the number of colors from $O(\Delta^2)$ to $(\Delta+1)(\Delta+2)/2$.

\begin{restatable}{theorem}{savecolors}
\label{thm:savecolors}
For every $\Delta\geq 2$, the round-complexity of $\frac12(\Delta+1)(\Delta+2)$-coloring graphs of maximum degree $\Delta$ in the \WFLOC model is $O(\log^* n)+f(\Delta)$, where $f$ is a function depending on $\Delta$ only.
\end{restatable}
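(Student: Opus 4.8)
The plan is to combine Linial's $O(\Delta^2)$-coloring (\Cref{thm:linial}) with a generalization of the doubly-greedy scheme underlying \Cref{alg:cyclesixcoloring}, replacing the identifiers used there by the Linial colors. First I would run the algorithm of \Cref{thm:linial} to equip every node~$v$ with a color $\phi(v)\in\{1,\dots,C\}$, $C=O(\Delta^2)$, in $O(\log^\star n)$ rounds; since $\phi$ is proper it induces an acyclic orientation of~$G$ (orient $\{u,v\}$ from the smaller to the larger $\phi$-value). Write $N^+(v)$ (resp.\ $N^-(v)$) for the neighbors of~$v$ with larger (resp.\ smaller) $\phi$-value, and $d^+(v),d^-(v)$ for their sizes, so that $d^+(v)+d^-(v)=\deg(v)\le\Delta$. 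Each node then runs the doubly-greedy procedure: it maintains a pair $(a_v,b_v)$, repeatedly performs an immediate snapshot, and sets
\[
a_v \gets \min \mathbb{N} \setminus \{a_u \mid u \in N^+(v)\},\qquad
b_v \gets \min \mathbb{N} \setminus \{b_u \mid u \in N^-(v)\},
\]
returning $(a_v,b_v)$ as soon as this pair differs from the pair of every neighbor that has written. (The two phases are composed in the standard wait-free manner: a node folds a neighbor into $N^+$ or $N^-$ only once that neighbor has published its final $\phi$-value; since all non-crashed nodes finish Linial within $O(\log^\star n)$ rounds, this contributes only the $O(\log^\star n)$ term plus a bounded number of extra recomputations.)

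The color bound is immediate. As $a_v$ is the smallest non-negative integer avoiding a set of size at most $d^+(v)$, we have $a_v\le d^+(v)$, and symmetrically $b_v\le d^-(v)$; hence $a_v+b_v\le\Delta$ at termination (in fact throughout). The number of pairs $(a,b)$ with $a,b\ge 0$ and $a+b\le\Delta$ is exactly $\binom{\Delta+2}{2}=\frac12(\Delta+1)(\Delta+2)$, which bounds the palette. For $\Delta=2$ this recovers the six colors of \Cref{alg:cyclesixcoloring}.

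For correctness (properness) I would argue by which of two adjacent terminated endpoints returns last. When a node returns, its final immediate snapshot has written its output pair and read the current pair of every neighbor, and its output differs from all of them. If adjacent $u$ and $v$ both terminate, the one returning later reads the frozen output pair of the earlier one and certifies that its own pair differs from it; if they return in the same step, immediate-snapshot consistency makes each read the other's output pair, and the same certification applies. Thus adjacent terminated nodes receive distinct pairs, while neighbors that are still running or have crashed impose no constraint.

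The delicate part is liveness together with the $f(\Delta)$ runtime bound; this is precisely where the earlier $5$-coloring attempt of~\cite{FraigniaudLR22} failed through livelocks, so ruling them out is the crux. The key observation is that right after a node~$v$ updates, its fresh pair already differs from every neighbor in one coordinate: by construction $a_v\notin\{a_u\mid u\in N^+(v)\}$ and $b_v\notin\{b_u\mid u\in N^-(v)\}$ with respect to the values just read. Consequently $v$ can fail to terminate at a later snapshot only if, since its previous update, some $u\in N^+(v)$ changed $a_u$ or some $u\in N^-(v)$ changed $b_u$; charging each non-terminating snapshot of~$v$ to such a change is injective, so it suffices to bound how often any node changes a coordinate. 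Following the acyclic orientation, an $a$-value changes only when the $a$-value of an $N^+$-neighbor changes, so if $X_w$ counts the $a$-changes of~$w$ then $X_w\le 1+\sum_{u\in N^+(w)}X_u$, with $X_w=0$ whenever $N^+(w)=\emptyset$. Since the orientation has height at most $C=O(\Delta^2)$ and degree at most~$\Delta$, this recursion gives $X_w\le\Delta^{O(\Delta^2)}$, a bound depending on~$\Delta$ only (symmetrically for~$b$). Summing the relevant change-counts over the at most~$\Delta$ neighbors of~$v$ yields $T_v=f(\Delta)$, which simultaneously forbids livelocks and delivers the $O(\log^\star n)+f(\Delta)$ runtime. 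I expect the main obstacle to be making the single-coordinate-difference lemma fully rigorous under arbitrary asynchronous interleavings — in particular verifying the injectivity of the charging across consecutive snapshots of~$v$ — and checking that composing with the Linial phase introduces only a bounded number of extra recomputations rather than unbounded ones.
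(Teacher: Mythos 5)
Your algorithm and the paper's are the same: the paper also composes the wait-free Linial phase (\Cref{thm:linial}) with exactly this doubly-greedy pair procedure (\Cref{alg:savecolors}), treating neighbors still inside the Linial phase as $\bot$, proves properness by the same three-way case analysis on which endpoint terminates last (using the immediate-snapshot semantics for the simultaneous case), and bounds the palette by counting pairs with $a+b\le\Delta$. Where you genuinely diverge is the liveness/runtime analysis, and your route is sound. The paper's key lemma (\Cref{lem:ub_no_b_change}) states that a node scheduled $\Delta+3$ times without its $b$-value changing must terminate --- proved by a delicate pigeonhole argument over which neighbors get scheduled twice --- and then it runs a layered induction over the sets $V_1,V_2,\dots,V_C$ obtained by repeatedly peeling local minima of the input coloring, with recursion $f(i,\Delta)=(\Delta+4)+(\Delta+4)\cdot\Delta\cdot f(i-1,\Delta)$. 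You instead charge each non-terminating snapshot of $v$ directly to a coordinate-change event of a neighbor in the orientation-respecting direction; this charging is indeed injective, since a failed check at $v$'s snapshot caused by $u\in N^+(v)$ forces $u$'s full pair to equal $v$'s, while $v$'s $a$-value was chosen to avoid the $a$-value of $u$ that $v$ last read, so $u$'s register must have changed its $a$-coordinate inside the interval since $v$'s previous snapshot, and these intervals are pairwise disjoint. You then bound the change-counts by a recursion along the acyclic orientation, inducting on its height, which is at most $C=O(\Delta^2)$. Both analyses exploit the same structure (induction along the color order) and both give $f(\Delta)=2^{O(\Delta^2\log\Delta)}$; yours avoids the $\Delta+3$-schedulings lemma entirely and is arguably more transparent about \emph{why} livelocks cannot occur.

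One small repair: the recursion $X_w\le 1+\sum_{u\in N^+(w)}X_u$ undercounts, because the minexcludant of $w$ also changes when an $N^+$-neighbor first becomes visible (takes its first SaveColors step after finishing the Linial phase), and staggered first appearances alone can cause up to $\Delta$ changes of $a_w$ even if no neighbor ever changes its $a$-value. The correct recursion is $X_w\le \Delta+\sum_{u\in N^+(w)}X_u$ (and, likewise, the non-terminating snapshots of $v$ are charged either to coordinate changes or to at most $\Delta$ first appearances). This leaves your bound $\Delta^{O(\Delta^2)}$, and hence the claimed $O(\log^\star n)+f(\Delta)$ round-complexity, intact.
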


Finally, we show (cf. \Cref{sec:onemore}) that we can exploit the fact that the coloring produced by \Cref{thm:savecolors} satisfies special properties for  reducing the size of the color palette by one color. 

\begin{restatable}{theorem}{generalization}
\label{thm:generalization}
For every $\Delta\geq 2$, the round-complexity of $(\frac12(\Delta+1)(\Delta+2) - 1)$-coloring graphs of maximum degree $\Delta$ in the \WFLOC model is $O(\log^* n)+f(\Delta)$, for some function~$f$ that only depends on $\Delta$.
\end{restatable}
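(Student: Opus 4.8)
The plan is to start from the coloring guaranteed by \Cref{thm:savecolors} and \emph{merge one color class into another}, exploiting the directional structure of that coloring. Recall that the $\frac12(\Delta+1)(\Delta+2)$-coloring assigns to each node $v$ a pair $(a_v,b_v)$ with $a_v,b_v\geq 0$ and $a_v+b_v\leq\Delta$, where $a_v$ avoids the $a$-values of the neighbors of larger identifier, and $b_v$ avoids the $b$-values of the neighbors of smaller identifier. The first step is to record the \emph{bidirectional-difference property}: for adjacent $u,v$ with $\id{u}<\id{v}$, the rule forces $a_u\neq a_v$ and $b_u\neq b_v$, so adjacent nodes differ in \emph{both} coordinates. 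I would phrase this as a lemma about the terminal output of the algorithm of \Cref{thm:savecolors}, valid under every scheduling.

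Next I would single out the corner color $(\Delta,0)$ and characterise the nodes receiving it. Since $a_v=\Delta$ is the minimum value excluded by the larger-identifier neighbors, a node colored $(\Delta,0)$ must have all $\Delta$ of its neighbors of larger identifier (it is a \emph{local minimum} in the identifier orientation), these neighbors realising the $a$-values $0,\dots,\Delta-1$; in particular $b_v=0$ is automatic. Two consequences follow. First, two $(\Delta,0)$-nodes are never adjacent, since a node cannot have smaller identifier than all its neighbors and simultaneously be a neighbor of another such node. Second, every neighbor $w$ of a $(\Delta,0)$-node $v$ satisfies $\id{w}>\id{v}$, hence $b_w\neq b_v=0$, so $w$ is \emph{not} colored $(0,0)$.

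The key step is then to define the output relabeling $\phi$ mapping $(\Delta,0)\mapsto(0,0)$ and fixing every other color, applied to each node's own terminal output. I would argue that $\phi$ preserves properness: the only merged classes are $(\Delta,0)$ and $(0,0)$, and the two observations above show that no $(\Delta,0)$-node is adjacent to another $(\Delta,0)$-node nor to a $(0,0)$-node, so the union class stays independent; the original $(0,0)$-class is already independent by properness. The image of $\phi$ omits $(\Delta,0)$, so the palette has exactly $\frac12(\Delta+1)(\Delta+2)-1$ colors. Crucially, $\phi$ is a function of $v$'s \emph{own} terminal state alone: it requires no additional reads or writes, hence adds no rounds and creates no livelock. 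The round-complexity is therefore inherited from \Cref{thm:savecolors}, namely $O(\log^\star n)+f(\Delta)$.

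The main obstacle is not the combinatorics but justifying that this relabeling is legitimate \emph{asynchronously}. I would stress that the bidirectional-difference property is a statement about the \emph{terminal} states produced under an arbitrary scheduling, not about intermediate snapshots, and that the underlying algorithm keeps running on the unrelabeled states $(a_v,b_v)$ — only the reported output is passed through $\phi$. This decoupling is what makes the saving safe: the correctness of \Cref{thm:savecolors} already guarantees $b_w\neq 0$ for every neighbor of a $(\Delta,0)$-node at termination, so no node ever needs to wait for, or re-read, a neighbor in order to commit to its recolored output. I would finally remark that only a \emph{single} corner can be removed this way: the symmetric corner $(0,\Delta)$ (local maxima) could equally be merged into $(0,0)$, but a $(\Delta,0)$-node and a $(0,\Delta)$-node may be adjacent, so merging both at once would break properness — consistent with the theorem saving exactly one color.
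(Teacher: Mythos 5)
Your characterization of $(\Delta,0)$-nodes is correct (such a node is a local minimum of full degree $\Delta$, and two of them are never adjacent), but the proof breaks at the ``bidirectional-difference property'', and with it the whole relabeling. The correctness argument for \Cref{thm:savecolors} only guarantees that two adjacent terminated nodes output distinct \emph{pairs}; it does not give coordinate-wise distinctness, and the latter is genuinely false under asynchrony. A neighbor $w$ of a $(\Delta,0)$-node $v$ need never have read $v$'s register at all: wait-freedom forces $w$ to be able to terminate while $v$ still appears crashed (i.e., reads as $\bot$), and in that case nothing prevents $b_w = 0$, or even $(a_w,b_w)=(0,0)$.

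Concretely, take the $6$-cycle $v,w,y,p,y',z$ (so $\Delta=2$) with input colors $x_v=1$, $x_w=2$, $x_y=5$, $x_p=1$, $x_{y'}=4$, $x_z=3$, and the schedule $\{w\},\{y'\},\{z\},\{z\},\{v\},\{v\}$, with $y$ and $p$ never taking a step. Nodes $w$ and $y'$ see only $\bot$ neighbors and terminate immediately with $(0,0)$; then $z$ reads $a_{y'}=0$, updates to $(1,0)$, and terminates at its next step; finally $v$ reads $a_w=0$ and $a_z=1$, sets $a_v=2=\Delta$, $b_v=0$, and terminates with $(\Delta,0)$. So $v$ outputs $(\Delta,0)$ while its neighbor $w$ outputs $(0,0)$: your map $\phi$ sends both to $(0,0)$, and the resulting coloring is not proper. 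This is exactly why the paper does not (and cannot) save the color by post-processing terminal outputs: it merges $(\Delta,0)$ with $(0,\Delta)$ instead, and has to \emph{modify the algorithm itself} --- flipping the order relation between adjacent local extrema and adding a special termination rule --- to ensure those two pairs never appear on adjacent nodes under any scheduling. Any repair of your approach would likewise have to change the protocol, since a purely local relabeling of terminal states cannot exclude collisions created by neighbors that terminate without ever seeing each other.
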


An important consequence of this result is the case $\Delta=2$. Theorem~\ref{thm:generalization} shows that there is an algorithm for $5$-coloring cycles. While such an algorithm was already claimed to exist in~\cite{FraigniaudLR22}, we show (cf.  \Cref{ssec:5cycle}) that the algorithm supporting that claim is erroneous. Specifically, we provide an instance in which the algorithm does not terminate. Theorem~\ref{thm:generalization} provides a novel algorithm, which allows us to establish the following result.

\begin{corollary}\label{cor:avoidingzebug}
The round-complexity of $5$-coloring cycles in the \WFLOC model is $O(\log^* n)$.
\end{corollary}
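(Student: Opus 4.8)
The plan is to derive \Cref{cor:avoidingzebug} as a direct corollary of \Cref{thm:generalization} by specializing to the case $\Delta = 2$, since cycles are exactly the connected $2$-regular graphs. Plugging $\Delta = 2$ into the color bound $\frac12(\Delta+1)(\Delta+2) - 1$ gives $\frac12 \cdot 3 \cdot 4 - 1 = 6 - 1 = 5$, so \Cref{thm:generalization} already yields a $5$-coloring algorithm for the class $\mathcal{G}_2$ of graphs of maximum degree~$2$. The runtime bound from \Cref{thm:generalization} is $O(\log^\star n) + f(\Delta)$ for some function $f$ depending on $\Delta$ only; since $\Delta = 2$ is a fixed constant here, the additive term $f(2)$ is itself a constant and can be absorbed into the $O(\log^\star n)$ term, leaving a clean $O(\log^\star n)$ round-complexity.

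First I would observe that the class of $n$-node cycles $C_n$ is contained in $\mathcal{G}_2$, so any algorithm correct on all of $\mathcal{G}_2$ is in particular correct on every $C_n$; thus the upper bound transfers immediately. For the matching lower bound, I would invoke the classical $\Omega(\log^\star n)$ lower bound of Linial~\cite{Linial92} for coloring cycles: this holds already in the synchronous failure-free \LOCAL model, and by the final remark in \Cref{ssec:model} the runtime of any \WFLOC algorithm in a synchronous failure-free execution coincides with its \LOCAL runtime. Hence no \WFLOC algorithm can $5$-color (or indeed $c$-color for any constant $c$) the $n$-node cycle faster than $\Omega(\log^\star n)$ rounds, which matches the upper bound and pins the complexity at $\Theta(\log^\star n)$; stating it as $O(\log^\star n)$ as in the corollary is the weaker one-sided claim and follows outright.

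There is essentially no obstacle here: the corollary is a specialization plus a constant-absorption argument plus the standard lower bound, and the substantive work has already been carried out in establishing \Cref{thm:generalization}. The only point requiring a sentence of care is justifying that $f(2)$ does not contaminate the asymptotics, which holds precisely because $\Delta$ is held constant, so that $O(\log^\star n) + f(2) = O(\log^\star n)$.

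\begin{proof}[Proof of \Cref{cor:avoidingzebug}]
Every $n$-node cycle $C_n$ is a connected $2$-regular graph, hence belongs to the class $\mathcal{G}_2$ of graphs of maximum degree $\Delta = 2$. Applying \Cref{thm:generalization} with $\Delta = 2$ yields an algorithm that properly colors every graph in $\mathcal{G}_2$ using $\frac12(\Delta+1)(\Delta+2) - 1 = \frac12 \cdot 3 \cdot 4 - 1 = 5$ colors, in particular $5$-coloring every $C_n$. Its round-complexity is $O(\log^\star n) + f(2)$ for the function $f$ of \Cref{thm:generalization}. Since $\Delta = 2$ is constant, $f(2)$ is a constant, and therefore $O(\log^\star n) + f(2) = O(\log^\star n)$. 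This establishes the claimed upper bound.
\end{proof}
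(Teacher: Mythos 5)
Your proposal is correct and matches the paper's own argument exactly: the paper derives \Cref{cor:avoidingzebug} by specializing \Cref{thm:generalization} to $\Delta=2$, which gives $\frac12\cdot 3\cdot 4 - 1 = 5$ colors, and absorbs the constant $f(2)$ into the $O(\log^\star n)$ bound. Your additional remarks on the matching $\Omega(\log^\star n)$ lower bound are fine but not needed, since the corollary only asserts the upper bound.
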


\subsection{Impossibility Results}

As pointed out in~\cite{FraigniaudLR22} several impossibility results for \WFLOC are mere consequences of the fact the this model coincides with the standard wait-free shared-memory model whenever the underlying graph $G$ is a clique~$K_n$. This is for instance the case of the impossibility of $4$-coloring~$C_3$ (by reduction from renaming), and the impossibility of constructing a maximal independent set, i.e., MIS (by reduction from strong symmetry breaking). Whether or not it is possible to $4$-color cycles $C_n$ for $n>3$ was left open in~\cite{FraigniaudLR22}. We show that, for infinitely many values of~$n$, the problem of $4$-coloring the $n$-node cycle $C_n$ is not solvable in \WFLOC. To establish this result, we prove a result of independent interest, in the framework of wait-free shared memory computing. Specifically, we extend the proof in~\cite{AttiyaP16} that weak symmetry breaking is impossible in the wait-free shared memory systems. We show that this problem remains impossible even if some input are provided to the processes, which may potentially help them to break symmetry. The set of possible inputs has to agree with some restrictions, called \goodInput\/ (with respect to a particular subset of processes). Roughly, the set of possible input assignments must not be divisible by the number~$n$ of processes whenever $n$ is prime, and it must be closed under permuting the identifiers of a particular subset of the processes by an order-invariant permutation. Also recall that  an algorithm is \emph{symmetric} if for every execution $\alpha$ on a subset~$P$ of processes, and for every permutation $\pi:[n]\to [n]$ order preserving on~$P$, we have that, for every $i\in P$, process $i$ outputs $x$ in $\alpha$ if and only if process $\pi(i)$ outputs $x$ on the execution $\pi(\alpha)$ resulting from permuting the scheduling of the processes in~$P$ according to~$\alpha$. Our impossibility results are shown in Section~\ref{sec:lb}.
 
\begin{restatable}{theorem}{wsb}
\label{thm:wsb}
Let $n$ be a prime number. There are no symmetric wait-free deterministic algorithms solving weak symmetry break in the asynchronous wait-free shared memory model with $n$ processes, even if the processes are provided with inputs from a \goodInput\/ set of inputs.
\end{restatable}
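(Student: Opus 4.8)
The plan is to prove the statement by contradiction, extending the counting-based impossibility argument of Attiya and Paz~\cite{AttiyaP16} from the input-free setting to inputs drawn from a \goodInput\ set. Suppose, toward a contradiction, that some symmetric wait-free deterministic algorithm $\mathcal{A}$ solves weak symmetry breaking on $n$ processes, with $n$ prime, using inputs from such a set~$I$. I would first isolate a structured family $\mathcal{E}$ of full-participation executions --- the kind of ``symmetric'' schedules underlying the Attiya--Paz proof --- on which a group $\mathbb{Z}/n\mathbb{Z}$ of order~$n$ acts, via the combinatorial symmetry of the schedules together with order-invariant relabelings of the identifiers of the distinguished subset of processes. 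The two hypotheses on~$I$ are exactly what make this action well defined: \emph{order-invariance} guarantees that relabeling a valid input assignment yields another valid input assignment, so the action maps executions in $\mathcal{E}$ to executions in $\mathcal{E}$; and the symmetry of~$\mathcal{A}$ then guarantees that this relabeling permutes the output vector produced along the execution accordingly.

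With the action in place, I would count, modulo~$n$, the executions in $\mathcal{E}$ ending in a designated monochromatic configuration (say, all processes output~$0$), by decomposing $\mathcal{E}$ into orbits of $\mathbb{Z}/n\mathbb{Z}$. Since $n$ is prime, every orbit has size $1$ or~$n$, so the orbits of size~$n$ contribute nothing modulo~$n$, and the count is congruent to the number of \emph{fixed} executions. A fixed execution is invariant under the relabeling, which, by symmetry and determinism of~$\mathcal{A}$, forces all $n$ processes into mutually symmetric local states and hence into identical outputs --- precisely a monochromatic final configuration, reached at full participation. But weak symmetry breaking forbids such a configuration, so there are no fixed executions, and the count is $\equiv 0 \pmod n$.

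The contradiction comes from evaluating the same count a second way and invoking \emph{non-prime-divisibility}. Because inputs decorate the family $\mathcal{E}$ essentially as an independent factor --- each schedule pattern may be paired with any admissible input assignment from~$I$ --- the total count factors (up to the schedule combinatorics inherited from Attiya--Paz, which is itself controlled modulo~$n$) through the number of admissible input assignments; the non-prime-divisibility hypothesis is exactly the arithmetic condition ensuring that this factor, and hence the whole count, is $\not\equiv 0 \pmod n$. This contradicts the orbit computation, completing the argument.

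I expect the main obstacle to be verifying that the counting congruence survives the introduction of inputs. In the input-free case one counts schedules alone; here one must check that (i) the $\mathbb{Z}/n\mathbb{Z}$-action is simultaneously compatible with the schedule structure \emph{and} with the input decorations --- this is the sole role of order-invariance, and it must be phrased relative to the correct distinguished subset of processes --- and (ii) that non-prime-divisibility is the precise condition preventing the input factor from annihilating the count modulo~$n$. A secondary subtlety is ensuring that the fixed executions are genuinely full-participation and therefore constrained by weak symmetry breaking, rather than living on a lower-dimensional boundary where no monochromatic constraint applies; this is what dictates how the family $\mathcal{E}$ must be chosen.
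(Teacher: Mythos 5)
There is a genuine gap: your argument counts the wrong object, and the mechanism that makes a contradiction possible is missing. If $\mathcal{A}$ solves weak symmetry breaking, then the number of full-participation executions of $\mathcal{A}$ ending with all processes outputting $0$ is \emph{literally zero} --- so your first evaluation (``$\equiv 0 \pmod n$ by orbit counting'') is vacuous, and no ``second evaluation'' of that same quantity can come out nonzero modulo $n$: a single number cannot be both $0$ and $\not\equiv 0 \pmod n$ unless the two evaluations secretly count different things. The paper resolves this by introducing the \emph{trimmed} algorithm $T(\mathcal{A})$, in which a process outputs $0$ the moment its first snapshot already contains all $n$ processes. $T(\mathcal{A})$ does \emph{not} solve WSB (the fully synchronous execution $\alpha_{all}$ is monochromatic), so its monochromatic executions can be counted: each equivalence class under permutations order-preserving on $SIM_\alpha$ \emph{and} on its complement has size $\binom{n}{m}$ with $m=|SIM_\alpha|$, which vanishes modulo the prime $n$ for $1\le m\le n-1$, leaving only the classes of $(\alpha_{all},\sigma)$, which contribute $\pm|\mathcal{I}|\not\equiv 0 \pmod n$ by non-prime-divisibility (order-invariance is what keeps each class inside $\mathcal{I}$). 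The bridge back to $\mathcal{A}$ is the \emph{signed} count of Lemma~\ref{lem:univalue}, with each execution weighted by $\prod_i(-1)^{|S_i|+1}$: this count is invariant under trimming, it is nonzero for $T(\mathcal{A})$ by the computation above, yet it must be zero for $\mathcal{A}$ since WSB makes both sets of univalued complete executions empty. Without the trimming construction and the sign-invariance lemma, your two evaluations have nothing to disagree about, and no contradiction can be extracted.

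A second, independent problem is your $\mathbb{Z}/n\mathbb{Z}$ action. The symmetry hypothesis on $\mathcal{A}$ only relates executions under permutations that are \emph{order-preserving} on the participating set; on a full-participation execution the only such permutation of $[n]$ is the identity, so a cyclic relabeling is not covered by the hypothesis, and your claim that ``symmetry of $\mathcal{A}$ guarantees the relabeling permutes the output vector'' fails exactly where you need it. In particular your fixed-point step is unsound: a schedule invariant under a cyclic shift does not force equal outputs, because a deterministic algorithm may break ties using the (non-cyclically-symmetric) identifiers, and the order-preserving symmetry assumption gives no control over a cyclic shift. The paper's counting never uses a cyclic action; the modular arithmetic enters only through the orbit sizes $\binom{n}{m}$ of the order-preserving equivalence, which is the structure the symmetric-algorithm definition actually supports.
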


\Cref{thm:wsb} has three important consequences. 

\begin{restatable}{corollary}{fourcol}
\label{thm:4col}
Let $n\geq 3$ be a prime number. The problem of $4$-coloring the $n$-node cycle cannot be solved deterministically in \WFLOC. 
\end{restatable}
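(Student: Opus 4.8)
**The plan is to reduce $4$-coloring $C_n$ to weak symmetry-breaking with inputs on a prime number of processes, and then invoke \Cref{thm:wsb}.**

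The plan is to argue by contradiction: suppose some deterministic algorithm $\mathcal{A}$ solves $4$-coloring of the $n$-node cycle $C_n$ in \WFLOC, where $n$ is prime. The key observation is that a proper $4$-coloring of a cycle induces an orientation, and hence a symmetry-breaking object, on each edge. First I would set up the reduction by running $\mathcal{A}$ on the cycle and then post-processing the four output colors into binary symmetry-breaking outputs. The natural idea is to partition the $4$ colors $\{0,1,2,3\}$ into two classes, say $\{0,1\}\mapsto 0$ and $\{2,3\}\mapsto 1$, but this alone need not guarantee that both binary values appear globally, so more care is needed. A cleaner route is to exploit that a proper coloring of $C_n$ with colors in $\{0,1,2,3\}$, read cyclically, must contain both an \emph{ascent} (a place where the color increases modulo the cycle traversal) and a \emph{descent}, because the colors cannot be monotone all the way around a cycle. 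This local order information is exactly what weak symmetry-breaking needs: each process can look at its two neighbors' colors and decide $0$ or $1$ based on whether it is a local minimum/maximum pattern, guaranteeing at least one process outputs $0$ and at least one outputs $1$.

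The second ingredient is ensuring the \WFLOC cycle instance becomes a genuine shared-memory instance with a \goodInput\ input set. The crucial structural fact, already exploited in the excerpt for cliques, is that \WFLOC on a graph $G$ restricted to read-access along $G$'s edges coincides with shared memory only on $K_n$; on the cycle $C_n$ the graph is \emph{not} complete, so one cannot directly say the two models coincide. I would therefore instead argue that a wait-free \WFLOC algorithm on $C_n$, when run under schedulings that respect the cyclic neighbor structure, yields a wait-free shared-memory algorithm among the $n$ processes whose only communication is mediated by the cycle --- and that the \emph{identifier placement} around the cycle furnishes the input. Concretely, the input to process $i$ is its position/identifier on the ring, and the set of all cyclic placements of $n$ distinct identifiers must be checked to form a \goodInput\ set: the count of placements should not be divisible by $n$ (using primality), and the set must be closed under order-invariant relabeling on the relevant subset of processes.

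\textbf{The main obstacle} will be the verification that the induced input set is genuinely \goodInput\ in the precise sense required by \Cref{thm:wsb}, namely \emph{non-prime-divisible} and \emph{order-invariant}. The number of distinct cyclic arrangements and the action of order-preserving permutations on them must be shown to satisfy the non-divisibility-by-$n$ condition; this is where primality of $n$ is essential, and a counting argument (likely a Burnside-type or orbit-counting observation showing the arrangements split into orbits whose sizes are controlled modulo $n$) will be needed. I expect the reduction's correctness --- that the post-processed binary outputs solve weak symmetry-breaking, in particular that both $0$ and $1$ necessarily appear --- to follow cleanly from the cyclic non-monotonicity of any proper coloring, but I would double-check the boundary behavior when some processes crash and only a sub-path of the cycle is active, since wait-freedom demands termination even then. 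Once the \goodInput\ property is established and the symmetry of the reduction confirmed, \Cref{thm:wsb} delivers the contradiction, completing the proof that $4$-coloring $C_n$ is impossible in \WFLOC for prime $n$.
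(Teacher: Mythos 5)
Your overall architecture is the same as the paper's: assume a $4$-coloring algorithm for $C_n$, simulate it in shared memory where each process is told its two cycle neighbors and restricts its reads to their registers, check that the set of cycle inputs is \goodInput, convert colors to binary outputs, and contradict \Cref{thm:wsb}. However, there is a genuine gap in the conversion step, and it is precisely where you deviated from the simple route. You dismiss the partition-into-pairs idea on the grounds that it ``need not guarantee that both binary values appear globally,'' but for a \emph{pair} partition this worry is unfounded: since $n\geq 3$ is prime, $n$ is odd, and if all nodes took colors from a single pair, the coloring would be a proper $2$-coloring of an odd cycle, which does not exist. This is exactly the paper's argument (it maps even colors to $0$ and odd colors to $1$), and its decisive feature is that each process computes its binary output from \emph{its own color alone}.

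Your replacement --- deciding $0$ or $1$ according to an ascent/descent (local min/max) pattern --- requires each process to know its \emph{neighbors'} final colors, and this breaks wait-freedom in a way that cannot be patched by a default value. A process cannot wait for its neighbors to finish coloring (they may never take another step), so it must sometimes decide ``blind''; but WSB constrains exactly those executions in which all processes do eventually output. Consider the sequential scheduling in which process $1$ runs solo until it outputs, then process $2$ runs solo until it outputs, and so on around the cycle: each process except possibly the last decides before one (or both) of its neighbors has written a final color, so each emits the default, and the adversary can arrange that all $n$ outputs coincide --- violating WSB. So the reduction as you describe it does not solve WSB, and the fix is to retreat to the purely local mapping you discarded. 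Two smaller remarks: the \goodInput{} verification needs no Burnside-type orbit counting --- the paper observes that there are exactly $(n-1)!$ cycle inputs, and a prime $n$ cannot divide a product of factors all smaller than $n$, with order-invariance immediate from conjugating the circular permutations; and this verification is carried out once, in the paper's ``Cycle Input'' example, so the corollary's proof merely cites it.
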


A weaker form of symmetry breaking is weak $2$-coloring~\cite{NaorS95}. It is required to 2-color the input graph such that every (non isolated) node has at least one neighbor colored with a different color.  

\begin{restatable}{corollary}{weaktwocol}
\label{thm:weaktwocol}
Let $n\geq 3$ be a prime number. The problem of weak $2$-coloring the $n$-node cycle cannot be solved deterministically in \WFLOC.
\end{restatable}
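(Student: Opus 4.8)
The plan is to derive Corollary~\ref{thm:weaktwocol} as a reduction from weak symmetry-breaking with inputs, invoking Theorem~\ref{thm:wsb}. The starting observation is that, on any cycle $C_n$ with $n\geq 3$, a valid weak $2$-coloring must use both colors: a monochromatic labeling leaves every (non-isolated) node without a differently-colored neighbor, violating the weak $2$-coloring condition. Hence, in every execution in which all $n$ nodes participate and terminate, the colors produced by a weak $2$-coloring algorithm already satisfy the output requirement of weak symmetry-breaking, namely that at least one process outputs~$0$ and at least one outputs~$1$. So a weak $2$-coloring of $C_n$ is, output-wise, a solution to WSB on the $n$ processes placed on the ring.

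The reduction itself turns a putative \WFLOC\/ algorithm $\mathcal{A}$ for weak $2$-coloring $C_n$ into a symmetric wait-free algorithm $\mathcal{B}$ solving WSB with inputs in the $n$-process shared-memory model (the clique $K_n$ under \WFLOC). First I would pass to an order-invariant (comparison-based) version of $\mathcal{A}$: since identifiers range over $[1,N]$ with $N=\poly(n)$, a standard Ramsey argument yields a large subset of identifiers on which $\mathcal{A}$ behaves in a comparison-based manner, so we may assume $\mathcal{A}$ depends on identifiers only through their relative order, and we instantiate $C_n$ using $n$ identifiers from this subset. I would then embed the ring into the clique using the identifiers: sorting the $n$ participating identifiers cyclically places each process on a unique position of $C_n$, so that the two neighbors of a process are exactly the processes holding the cyclically adjacent identifiers. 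This adjacency is supplied to the processes as their input. Because shared memory lets a process read every register, $\mathcal{B}$ can simulate $\mathcal{A}$ faithfully step for step: each process writes its simulated \WFLOC\/ state, takes an immediate snapshot of all registers, but uses only the two entries corresponding to its ring-neighbors, exactly mirroring a \imm\/ on $C_n$. Crashes and asynchronous scheduling transfer verbatim, since $\mathcal{A}$ never waits on any non-neighbor, so $\mathcal{B}$ inherits wait-freedom and termination from $\mathcal{A}$; finally, each process outputs, as its WSB bit, the color decided by the simulated run of $\mathcal{A}$.

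It remains to check that $\mathcal{B}$ meets the hypotheses of Theorem~\ref{thm:wsb}. Symmetry of $\mathcal{B}$ follows from the embedding being determined solely by the relative order of identifiers together with the order-invariance of $\mathcal{A}$: an order-preserving permutation of the identifiers of the participating set permutes the ring positions accordingly, and hence permutes outputs correspondingly. The set of admissible inputs is the family of cyclic adjacency assignments induced by identifier order; this family is order-invariant by construction, and I would verify that for prime $n$ it is non-prime-divisible, so that it is \goodInput. Combining the output analysis of the first paragraph (both bits appear whenever all processes terminate) with wait-free termination shows that $\mathcal{B}$ would be a symmetric wait-free solution to WSB on \goodInput\/ inputs, contradicting Theorem~\ref{thm:wsb} for prime $n$.

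The main obstacle I anticipate is not the simulation but the bookkeeping around the input set: one must certify that the identifier-induced ring embedding yields an input family satisfying \emph{both} the order-invariance and the non-prime-divisibility conditions demanded by \goodInput, and that the Ramsey reduction to a comparison-based $\mathcal{A}$ is compatible with this family. A secondary delicate point is matching the crash and termination semantics: WSB only constrains outputs when all processes terminate, so the argument must exploit precisely the fully-participating executions of $C_n$, in which the weak $2$-coloring guarantee forces both colors to appear.
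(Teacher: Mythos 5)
Your overall skeleton does match the paper's: simulate the \WFLOC\ algorithm in shared memory by letting each process read only the registers of its two designated ring-neighbors, observe that every weak $2$-coloring of a cycle on $n\geq 3$ nodes must use both colors (so complete executions produce both WSB outputs), and contradict Theorem~\ref{thm:wsb} for prime~$n$. The fatal problem sits exactly where you flagged ``bookkeeping'': your input family. You derive the ring from the \emph{sorted cyclic order of the identifiers} and supply that adjacency as the input. Since the shared-memory processes always have identifiers $1,\ldots,n$, this yields a single input function $\sigma(i)=(i-1,i+1)$ (indices mod~$n$), and the family it generates is \emph{not} order-invariant: Definition~\ref{def:goodinput} demands closure under conjugation $\sigma\mapsto\pi\circ\sigma\circ\pi$ for \emph{every} $\pi\in\Sigma_n$, and conjugating the sorted cycle by a non-trivial permutation produces a cyclic adjacency that is no longer of sorted form. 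So your family is not \goodInput, and Theorem~\ref{thm:wsb} simply does not apply to it. Nor is this a repairable verification detail: an input that ties ring positions to identifier order reveals to each process its position on the ring, which makes weak $2$-coloring and WSB solvable with no communication at all (output your position modulo~$2$, patched at the wrap-around) --- it plays the same role as the paper's example of a non-order-invariant input $\sigma(1)=\text{``leader''}$, $\sigma(i)=\text{``defeated''}$. The counting argument behind Theorem~\ref{thm:wsb} needs every equivalence class of pairs $(\alpha,\sigma)$ to stay inside~$\mathcal{I}$; with your family the classes escape it and the argument collapses. The paper instead takes $\mathcal{I}$ to be the family of \emph{all} $(n-1)!$ circular adjacency functions (the Cycle input of Section~\ref{ssec:wsb}): the ring is specified by the adversarially chosen input, independently of identifier order; that family is conjugation-closed, and $(n-1)!$ is not divisible by the prime~$n$ since it is a product of factors smaller than~$n$.

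The second gap is the Ramsey step. You want to extract, from the identifier space $[1,N]$ with $N=\poly(n)$, a set of $n$ identifiers on which $\mathcal{A}$ is comparison-based. Order-invariance arguments of Naor--Stockmeyer type require the identifier space to exceed a Ramsey number --- a tower-type function of $n$ and of the number of possible behaviors, which here range over unboundedly long asynchronous executions --- so a $\poly(n)$-sized identifier space is hopelessly too small, and this step cannot be carried out in this model. Note that the paper performs no such preprocessing: it simulates $\mathcal{A}$ directly, giving process $i$ the \WFLOC\ identifier~$i$, and the symmetry that Theorem~\ref{thm:wsb} needs is supplied by the conjugation of inputs built into the equivalence $(\alpha,\sigma)\sim(\pi(\alpha),\pi\circ\sigma\circ\pi)$ together with the closure of the Cycle-input family, not by any order-invariance assumption on $\mathcal{A}$ itself.
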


Finally, we prove that, for even values of $\Delta$, there are a infinitely many $\Delta$-regular graphs that cannot be $(\Delta+2)$-colored in \WFLOC. This extends the lower bound of $2\Delta+1$ colors, which applies only for the clique of $\Delta+1$ nodes with $\Delta+1$ power of a prime, to an infinite family of graphs with maximum degree~$\Delta$. 

\begin{restatable}{corollary}{deltaplusthreecolor}
\label{thm:deltaplusthreecolor}
Let $\Delta$ be an even number, and let $n> \Delta$ be a prime number. The problem of $(\Delta+2)$-coloring $n$-node $\Delta$-regular graphs cannot be solved deterministically in \WFLOC.
\end{restatable}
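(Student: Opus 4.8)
The plan is to reduce from \Cref{thm:wsb}: I will show that a deterministic \WFLOC algorithm $(\Delta+2)$-coloring $n$-node $\Delta$-regular graphs would yield a symmetric wait-free algorithm solving weak symmetry-breaking with inputs from a \goodInput\ set on $n$ processes, contradicting \Cref{thm:wsb}. Write $\Delta = 2k$ (recall $\Delta$ is even). For the hard instance I take $G = C_n^{k}$, the $k$-th power of the $n$-cycle, i.e. the circulant graph on $\mathbb{Z}_n$ in which $i \sim j$ iff their circular distance is at most $k$. Since $n > \Delta = 2k$, every vertex has exactly $2k = \Delta$ neighbors, so $G$ is $\Delta$-regular; letting $n$ range over the primes larger than $\Delta$ yields the claimed infinite family. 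Note that $G$ is vertex-transitive with $\mathbb{Z}_n$ acting by rotation, which is what lets the construction match the symmetric, order-invariant input model of \Cref{thm:wsb}.

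The combinatorial heart of the argument is the bound $\chi(G) \ge k+2 = \tfrac{\Delta}{2}+2$. Indeed, any $k+1$ consecutive vertices of $G$ are pairwise adjacent, so $\omega(G) = k+1$, and in any $(k+1)$-coloring the colors of $v_0,\dots,v_k$ are forced to be a permutation of the palette; then $v_{k+1}$, being adjacent to $v_1,\dots,v_k$, must reuse the color of $v_0$, and inductively the coloring is periodic of period $k+1$. Such a periodic coloring closes up around the cycle of length $n$ iff $(k+1)\mid n$. As $n$ is prime and $2 \le k+1 \le 2k < n$, we have $(k+1)\nmid n$, so $G$ is not $(k+1)$-colorable and $\chi(G) \ge k+2$. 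This non-divisibility is exactly the ``non-prime-divisible'' phenomenon exploited in \Cref{thm:wsb}. Since the degree bound gives $\chi(G) \le \Delta+1$, proper $(\Delta+2)$-colorings do exist, so the statement is about wait-free computability rather than about colorability.

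Now suppose, for contradiction, that $\mathcal{A}$ is a \WFLOC algorithm computing a proper $(\Delta+2)$-coloring of every $\Delta$-regular $n$-node graph, and run it on $G$. I build a shared-memory algorithm $\mathcal{B}$ on $n$ processes in which each process, given its identifier as input, simulates the corresponding node of $G$ under $\mathcal{A}$, reading only the registers of its $G$-neighbors (permitted, since shared memory subsumes the graph-restricted reads). Wait-freedom of $\mathcal{A}$ guarantees that each process eventually decides a color $c_i \in \{1,\dots,\Delta+2\}$; process $i$ then outputs the bit $b_i = c_i \bmod 2$ and halts. Crucially, $b_i$ depends only on $c_i$, so no process ever waits on a neighbor and wait-freedom is preserved. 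In any execution in which all processes participate and terminate, $\mathcal{A}$ produces a proper coloring of $G$; if all the $b_i$ were equal then all colors would share a parity, hence lie in a set of only $\tfrac{\Delta}{2}+1 = k+1$ colors, making $G$ $(k+1)$-colorable and contradicting $\chi(G)\ge k+2$. Thus both bit-values occur, and $\mathcal{B}$ solves weak symmetry-breaking. Since all nodes of $\mathcal{A}$ run the same uniform code depending only on their input, $\mathcal{B}$ is symmetric, and the admissible identifier assignments form a \goodInput\ set of inputs, so \Cref{thm:wsb} gives the desired contradiction.

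The clean parts are the chromatic bound and the parity-based map from colors to bits; the delicate step, which I expect to be the main obstacle, is to instantiate the reduction precisely within the formal framework of \Cref{thm:wsb}. Concretely, one must encode the cyclic adjacency of $G = C_n^k$ through the identifier inputs in a way that is genuinely order-invariant, verify that the resulting family of input assignments is non-prime-divisible with respect to the distinguished subset of processes required by \Cref{thm:wsb}, and check that $\mathcal{B}$ meets the exact notion of symmetry used there. This is the same bookkeeping that underlies the $\Delta = 2$ case (\Cref{thm:4col}), which the present statement generalizes, so I would follow that template, substituting $C_n^{\Delta/2}$ for $C_n$ and the balanced odd/even color split for the parity argument on four colors.
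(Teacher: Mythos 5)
Your proposal matches the paper's proof essentially step for step: the same circulant graph $C_n^{\Delta/2}$ on $\mathbb{Z}_n$, the same chromatic lower bound $\chi(G)\ge \frac{\Delta}{2}+2$ via periodicity modulo $\frac{\Delta}{2}+1$ and primality of $n$, and the same reduction to \Cref{thm:wsb} by splitting the $\Delta+2$ colors into two classes of size $\frac{\Delta}{2}+1$ (you use parity where the paper uses the first/second half of the palette, which is equivalent). The input-encoding step you defer is handled in the paper exactly as you sketch: the adjacency is given by the powers $(\sigma_1)^j$, $j\in\{-\frac{\Delta}{2},\dots,\frac{\Delta}{2}\}$, of a single circular permutation $\sigma_1$, so the input set has $(n-1)!$ elements and is \goodInput.
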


We complete \Cref{sec:lb} with some additional results. The version of MIS considered in~\cite{FraigniaudLR22}, which was proved impossible to solve, asks the nodes to output a set of vertices which forms an MIS in the graph induced by the \emph{correct} nodes. Instead, we consider a weaker variant of MIS, asking the nodes to output a set of vertices which forms an MIS in the graph whenever all processes are correct, i.e., no crashes occurred. We show that even this weaker variant of MIS is impossible in \WFLOC. 

\begin{restatable}{theorem}{weakmis}
\label{thm:weakmis}
For every $n\geq 7$, no deterministic algorithms can solve weak MIS in the $n$-node cycle under \WFLOC.
\end{restatable}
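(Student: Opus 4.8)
The plan is to combine a simple reduction that disposes of the prime case with a direct scheduling argument for the remaining (composite) values of $n$. First I would record the observation that a weak MIS is, in particular, a weak $2$-coloring: in any MIS every node in the set has only out-of-set neighbors, and every node out of the set has, by maximality, at least one in-set neighbor; hence each non-isolated node has a neighbor of the opposite label. Consequently, a wait-free weak-MIS algorithm on $C_n$ immediately yields a wait-free weak-$2$-coloring algorithm on $C_n$, and \Cref{thm:weaktwocol} rules this out whenever $n$ is prime. This already settles infinitely many values of $n$, but not all $n \ge 7$ (e.g.\ powers of two), so the core of the proof is a direct impossibility argument that does not route through symmetry breaking.

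For the direct argument I would exploit the defining asymmetry of weak MIS in \WFLOC: correctness is only required in \emph{failure-free} executions, yet every node must terminate \emph{wait-free}, i.e.\ even when a neighbor's register is read as $\bot$ forever. The basic tool is the extension principle: any reachable finite schedule can be completed to a failure-free execution by eventually scheduling every node to completion, so the outputs of the nodes that have already terminated must be extendable to a valid MIS of $C_n$. Recall that a valid MIS of a cycle is exactly a labeling by $\{0,1\}$ with no two adjacent $1$'s (independence) and no three consecutive $0$'s (maximality). I would first observe that the ``easy'' violation, two adjacent $1$'s, is \emph{not} reachable by the obvious means: two adjacent nodes can never both read $\bot$ for each other (whoever writes first is seen by the other at its next snapshot), so one cannot force an adjacent pair to commit in mutual ignorance. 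Hence the target violation must be a \emph{triple of consecutive $0$'s}, a maximality failure.

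The heart of the argument is then an indistinguishability construction that pins one node while flipping a remote neighbor. Using a one-sided ``sweep'' schedule — delay a designated node $r$, and schedule the others so that each decides having read only its already-terminated predecessor and $\bot$ on the side of the still-sleeping successor — I would drive the algorithm into a reachable configuration in which a chosen node $v$ has terminated with the view ``left neighbor $\ell$ labeled $0$, right neighbor $r$ read as $\bot$''. Because $v$'s terminating view contains $\bot$ on the $r$-side, $v$'s output is fixed, independently of what $r$ eventually does. I would then complete the execution along the far arc (the path of the $n-2 \ge 5$ nodes joining $r$ to $\ell$ around the cycle) in two different failure-free ways: one whose valid MIS forces $r=1$, and one whose valid MIS forces $r=0$, while in both the decision-time view of $v$ and the label of $\ell$ are unchanged. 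If $r=1$, independence forces $v=0$; if $r=0$, then since $\ell=0$, maximality at $v$ forces $v=1$. As $v$'s output is the same in both completions, one of these two failure-free executions outputs a non-MIS, a contradiction.

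The main obstacle — and where the hypothesis $n \ge 7$ is used — is to \emph{realize both completions}. It is not enough that two valid MIS patterns of the far arc exist that differ at $r$; the wait-free algorithm, not the adversary, chooses the labels, so I must show that the scheduler can actually drive the algorithm to output $r=1$ in one completion and $r=0$ in the other, all while freezing $v$'s decision-time view and the value of $\ell$. The plan is to secure this ``flippability of $r$'' by applying the same one-sided reveal technique from the far end, together with a parity/length count showing that, once the far arc is long enough — which is exactly what $n \ge 7$ guarantees, so that $v$, $\ell$, $r$, and the flipping gadget fit with the required pairwise distances and the two completions do not wrap around to re-constrain $v$'s view — some scheduling of that arc yields each of the two required endpoint labels. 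Defeating adaptive, identifier-based greedy strategies (which attempt to build the MIS deterministically and thereby pin $r$) is the delicate part: the argument must use that $v$ commits before $r$ is ever touched, so that the algorithm's choice at $r$ cannot depend on $v$, leaving enough freedom on the far arc to force either label. Making this last step rigorous — ruling out that the algorithm's behaviour on the far arc is locked to a single value of $r$ regardless of the schedule — is the crux of the proof.
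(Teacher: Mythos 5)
Your opening reduction (a weak MIS is in particular a weak $2$-coloring, so \Cref{thm:weaktwocol} kills all prime $n$) is correct, but as you yourself note it covers only prime $n$, whereas the theorem claims every $n\geq 7$. The direct argument you propose for the remaining values has a genuine gap, and it is precisely the step you flag as the crux: you need the scheduler to drive the algorithm to output $r=1$ in one failure-free completion of the far arc and $r=0$ in another, while $v$'s decision-time view and $\ell$'s label stay frozen. But the adversary controls only the schedule, never the outputs. A deterministic algorithm may perfectly well be ``locked'': for every schedule of the far arc, node $r$ outputs the same label --- e.g., the algorithm could make $r$'s label a function of identifier comparisons along the arc, entirely insensitive to scheduling. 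No parity or length count can rule this out, because the obstruction is not combinatorial (two valid MIS patterns of the arc differing at $r$ certainly exist) but behavioral (whether the algorithm can be steered to realize both). So the proposal, as it stands, is not a proof.

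The missing idea --- the one the paper uses --- is to apply the symmetry-breaking impossibility not to the whole cycle but to a single adjacent pair run in isolation. If only two adjacent nodes $u,v$ are ever scheduled, wait-freedom forces both to terminate; they cannot both output $1$ (complete the execution to a failure-free one and independence fails); and if in every such execution exactly one of them output $0$ and the other $1$, the pair would constitute an algorithm for $2$-process weak symmetry breaking, contradicting \Cref{thm:wsb}. Hence \emph{there exists} a schedule under which both output $0$: the existence of the needed output pattern is extracted from an impossibility theorem rather than forced by the adversary, which is exactly the lever your approach lacks. The paper then runs this on two far-apart adjacent pairs $(u_1,u_2)$ and $(u_5,u_6)$, obtaining executions in which all four output $0$, and finally schedules the middle pair $(u_3,u_4)$: these two cannot both output $1$, so one of them outputs $0$, producing three consecutive $0$'s next to one of the outer pairs --- a maximality violation in the failure-free completion. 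The hypothesis $n\geq 7$ is only what is needed for the three pairs plus the separator nodes $u_0,u_7$ to fit around the cycle without unwanted adjacencies.
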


Finally,  we show impossibility results for coloring general graphs (cf \Cref{ssec:lbcol}).

\begin{restatable}{theorem}{lbcol}
\label{thm:lbcol}
For every $\Delta\geq 2$, no deterministic algorithms can solve $(\Delta+1)$-coloring in trees of maximum degree~$\Delta$ under \WFLOC.
\end{restatable}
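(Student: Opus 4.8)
The plan is to derive the impossibility by reduction from weak symmetry breaking with inputs, that is, from \Cref{thm:wsb}. I would fix a prime $n > \Delta$ and build a specific tree $T$ of maximum degree~$\Delta$ on $n$ vertices: for $\Delta = 2$ this is simply a path $P_n$, and for larger~$\Delta$ a balanced rooted tree of maximum degree~$\Delta$ will do. The crucial design choice is to lay out $T$ so that each vertex's \emph{structural role} — its depth, and whether it is the root, an internal node, or a leaf — is a function of the \emph{relative order} of the identifiers only. Assuming a deterministic $(\Delta+1)$-coloring algorithm $A$ for such trees in \WFLOC, I would then turn it into an algorithm $B$ for $n$ processes in the shared-memory (clique) model, where each process receives as its weak-symmetry-breaking input a label encoding its structural role in~$T$; I would argue that the resulting set of input assignments is \goodInput, so that \Cref{thm:wsb} applies to~$B$.

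The reduction itself runs $A$ inside $K_n$ while restricting each process to read only the registers of its $T$-neighbours, which it identifies from the order-based labelling (learned by reading the shared memory). Since reading fewer registers is always permissible, every admissible \WFLOC execution on~$T$ — including every crash pattern — is reproduced by some shared-memory schedule, so $A$ still outputs a proper $(\Delta+1)$-coloring of the correct part of~$T$. From the resulting colors I would extract one weak-symmetry-breaking bit per process. Because $A$ is deterministic and the role inputs are order-invariant, $B$ is symmetric and wait-free, so if $B$ solved weak symmetry breaking we would contradict \Cref{thm:wsb}.

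The crux, and the step I expect to be the main obstacle, is to define the color-to-bit extraction so that \emph{both} output values are guaranteed to appear in every execution, since \Cref{thm:wsb} only yields a contradiction once $B$ is genuinely correct. A fixed global map from the $\Delta+1$ colors to $\{0,1\}$ cannot work, because $T$ is bipartite and an adversarial coloring could confine itself to one pre-image of such a map; likewise naive local rules (e.g.\ ``output~$1$ iff locally maximal'') can collapse to a constant on short monotone color sequences. The extraction must therefore be local \emph{and} must exploit $A$'s inability to properly $2$-color wait-free, in the same spirit as \Cref{thm:weaktwocol}: the idea is to tie the bit to an identifier-induced orientation of $T$ together with comparisons of a node's color to its oriented neighbours, and to prove that along the monitored root-to-leaf paths the two bits cannot both be suppressed. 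Making this non-suppression property precise, and separately confirming that the induced input multiset is truly \goodInput{} (in particular not divisible by the prime~$n$), are the two places where the argument has to be carried out with care; the remaining simulation bookkeeping — faithfully realizing immediate-snapshot steps and crash patterns of~$T$ within the clique — is standard given the \WFLOC definition.
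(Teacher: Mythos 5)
Your plan is blocked by a provable obstruction at exactly the step you flagged as merely ``needing care'': the set of tree-structure inputs can \emph{never} be \goodInput{}. Any family of input functions encoding a tree that is closed under the permutation condition of \Cref{def:goodinput} is a union of orbits of the $\Sigma_n$-action, and the orbit of a labeling of a fixed tree $T$ has size $n!/|\mathrm{Aut}(T)|$. For a prime $n\geq 3$, $\mathrm{Aut}(T)$ contains no element of order $n$: such an element would have to be a single $n$-cycle on the $n$ vertices, hence fixed-point-free and without $2$-cycles, contradicting the fact that every tree automorphism fixes the center vertex or preserves the center edge. By Cauchy's theorem $n\nmid|\mathrm{Aut}(T)|$, and since the prime $n$ divides $n!$ exactly once, $n$ divides $n!/|\mathrm{Aut}(T)|$; hence $n$ divides the cardinality of every permutation-closed family of tree inputs, so condition~(1) of \Cref{def:goodinput} fails and \Cref{thm:wsb} can never be invoked. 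This is no accident of counting: weak symmetry breaking with tree inputs is \emph{trivially} solvable by a symmetric algorithm with no communication at all --- a process whose input says it is a leaf outputs~$1$, an internal process outputs~$0$, and every tree on $n\geq 3$ vertices has both. So no reduction from tree coloring to WSB-with-tree-inputs can produce a contradiction. The cycle reduction of \Cref{thm:4col} works precisely because the cycle is vertex-transitive: the rotation is an automorphism of order~$n$, which is what makes the number of cycle inputs $(n-1)!\not\equiv 0 \pmod n$ and leaves the processes with no input-based way to break symmetry; trees have no such symmetry. Your fallback of hard-wiring the tree into the identifier \emph{order} instead of the inputs does not escape this either, since the paper's symmetry notion quantifies over order-preserving permutations on participating \emph{subsets}, under which global ranks are not preserved, so the simulating algorithm would not be symmetric. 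A second, independent gap is the bit extraction: any rule comparing a node's color to its neighbours' final colors is not wait-free implementable (a node cannot wait for a possibly-crashed neighbour to decide), while rules depending only on a node's own color fail on bipartite targets, as you yourself observe.

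The paper's proof takes a completely different route that never touches $n$-process shared memory. It proves by induction on $d$ that there is a tree $T_d$ of maximum degree $d$ together with an adversarial scheduling under which $d+1$ nodes output pairwise distinct colors: $T_{d+1}$ is obtained by gluing $d+2$ suitably re-identified copies of $T_d$ (a pigeonhole argument over disjoint identifier ranges supplies copies with identical behavior) to a fresh node that is scheduled only after all copies have terminated. Finally, attaching $\Delta-1$ copies of $T_{\Delta-1}$ to each endpoint of an edge $\{u,v\}$ and running those copies to completion forces $u$ and $v$ to output two distinct colors from the two-element set $\{\Delta,\Delta+1\}$, i.e., to solve weak symmetry breaking between two processes, which is impossible. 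If you want to repair your write-up, this adversarial-scheduling induction --- not a reduction through \Cref{thm:wsb} --- is the mechanism to aim for.
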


We conclude, in \Cref{sec:open}, with some open questions.

\section{Coloring General Graphs with $O(\Delta^2)$ Colors}\label{sec:linial}

In this section, we provide a simple algorithm for coloring a graph with $O(\Delta^2)$ colors. This algorithm is an adaptation of Linial's coloring algorithm~\cite{Linial92} (which is designed to work in the \LOCAL model) to the asynchronous setting. More in detail, we prove the following result.

\linial*

In order to prove this result, we start by summarizing Linial's coloring algorithm, and then we show how to adapt it to the wait-free setting. We start by recalling the notion of set systems and of cover-free family of sets.

\begin{definition}
    A \emph{set system} is a pair $(X,\mathcal{F})$, where $X$ is a set, and $\mathcal{F}$ is a collection of subsets of $X$.
A set system $(X,\mathcal{F})$ is a \emph{$k$-cover-free} family if, for every choice of $k+1$ distinct sets $S_0, S_1, \ldots, S_k$ in $\mathcal{F}$, the following holds:  
$
S_0 \smallsetminus  \bigcup_{i=1}^{k} S_i\neq \varnothing.
$
\end{definition}

To provide an intuition about how to use these two definitions, let us assume that the nodes of the input graph $G$ are properly $c$-colored, and let us assume that  there exists a $\Delta$-cover-free family $(X,\mathcal{F})$ satisfying $c \le |\mathcal{F}|$. It follows from these assumptions that there exists a one-to-one function~$f$ from the set of colors to $\mathcal{F}$. W.l.o.g., assume that $X$ contains the numbers in $\{1,\ldots,|X|\}$. One step of Linial's algorithm is able to recolor the nodes with $c' = |X|$ colors, as follows.
\begin{enumerate}
    \item Every node $v$ communicates with its $d$ neighbors to get their current colors $c_1,\ldots,c_d$, where $d \le \Delta$ is the degree of $v$. 
    \item Every node $v$ computes $X_v = f(c_v) \smallsetminus \bigcup_{i=1}^{d} f(c_i)$, where $c_v$ is the color of~$v$, and then recolors itself with the minimum value in $X_v$. 
\end{enumerate}
Note that $X_v$ is guaranteed to be non-empty by the fact that $(X,\mathcal{F})$ is a $\Delta$-cover-free family, and that the obtained color $c'_v$ satisfies $1 \le c'_v \le c'$. Linial's coloring algorithm repeats this process multiple times, each time using a different cover-free family.
The runtime and the resulting number of colors depend on the choice of cover-free families. We summarize the cover-free families used by Linial's algorithm in the following two lemma.

\begin{lemma}[\cite{Linial92}]\label{lem:cf}
    {\rm\textbf{(a)}}~For any $c > \Delta$, there exists a $\Delta$-cover-free family $(X,\mathcal{F})$ with $c \le |\mathcal{F}|$, and $|X| \le 5 \lceil \Delta^2 \log c \rceil$. 
     {\rm\textbf{(b)}}~There exists a $\Delta$-cover-free family $(X,\mathcal{F})$ with $10\Delta^3 \le |\mathcal{F}|$, and $|X| \le (4\Delta+1)^2$.
\end{lemma}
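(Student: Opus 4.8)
The plan is to prove the two parts by two different classical techniques: part~(a) by the probabilistic method, and part~(b) by an explicit algebraic (Reed--Solomon-type) construction over a finite field. In both cases the $\Delta$-cover-free property is reduced to controlling how much $\Delta$ sets can overlap a fixed set, so the combinatorial core is a simple overlap count.

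For part~(a), I would take the ground set $X = \{1,\dots,m\}$ with $m = 5\lceil \Delta^2 \log c\rceil$ and build a family $\mathcal{F}$ of exactly $c$ random subsets, each element of $X$ being placed in each set independently with probability $p$. For a fixed ordered tuple $(S_0;S_1,\dots,S_\Delta)$ of distinct sets, a single element $x$ lies in $S_0\smallsetminus\bigcup_{i\ge1}S_i$ with probability $p(1-p)^\Delta$, so the \emph{bad} event $S_0\smallsetminus\bigcup_{i\ge1}S_i=\varnothing$ has probability $(1-p(1-p)^\Delta)^m \le \exp(-m\,p(1-p)^\Delta)$. Choosing $p=1/(\Delta+1)$ maximizes $p(1-p)^\Delta$ and gives $p(1-p)^\Delta \ge \tfrac{1}{e(\Delta+1)}$. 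Since there are at most $c^{\Delta+1}$ such tuples, a union bound shows the expected number of bad tuples drops below $1$ as soon as $m > e(\Delta+1)^2\ln c$, so a family with the cover-free property exists; tracking the constants (with the ceiling absorbing small-$\Delta$ rounding) then yields the stated bound $|X|\le 5\lceil\Delta^2\log c\rceil$ while $|\mathcal{F}|=c$.

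For part~(b), I would use graphs of low-degree polynomials over a finite field. Let $q$ be a prime with $2\Delta<q\le 4\Delta+1$ (its existence is discussed below), set $X=\mathbb{F}_q\times\mathbb{F}_q$ so that $|X|=q^2\le(4\Delta+1)^2$, and for each polynomial $g$ over $\mathbb{F}_q$ of degree at most $2$ put $S_g=\{(a,g(a)) : a\in\mathbb{F}_q\}$, giving $|\mathcal{F}|=q^3$ sets, each of size $q$. Two distinct degree-$\le 2$ polynomials agree on at most $2$ points, so $|S_g\cap S_{g'}|\le 2$; hence any $\Delta$ of the sets cover at most $2\Delta<q=|S_{g_0}|$ points of a fixed $S_{g_0}$, leaving a point of $S_{g_0}$ uncovered. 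This is exactly the $\Delta$-cover-free property.

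The overlap counts are routine; the one place needing care is matching the exact constants in the statement. For~(a) this is just bookkeeping in the union bound. For~(b) the delicate point is that Bertrand's postulate only guarantees a prime in $(2\Delta,4\Delta]$, whose cube can be as small as $\approx 8\Delta^3$, short of the required $10\Delta^3$. To guarantee $|\mathcal{F}|=q^3\ge 10\Delta^3$ I would instead choose $q\ge 10^{1/3}\Delta$ (which already forces $q>2\Delta$) and invoke a sharper prime-gap bound, e.g.\ Nagura's theorem giving a prime in $[x,\tfrac65 x]$ for $x\ge 25$, so that $q$ still stays below $4\Delta+1$; the finitely many small values of $\Delta$ are then handled by direct verification. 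Allowing $q$ to range over prime powers, which are denser, is an alternative way to close this gap. Thus the main obstacle is purely numerical: reconciling the prescribed constants $5$, $10$, and $(4\Delta+1)^2$ with the slack in the prime-existence step, rather than any conceptual difficulty in the constructions themselves.
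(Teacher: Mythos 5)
The paper never proves this lemma: it imports it verbatim from \cite{Linial92} and uses it as a black box, remarking only that Linial's proof is non-constructive and that a comparable statement can be obtained via polynomials over finite fields \cite{coverfree}. So there is no in-paper proof to compare against; what your proposal does is reconstruct the two arguments behind the cited results, and it does so essentially correctly. Your part~(a) is the standard probabilistic/union-bound argument (the non-constructive route of Linial, following Erd\H{o}s--Frankl--F\"uredi), and your part~(b) is exactly the polynomial (Reed--Solomon-type) construction that the paper's reference \cite{coverfree} points to. The trade-off you implicitly exhibit is the known one: (a) gives the near-optimal $O(\Delta^2\log c)$ ground set but is non-constructive, while (b) is explicit and gives the $\Delta$-independent-of-$c$ bound $(4\Delta+1)^2$ needed for the final color-reduction step.

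Three points of bookkeeping you should make explicit. First, in~(a), with $p=1/(\Delta+1)$ your bound $m > e(\Delta+1)^2\ln c \approx 1.89\,(\Delta+1)^2\log_2 c$ fits under $5\Delta^2\log_2 c$ only when $\Delta\ge 2$; that is all the paper ever uses (its theorems assume $\Delta\ge2$), but the restriction should be stated, or $\Delta=1$ handled separately with a different $p$ and the exact $(1-p(1-p))^m$ bound. Second, also in~(a), you need the $c$ random sets to be pairwise distinct for $|\mathcal{F}|=c$; this comes for free, since a coincidence $S_0=S_1$ makes the tuple $(S_0;S_1,\dots,S_\Delta)$ bad and is thus excluded by the same union bound, but it deserves a sentence. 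Third, in~(b), the prime-gap issue you flag is genuine: Bertrand's postulate alone does not give a prime $q$ with $10^{1/3}\Delta \le q \le 4\Delta+1$, and either of your fixes works --- Nagura's interval $[x,\tfrac65x]$ for $\Delta$ large enough with finite checking of small $\Delta$, or (cleaner) allowing prime powers, which is harmless since the construction only needs a field $\mathbb{F}_q$.
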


In \cite{Linial92}, \Cref{lem:cf} has been proved in a non-constructive way. However, it is possible to obtain a similar statement by using polynomials over finite fields \cite{coverfree}. We will use the above lemma as a black-box. However, the correctness of our algorithm will be independent from which specific cover-free family construction is used.

We now discuss how these cover-free families are used. Linial's algorithm, in its standard formulation for \LOCAL, requires the nodes to be aware of an upper bound $N$ on the size of the identifier space. At the first round, nodes recolor themselves by using $5 \lceil \Delta^2 \log N \rceil$ colors, thanks to a cover-free family from \Cref{lem:cf}(a) with parameter $c = N$. We denote by $f_1$ the one-to-one function used by the nodes to map their color to the elements of the cover-free family. At the second round, nodes use the cover-free family from \Cref{lem:cf}(a) with parameter $c = 5 \lceil \Delta^2 \log N \rceil$, from which they obtain a coloring that uses $5 \lceil \Delta^2 \log (5 \lceil \Delta^2 \log N \rceil) \rceil$ colors. We denote by $f_2$ the one-to-one function used by the nodes to map their color to the elements of the cover-free family. The nodes repeat this process multiple times, each time using a cover-free family from \Cref{lem:cf}(a) with parameter $c$ equal to the amount of colors obtained in the previous rounds.
Linial proved that it takes $O(\log^* N)$ rounds to reach a coloring that uses at most $10\Delta^3$ colors. Since it is typically assumed that $N = \poly(n)$, the runtime is $O(\log^* n)$.
At this point, the cover-free family from \Cref{lem:cf}(b) is used to get a coloring that uses $(4\Delta+1)^2 = O(\Delta^2)$ colors.

Let us denote by $T$ the number of rounds performed in total, including the last round that uses the family from \Cref{lem:cf}(b) for reducing the number of colors to at most $(4\Delta+1)^2$. For $1 \le i \le T$, let us denote by $f_i$ the one-to-one function used by the nodes to map their colors to the elements of the cover-free family while executing the $i$th round of Linial's algorithm.

\subparagraph{The Algorithm.} 

Let us show that the approach used in Linial's \LOCAL algorithm can be adapted to work in \WFLOC as well. We assume that $\inp{v}$ contains the same upper bound~$N$ on the range of identifiers. So, in particular, every node $v$ can compute~$T$ as a function of $\inp{v}$. The adaptation of Linial's coloring algorithm to \WFLOC is displayed as \Cref{alg:linial}. The main challenge when running Linial's algorithm in the \WFLOC model comes from the fact that a vertex $v$ may be in the $i$th iteration of Linial's algorithm, while a neighbor $u$ of $v$ may be in iteration $j \neq i$. Nevertheless, we will prove that our adaptation of Linial's algorithm correctly handles these cases.  The runtime of \Cref{alg:linial} is clearly $O(\log^* n)$.  We now argue that \Cref{alg:linial} is correct.

\begin{algorithm}[tb]
\caption{$O(\Delta^2)$-coloring arbitrary graph. Code of node~$v$: $\id{v}\in\{1,\dots,N\}$;  $\inp{v}=N$.}
\label{alg:linial}
\begin{algorithmic}[1]
\Procedure{WaitFreeLinial}{$\id{v}$,$\inp{v}$}
  \State $S  \gets (\id{v},\bot,\ldots\bot)$;  \Comment{$S$ is an array of length $T+1=O(\log^\star N)$, and is the state $s$ of $v$}
  \For{$i=1$ \textbf{to} $T$}\label{alglinial:linefor}
      \State $(s_1,\ldots,s_d) \gets \imm(s)$
      \State $A \gets \{ s_j.S[i] \mid (j\in\{1,\dots,d\}) \land (s_j.S[i] \neq \bot) \}$
            \Comment{$i$th entry of each array $s_j.S$}
      \State $S[i+1] \gets \min f_{i}(S[i]) \smallsetminus \bigcup_{a \in A} f_{i}(a) $\label{alglinial:min}
  \EndFor
  \State \Return $s[T+1]$
\EndProcedure
\end{algorithmic}
\end{algorithm}

\subparagraph{Correctness.} 

Let us show that \Cref{alg:linial} produces an $O(\Delta^2)$ coloring. We prove the statement by induction, by proving that, if two neighboring nodes $u$ and $w$ have both executed the $i$th step of the for-loop, then $s_u.S[i+1] \neq s_w.S[i+1]$. For $i=0$, that is, both nodes did not execute any step in the for loop, the statement is clearly satisfied, since $s_u.S[1] = \id{u} \neq \id{w} = s_w.S[1]$. Now, assume $s_u.S[i] \neq s_w.S[i]$. We prove that, under this inductive hypothesis, after $u$ and $w$ have both executed the $i$th iteration of the for-loop, we have  $s_u.S[i+1] \neq s_w.S[i+1]$. 
First of all, at line \ref{alglinial:min}, the set $f_i(S[i]) \setminus \bigcup_{a \in A}f_i(a)$ is not empty (and hence, applying the min operator is well-defined).
This is because nodes are using a $\Delta$-cover-free family, and $s_u.S[i] \neq s_w.S[i]$. Then, there are three cases to consider:
\begin{itemize}
    \item Node $u$ executes step $i$ strictly after node $w$ --- that is, the $i$th $\imm$ operation performed by $u$ is scheduled at a time that is strictly larger than when the $i$th $\imm$ operation performed by $u$ is scheduled. In this case, we are guaranteed that, when $u$ reads the memory of its neighbors, it sees $s_w.S[i] \neq \bot$. Hence, we obtain that $s_w.S[i+1] \in f_{i}(s_w.S[i])$ and that $s_u.S[i+1] \in f_{i}(s_u.S[i]) \setminus f_{i}(s_w.S[i])$, implying $s_u.S[i+1] \neq s_w.S[i+1]$, as required.
    \item Node $u$ executes step $i$ strictly before node $w$. This case is symmetric to the previous one.
    \item Nodes $u$ and $w$ execute step $i$ in parallel --- that is, they perform the $\imm$ operation in parallel. In particular, this implies that when $u$ reads the memory of its neighbors, it sees $s_w.S[i] \neq \bot$, and hence we are in the same situation as in the first case.
\end{itemize}
In all cases, we obtain that $s_u.S[i+1] \neq s_w.S[i+1]$,  which concludes the proof of Theorem~\ref{thm:linial}.  

\section{Reducing the Colors to $(\Delta+1)(\Delta+2)/2$}
\label{sec:savecolors}

In this section, we show that, at the cost of increasing the running time by an additive factor depending on $\Delta$ only, we can decrease the amount of colors from $O(\Delta^2)$ to $\frac12 (\Delta+1)(\Delta+2)$. 

\savecolors*

The algorithm that we provide is a generalization to general graphs of the $6$-coloring algorithm for cycles presented in~\cite{FraigniaudLR22}, and restated in Algorithm~\ref{alg:cyclesixcoloring}.
On a high-level, the algorithms works as follows. First, we compute an initial $O(\Delta^2)$-coloring of the nodes. Then, the final color of each node is given by a pair $(a,b)$. This pair is computed by repeatedly updating the values of $a$ and $b$ until the pair is different from the pairs of the neighbors. The value of $a$ is updated as a function of the $a$-values of the neighbors with larger initial color, while the value of $b$ is updated as a function of the $b$-values of the neighbors with smaller initial color.

\subparagraph{The algorithm.}

In order to prove \Cref{thm:savecolors}, we first analyze the algorithm $\textsc{SaveColors}$, displayed as \Cref{alg:savecolors}. Given an $O(\Delta^2)$-coloring as input, this procedure produces a $((\Delta+1)(\Delta+2)/2)$-coloring, in $f(\Delta)$ rounds for some function $f$.
\Cref{thm:savecolors} follows by running Algorithm \textsc{WaitFreeLinialReduced} below, in which if a node $v$ is running $\textsc{SaveColors}$ while some neighbor $u$ of $v$ is still running $\textsc{WaitFreeLinial}$, then $v$ treats the memory of $u$ as $\bot$.

\begin{algorithmic}
\Procedure{WaitFreeLinialReduced}{$\id{v}$,$\inp{v}$}
 \State $c_v \gets \textsc{WaitFreeLinial}(\id{v},\inp{v})$
 \State $\Return$ $\textsc{SaveColors}(\id{v},c_v)$
\EndProcedure
\end{algorithmic}

\begin{algorithm}[tb]
\caption{Reducing the number of colors from $O(\Delta^2)$ to $(\Delta+1)(\Delta+2)/2$.} \label{alg:savecolors}
\begin{algorithmic}[1]
\Procedure{SaveColors}{$\id{v}$,$\inp{v}$}
    \State $x \gets \inp{v}$; $(a,b) \gets (0,0)$ \Comment{$x\in [O(\Delta^2)]$ is the original color of $v$ }
  \Forever \label{alg2:loop}  \Comment{$s=(x,a,b)$ is the state of $v$}
     \State $(s_1,\ldots,s_d) \gets \imm(s)$ 
     \If{$(a,b) \notin \{(s_i.a,s_i.b) \mid (i\in\{1,\dots,d_v\}) \land (s_i \neq \bot)\}$} \label{alg2:if}
     \Return $(a,b)$
     \Else
        \State $a \gets \min \mathbb{N} 
        \smallsetminus \{s_i.a \mid (i\in\{1,\dots,d_v\}) \land (s_i \neq \bot) \land (x < s_i.x) \}$
        \State $b \gets \min \mathbb{N} 
        \smallsetminus \{s_i.b \mid (i\in\{1,\dots,d_v\}) \land (s_i \neq \bot) \land (x >  s_i.x) \}$
     \EndIf
  \Endforever
\EndProcedure
\end{algorithmic}
\end{algorithm}

\subparagraph{Correctness.}

We start by proving that the  coloring produced by  \Cref{alg:savecolors} is proper, and that the amount of possible resulting pairs is indeed $(\Delta+1)(\Delta+2)/2$. 
First, by assumption, the input of the nodes is a proper coloring, and hence two neighboring nodes have their $x$ variables initialized to different values.
Then, consider two neighboring nodes $u$ and $v$ that terminated with the pairs $(a_u,b_u)$ and $(a_v,b_v)$. There are three cases to consider:
\begin{itemize}
    \item Node $u$ terminated after $v$, that is, the last $\imm$ operation performed by $u$ is scheduled at a time that is strictly larger than the time at which the last $\imm$ operation performed by $u$ is scheduled. In this case, we are guaranteed that, during the last $\imm$ operation performed by node $u$, node $u$ read the pair with which node $v$ terminated, and then $u$ decided to terminate without altering its own pair. This implies that $(a_u,b_u) \neq (a_v,b_v)$, as desired.
    \item Node $u$ terminated before $v$. This case is symmetric to the previous one.
    \item Nodes $u$ and $v$ terminated at the same time. In this case, nodes $u$ and $v$ performed their last $\imm$ operation at the same time, and hence they both read the pair with which the neighbor terminated. Since the condition of the \emph{if} was true, $(a_u,b_u) \neq (a_v,b_v)$, as desired.
\end{itemize}
We now prove a bound on the amount of possible different pairs returned by the algorithm. Consider a node $v$ of degree $d_v$. Observe that $0 \le a,b \le d_v \le \Delta$.
Since each neighbor $i$ of $v$ satisfies $s_i.x > x$ or $s_i.x < x$, but not both, it must be the case that $a + b \le \Delta$.
We thus obtain that the number of possible resulting pairs $(a,b)$ is at most 
$
\sum_{a=0}^{\Delta} (\Delta - a + 1) = (\Delta+1)(\Delta+2)/2,
$
as desired.

\subparagraph{Runtime.}

We prove that the runtime  \Cref{alg:savecolors} is bounded by $f(C,\Delta)$ for some function $f$, where $C = O(\Delta^2)$ is the size of the color palette given as input to $\textsc{SaveColors}$. 
In the following, for every node $v$,  we denote by $v.x$ the variable $x$ in the register of node~$v$. We start by proving a lemma which roughly states that if a node $v$ is scheduled sufficiently many times without having its $b$ variable changed, then $v$ terminates.

\begin{lemma}\label{lem:ub_no_b_change}
    Let $\mathcal{S} = S_1,S_2,\ldots$ be a scheduling. Let $q$ and $r$ be indices such that $q<r$, $v \in S_{q}$, $v \in S_{r}$, and $|\{ j\in\{q,\dots,r\} \mid v \in S_j\}| = \Delta+3$. Assume that for all $j\in\{q,\dots,r\}$ satisfying $v \in S_j$, node $v$ has the same value $v.b$ when performing its $\imm$ operation. Then, there exists a step $t < r$ at which $v$ terminates.
\end{lemma}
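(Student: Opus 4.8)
The plan is to exploit the hypothesis ``$v.b$ is constant'' much more aggressively than the statement requires: I will show that, once $v$ has been activated \emph{twice} inside the window, it can no longer be blocked and must terminate, which is far stronger than terminating before $r$. Write $t_1 = q < t_2 < \cdots < t_{\Delta+3} = r$ for the activations of $v$ in $\{q,\dots,r\}$, let $b^\star$ denote the common value of $v.b$ at these activations, and let $a_k$ be the value of $v.a$ that $v$ writes at $t_k$. Since the coloring fed to \textsc{SaveColors} is proper, each neighbour of $v$ is either an \emph{upper} neighbour (with $x$-value larger than $v.x$) or a \emph{lower} neighbour (with $x$-value smaller than $v.x$), and $v$ updates $a$ from the former and $b$ from the latter. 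If $v$ already terminates at $t_1=q<r$ we are done, so I will assume it does not.

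First I would record two consequences of $v.b \equiv b^\star$. \textbf{(i)} Whenever $v$ fails to terminate at some $t_k$ with $k\le \Delta+2$, the value it \emph{recomputes} for $b$ equals the value it writes at $t_{k+1}$, namely $b^\star$; since this recomputed value is $\min\mathbb{N}\smallsetminus\{s_i.b \mid s_i\neq\bot,\ v.x>s_i.x\}$, the number $b^\star$ must be \emph{absent} from the $b$-values of the lower neighbours that $v$ reads at $t_k$. Hence any neighbour blocking $v$ at $t_k$ (i.e.\ read with pair $(a_k,b^\star)$) is necessarily an \emph{upper} neighbour carrying $b=b^\star$. \textbf{(ii)} Symmetrically, no upper neighbour can ever \emph{acquire} the value $b^\star$ during the window: such a neighbour $u$ has $v$ among its own lower neighbours, and from the moment $v$ has published $b^\star$ (i.e.\ from $t_1=q$ onwards) any recomputation of $u.b$ takes the least natural number missing from a set that \emph{contains} $v.b=b^\star$, and therefore returns a value different from $b^\star$.

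The heart of the argument is a freezing property derived from (ii). If an upper neighbour $u$ is read with $u.b=b^\star$ at some $t_k$, then $u$ cannot have recomputed since step $q$, for a single post-$q$ recomputation would have destroyed $u.b=b^\star$ beyond repair; consequently both $u.b$ and $u.a$ are frozen at their pre-$q$ values throughout. Thus the set $B$ of $a$-values of these ``blocking-eligible'' upper neighbours is determined at time~$q$ and can only shrink, and every such neighbour is non-$\bot$ and is read by $v$ at each $t_k$. Writing $U_k := \{s_i.a \mid s_i\neq\bot,\ v.x<s_i.x\}$ for the set $v$ reads to update $a$ at $t_k$, this gives $B\subseteq U_k$, whence the recomputed value $a_{k+1}=\min\mathbb{N}\smallsetminus U_k$ lies outside $U_k\supseteq B$.

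To conclude, I combine the three observations for $k=1,2$. As $v$ does not terminate at $t_1$, observation (i) exhibits a blocking upper neighbour of $b$-value $b^\star$, so $a_1\in U_1$ and the update yields $a_2=\min\mathbb{N}\smallsetminus U_1\notin B$. At $t_2$, observation (i) again forces any blocker to be an upper neighbour with $b=b^\star$, i.e.\ one whose (frozen) $a$-value lies in $B$; but $v$ now carries $a_2\notin B$, so no neighbour matches $(a_2,b^\star)$ and $v$ terminates at $t_2<r$. I expect the genuine obstacle to be the asynchronous bookkeeping — an adversary is free to make neighbours change state between $v$'s snapshots — and this is precisely what observation (ii) neutralizes: the only way a neighbour could ``chase'' $v$'s pair is by resetting its own $b$ to $b^\star$, which is impossible while $v$ keeps $b^\star$ present in that neighbour's lower-neighbourhood. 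The wide gap between the bound I obtain ($t_2$) and the stated $\Delta+3$ means no tightness analysis is needed; I would simply remark that the stated weaker bound follows a fortiori.
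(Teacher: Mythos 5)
Your argument has a genuine gap: it conflates the state a neighbour has \emph{computed} with the state it has \emph{published}. In the \WFLOC model (and in \Cref{alg:savecolors}), the operation $\imm(s)$ writes the state that was computed at the node's \emph{previous} activation; the pair produced by an update only becomes visible to neighbours one activation later. Your observation~(ii) is correct, but it only constrains \emph{recomputations} performed at steps $\geq q$; it does not prevent an upper neighbour $u$ from \emph{writing}, in the middle of the window, a pair $(\alpha_2,b^\star)$ that it computed \emph{before} step $q$ and had not yet published. When that happens, $u$ is read with $b$-value $b^\star$ at $t_2$ even though its published $a$-value has just changed from $\alpha_1$ to $\alpha_2$. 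So your ``freezing property'' is false, the set $B$ is not determined at time $q$, and the inclusion $B\subseteq U_1$ breaks down; the adversary can even arrange $\alpha_2=\min\mathbb{N}\smallsetminus U_1=a_2$, so the very value $v$ switches to is the one waiting to be published, and $v$ is blocked again at $t_2$.

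Concretely, take the $4$-cycle on $w,v,u,p$ (edges $\{w,v\},\{v,u\},\{u,p\},\{p,w\}$) with input colors $x_w=1<x_v=2<x_u=3<x_p=4$, all pairs initialized to $(0,0)$, and the scheduling $\{u,p\},\{v\},\{u\},\{v\},\{v\},\dots$. At step~1, $u$ and $p$ block each other and $u$ computes $(1,0)$; at step~$2=t_1=q$, $v$ writes pair $(0,0)$, is blocked by $u$'s published $(0,0)$, and computes $(1,0)$; at step~3, $u$ writes $(1,0)$ and terminates; at step~$4=t_2$, $v$ writes $(1,0)$ and is blocked \emph{again}, this time by $u$'s freshly published $(1,0)$; $v$ terminates only at $t_3$. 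Throughout, $v.b=0$ at every one of its $\imm$ operations, so the hypothesis of \Cref{lem:ub_no_b_change} holds while your conclusion (termination at $t_2$) fails --- note that $u$ never even recomputes after $q$ here; it merely publishes a stale pair. This one-step lag is exactly what the paper's proof is built to absorb: it looks for a sub-interval $(k_{i-1},k_i]$ in which every neighbour scheduled there has been scheduled at least \emph{twice} since $q$ (two activations are needed before a post-$q$ recomputation, which provably avoids $b^\star$, becomes visible), and a pigeonhole argument over the $\Delta+1$ candidate sub-intervals shows that such an $i\le\Delta+2$ exists, since each neighbour can spoil at most the one candidate containing its first in-window appearance. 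Since each of the $\Delta$ neighbours can inject one stale pair, any correct argument must tolerate $\Omega(\Delta)$ blocking events; a constant bound like yours cannot be repaired locally, which is why the $\Delta+3$ in the statement is not slack to be remarked away a fortiori.
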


\begin{proof}
    Let $q = k_1 < k_2 < \cdots < k_{\Delta+3} = r$
        denote the~$\Delta+3$ steps
        in which node~$v$ is scheduled
        during the interval~$\{q, \ldots, r\}$,
        and let~$I \coloneqq \{k_1, \ldots, k_{\Delta+3}\}$.
    Let $i$ be the smallest value satisfying that, for all $u\in N(v) \cap \bigcup_{j = k_{i-1}+1}^{k_i} S_j$, there exist two indices $j_1$ and $j_2$ satisfying $u \in S_{j_1}$, $u \in S_{j_2}$, and $q \le j_1 < j_2 \le k_i$. That is, every neighbor of~$v$'s that scheduled in between step $k_{i-1} + 1$ and step $k_i$ has been scheduled at least twice between step $q$ and step~$k_i$. We will later show that such an index~$i$ always exists, and that $i\leq \Delta+2$.
    Let us first prove that node~$v$ terminates no later than in step~$k_i$, which is enough to conclude.
    We split the neighbors of $v$ into two groups, and we analyze them separately.
    \begin{itemize}
        \item  \textbf{\boldmath Neighbors $u$ of $v$ that are scheduled at least once in the step interval $[k_{i-1}+1,k_{i}]$.} By assumption, $u$ is scheduled at least twice in the interval $[q,k_{i}]$, say at steps $j_1$ and $j_2$.
        Let us show that, in step~$k_i$,
        the value of~$u.b$ read by node~$v$
        is distinct from~$v.b$,
        so that node~$u$ does not prevent~$v$
        from terminating in this step.
        
        We first consider the case $u.x > v.x$. Node $v$ writes $v.b$ at step $q$.
        Node $u$ then reads~$v.b$ at step $j_1 \ge q$,
        and updates $u.b$ to a new value different from $v.b$.
        Node~$u$ then writes this new value~$u.b \neq v.b$
        in step~$j_2 \le k_i$.
        Since, by assumption,
        $v.b$ is the same at step $q$ as it is at step $k_i$,
        it holds that $u.b \neq v.b$ at step $k_i$.
        For the case~$u.x < v.x$,
            we note by contradiction
            that if we had~$u.b = v.b$ in step~$k_i$,
            node~$v$ would write a new value for~$u_b$ in step~$k_{i+1}$.
        Since~$i \le \Delta+2$,
            this would contradict the assumption
            that node~$v$ writes the same value~$v.b$
            in all steps~$k_1, \ldots, k_{\Delta+1}$.
        Thus~$u.b \neq v.b$ in step~$k_i$.

        \item \textbf{\boldmath Neighbors $u$ of $v$ that are not scheduled in the step interval $[k_{i-1}+1,k_{i}]$.} 
        We consider two cases.
        If node~$u$ has never been scheduled,
            then~$s.u = \bot$ in step~$k_i$
            and node~$v$ ignores~$u$
            in line~\ref{alg2:if} of the algorithm.
        If node~$u$ \emph{has} been scheduled
            sometime before step~$k_{i-1}$,
            node~$v$ reads some pair~$(u.a,u.b)$ in step~$k_{i-1}$,
            updating its own pair~$(v.a, v.b)$
            to avoid colliding with~$u$.
        Since node~$u$ is then not scheduled
            in the interval~$[k_{i-1},k_i]$,
            there is no collision between the pairs
            $(u.a, u.b)$ and~$(v.a, v.b)$ in step~$k_i$.
        In both cases, node~$u$
            does not prevent node~$v$
            from terminating in step~$k_i$.
        
    \end{itemize}
Finally, let us prove that $i \le \Delta+2$. If an index $i \ge 2$ does not satisfy the requirements, then  there is at least one node $u$ that is scheduled during the step interval $[k_{i-1} +1,k_{i}]$ that was never scheduled during the step interval $[q,k_{i-1}]$.
Observe that this can happen for at most $\Delta$ times, because a neighbor $u$ of $v$ can satisfy the condition for a at most one possible value of~$i$. The claim follows from the pigeonhole principle.
\end{proof}

We now use \Cref{lem:ub_no_b_change} to prove a bound on the runtime of the algorithm. We define a partition of the nodes of the input graph $G$ as follows. Let 
$
V_1 = \{ v\in V(G) \mid \forall u \in N(v), \; \inp{v} < \inp{u}\}.
$
That is, $V_1$ is the set of nodes that are local minima w.r.t.\ their input color. Then, for every $i\geq 2$, let 
$
V_i = \{ v\in V(G)  \mid \forall u \in N(v) \smallsetminus \cup_{j=1}^{i-1} V_j, \; \inp{v} < \inp{u}\}
$
That is, $V_i$ contains nodes that are local minima among the nodes that are not in any set $V_1,\ldots,V_{i-1}$.
By construction, each set~$V_i$ is an independent set
    with respect to the graph~$G$,
    and there exists some index~$i^* \le C$
    such that~$V_i = \varnothing \iff i > i^*$.
We prove by finite induction on~$i$
    that the nodes in $V_i$
    terminate in at most $f(i,\Delta)$ steps
    for some function $f$.
For the base case~$i = 1$,
    pick~$v \in V_1$.
As node~$v$ is a local minimum,
    its value~$v.b$ never changes,
    and hence by \Cref{lem:ub_no_b_change} 
    it terminates in at most $f(1,\Delta) = \Delta+3$ steps.
We now consider the inductive case $2 \le i \le i^*$.
Pick a node $v \in V_i$,
    and let $B = N(v) \cap (\cup_{j=1}^{i-1} V_j)$. 
If there exist two indices $q$ and $r$ such that $v$ is scheduled for $\Delta+4$ times in the interval $[q,r]$, and no node in $B$ is scheduled in the same interval $[q,r]$, then, since $v.b$ can just change the first time $v$ is scheduled in this interval, it follows from \Cref{lem:ub_no_b_change} that node $v$ terminates at time at most~$r$.  Hence, for a node $v$  not to terminate, it must be the case that at least one node in $B$ is scheduled at least once, for each of the $\Delta+4$ steps during which  $v$ is scheduled. By induction hypothesis, this can happen for at most $\Delta \cdot f(i-1,\Delta)$ times, merely because $v$ has at most $\Delta$ neighbors in $\cup_{j=1}^{i-1} V_j$, and their runtimes are bounded by $f(i-1,\Delta)$.
This implies that the runtime of $v$ is bounded by 
$
f(i,\Delta) = (\Delta+4) + (\Delta+4)\cdot \Delta \cdot f(i-1,\Delta).
$
Since the colors returned by $\textsc{WaitFreeLinial}$ are bounded by $O(\Delta^2)$, we get that the total runtime  of the algorithm is bounded by $(\Delta^2)^{O(\Delta^2)} = 2^{O(\Delta^2 \log \Delta)}$.

\section{Saving One More Color}\label{sec:onemore}

We now modify \Cref{alg:savecolors}
    in order to save one additional color.
This new algorithm,
    shown in \Cref{alg:alg3},
    allows us to establish the following theorem.

\generalization*

An important consequence of this result is Corollary~\ref{cor:avoidingzebug}, that is, the existence of a 5-coloring algorithm for the cycles in the \WFLOC model. This result is original because the 5-coloring algorithm proposed in~\cite{FraigniaudLR22} has a bug (cf. Appendix~\ref{app:algo-has-a-bug} where we exhibit an instance of 5-coloring $C_4$ for which the algorithm in~\cite{FraigniaudLR22} does not terminate). 

\subsection{Intuition of the algorithm}

We start by providing the high level idea of the algorithm.
The algorithm that we provide is similar to \Cref{alg:savecolors}, and it exploits some special properties of the pairs $(a,b)$ that it produces.
Specifically, we modify \Cref{alg:savecolors}
    such that, if a node outputs the pair~$(\Delta,0)$,
    then none of its neighbors output the pair~$(0,\Delta)$.
In this case, we can identify
    the pairs~$(\Delta,0)$ and~$(0,\Delta)$ as the same color,
    reducing the amount of colors in use by one.
    
Notice that a node that outputs the pair~$(\Delta,0)$
    is necessarily a local minimum
    with respect to the node identifiers,
    and similarly a node that outputs the pair~$(0,\Delta)$
    is necessarily a local maximum.
The problematic case of neighbors
    outputting both pairs~$(\Delta,0)$ and~$(0,\Delta)$
    can therefore only happen
    when the both neighbors are local extrema.
However, for such neighboring nodes to reach a state
    where they would output problematic pairs,
    some specific conditions must hold
    which can be handled by the nodes
    as a specific case.
    
More in detail, 
    in such a situation,
    we make nodes flip their relative ordering:
    if node~$u$ is a local minimum
    and node~$v$ is a local maximum,
    then $u$ will treat $v$ as smaller
    when comparing their $x$ variables,
    and $v$ will treat $u$ as larger.
By flipping relative ordering,
    we are forcing neighboring local extrema
    with pairs $(\Delta,0)$ and $(0,\Delta)$
    to stop being local extrema,
    leading them to change their output pairs.
This modification will affect the termination time,
    and hence we also need to introduce new terminating conditions.

\subsection{Formal Description}

The algorithm is displayed in \Cref{alg:alg3}. It refers to some functions that are presented below.

\subparagraph{Treating special pairs as equal.}

The first modification applied to \Cref{alg:savecolors} is the following. In line \ref{alg2:if}, instead of directly using the pairs $(a,b)$ of the node, and the pairs of its neighbors, we first map them by using the function $\textsc{Map}$ shown below. Observe that $\textsc{Map}$ behaves as the identity function for all pairs different from $(\Delta,0)$, and it maps $(\Delta,0)$ to $(0,\Delta)$. In this way, the algorithm behaves similarly as the original one, except that it forbids neighboring nodes with pairs $(0,\Delta)$ and $(\Delta,0)$ to terminate, since after applying $\textsc{Map}$, they are both mapped to $(0,\Delta)$, and hence they are treated as having the same pair.

\begin{algorithmic}
\Procedure{Map}{$a$,$b$} 
    \If{ $(a,b) = (\Delta,0)$ } \Return $(0,\Delta)$
    \Else{} \Return $(a,b)$
    \EndIf
\EndProcedure
\end{algorithmic}

\subparagraph{A new ordering relation.}

In \Cref{alg:savecolors}, nodes exploit their variables $x$ (that is, the given coloring) to determine an ordering relation between them. In the new algorithm, each node keeps an additional variable~$f$, which is a set of identifiers. The semantic is the following. For two nodes $u$ and~$v$, if $u \in v.f$ or $v \in u.f$, then the ordering w.r.t.\ their variables $x$ is flipped. We call an edge $\{u,v\}$ flipped whenever $u \in v.f$ or $v \in u.f$.

Let us define two auxiliary Boolean functions that are used by a node $v$ to determine whether the ordering relation with a neighbor~$u$ should be considered flipped or not. These functions take as input the state $s_v$ and $s_u$ of the two (neighboring) nodes. The variable $z$, as will be shown in the algorithm, stores the identifier of the node.

\begin{algorithmic}
\Procedure{IsNotFlipped}{$s_v$, $s_u$} 
    \State \Return $(s_v \neq \bot) \land (s_u \neq \bot) \land (s_u.z \notin s_v.f) \land (s_v.z \notin s_u.f)$
\EndProcedure
\Procedure{IsFlipped}{$s_v$, $s_u$} 
    \State \Return $(s_v \neq \bot) \land (s_u \neq \bot) \land \big ((s_u.z \in s_v.f) \lor (s_v.z \in s_u.f)\big)$
\EndProcedure
\end{algorithmic}

We are now ready to define the new ordering relation. For this purpose, we define two functions that, given the state $s$ of the node, and the state $s_i$ of its $i$th neighbors, return the neighbors that are considered smaller, and the neighbors that are considered larger, respectively.

\begin{algorithmic}
\Procedure{Smaller}{$s$,$(s_1,\ldots,s_k)$} 
    \State \Return ${\big \{ i \in\{1,\dots,k\} \mid  \big (( \textsc{IsNotFlipped}(s,s_i) \land  (s.x > s_i.x)\big) \lor \big( \textsc{IsFlipped}(s,s_i) \land (s.x < s_i.x)\big)\big \}}$
\EndProcedure
\Procedure{Larger}{$s$,$(s_1,\ldots,s_k)$} 
    \State \Return ${\big \{ i \in\{1,\dots,k\} \mid \big (\textsc{IsNotFlipped}(s,s_i) \land (s.x < s_i.x)\big) \lor \big( \textsc{IsFlipped}(s,s_i) \land (s.x > s_i.x)\big )\big \}}$
\EndProcedure
\end{algorithmic}

\subparagraph{Special termination.}

We also define a function that provides an extra termination condition. It relies on an additional function that detects a neighborhood with special properties. It uses some variables $\alpha$ and $\beta$ that are both set to true if a node has at least one smaller neighbor (that is, it is not a local minima), and it has at least one larger neighbor (that is, it is not a local maxima). We assume that the maximum degree~$\Delta$ is part of the input provided to the nodes.

\begin{algorithmic}
\Procedure{SpecialNeighborhood}{$s$,$(s_1,\ldots,s_\Delta)$}
    \State \Return $\Big (
    \big (\bigwedge_{i=1}^{\Delta} (s_i \neq \bot)\big ) 
    \land 
    \big( \{s.a,s.b\} \cup (\cup_{i=1}^\Delta \{s_i.a,s_i.b\}) \subseteq \{0,\ldots,\Delta-1\} \big)
    \land 
    \; s.\alpha \; \land \; s.\beta  $
    \State \hfill $\land \; \Big ( \bigwedge_{i=1}^{\Delta} \big (
        (s_i.\alpha \lor |\textsc{Smaller}(s_i,[s])| = 1) 
        \land 
        (s_i.\beta \lor |\textsc{Larger}(s_i,[s])| = 1) \big )\Big)\Big)$
\EndProcedure
\end{algorithmic}

That is, a neighborhood of a node $v$ is \emph{special} if (1)~node $v$ has seen all its neighbors, (2)~they are precisely $\Delta$, (3)~the $a$ and $b$ variables of the node and of all its neighbors are in $\{0,\ldots,\Delta-1\}$, and (4)~node $v$ and all its neighbors have at least one smaller, and at least one larger neighbor.
The reason why we use the condition $s_i.\alpha \lor |\textsc{Smaller}(s_i,[s])| = 1$ for checking whether a node has at least one smaller neighbor, instead of just using $s_i.\alpha$ is the following. Let $u$ be the node with state $s_i$, and $v$ be the node with state~$s$. It could be the case that $v$ is smaller than $u$, but $u$ has been scheduled earlier than $v$. So it may be the case that $u$ has never seen $v$. In this case, we could get that $u.\alpha$ is false, even though $u$ has $v$ as smaller neighbor. For this reason, node $v$ computes whether $u.\alpha$ would become true if $u$ were to be scheduled one additional round, by checking whether $v$ is smaller than $u$ using the condition $|\textsc{Smaller}(s_i,[s])| = 1$. A similar reasoning is applied for checking whether a node has at least one larger neighbor. Note that, in the algorithm, once a node sets $\alpha$ (resp.\ $\beta$) to true, that is when it realizes that it is not a local minima (resp., maxima), it will never change its value. The reason is that, as we will prove later, a node never becomes a local minima (resp., maxima) by flipping edges. 
We now introduce the special termination condition. According to this special condition, a node terminates if (1)~its neighborhood is special, and (2)~it is a local maxima according to the original ordering, that is, before flipping any edge.

\begin{algorithmic}
\Procedure{SpecialTermination}{$s$,$(s_1,\ldots,s_\Delta)$}
    \State \Return $\textsc{SpecialNeighborhood}(s,(s_1,\ldots,s_\Delta)) \land \big(\forall i\in \{1,\dots,\Delta\}, s.x > s_i.x\big )$
\EndProcedure
\end{algorithmic}

\subparagraph{The new algorithm.}

The algorithm is displayed as  \Cref{alg:alg3}. 
Like in the case of \Cref{alg:savecolors}, we assume that $\inp{v}$ is the result of running $\textsc{WaitFreeLinial}$.  Observe that the algorithm is similar to \Cref{alg:savecolors}, with only three exceptions. First, it identifies $(0,\Delta)$ with $(\Delta,0)$ when checking for termination at line \ref{alg3:if}. Second,  it uses the custom ordering relation induced by the functions $\textsc{Smaller}$ and $\textsc{Larger}$ at lines \ref{alg3:update_a} and \ref{alg3:update_b}. Third, it has an additional termination condition at line \ref{alg3:if2}.

\algnewcommand{\IIf}[1]{\State\algorithmicif\ #1\ \algorithmicthen}
\algnewcommand{\ElseIIf}[1]{\algorithmicelse\ #1}
\algnewcommand{\EndIIf}{\unskip\ \algorithmicend\ \algorithmicif}

\begin{algorithm}[h]
\caption{Saving 1 color from palette $[\frac12(\Delta+1)(\Delta+2)]$. Algorithm of node~$v$ with color $\inp{v}$} \label{alg:alg3}
\begin{algorithmic}[1]
\Procedure{SaveOneMoreColor}{$\id{v}$,$\inp{v}$}
  \State $a \gets 0$; $b \gets 0$ ; $x \gets \inp{v}$ ; $z \gets \id{v}$
\State $f \gets \{\}$; $\alpha \gets \textbf{false}$; $\beta \gets \textbf{false}$ \Comment{$s=(a,b,x,f,\alpha,\beta,z)$ is the state of node $v$}

  \Forever \label{alg3:loop}
     \State $(s_1,\ldots,s_\Delta) \gets \imm(s)$ \Comment{if $d_v < \Delta$, we assume that $s_i = \bot$ for all $i > d_v$}
     \If{$\textsc{Map}(a,b) \notin \{\textsc{Map}(s_i.a,s_i.b) \mid i\in\{1,\dots,\Delta\} \land s_i \neq \bot\}$} \label{alg3:if}
         \Return $\textsc{Map}(a,b)$ \label{alg3:ret1}
     \Else
        \If{$(a = \Delta) \lor (b = \Delta)$}\label{alg3:flip1} \Comment{We compute the flipped edges.}
            \State $f \gets f \cup \big \{ s_i.z \mid (i\in\{1,\dots,\Delta\}) \and (s_i \neq \bot) \land \big ((s_i.a = \Delta) \lor (s_i.b = \Delta)\big )\big  \}$\label{alg3:flip2}
        \EndIf
        \State $a \gets \mathbb{N} \smallsetminus \{ s_i.a \mid i \in \textsc{Larger}(s,(s_1,\ldots,s_\Delta)) \}$\label{alg3:update_a} 
        \State $b \gets \mathbb{N} \smallsetminus \{ s_i.b \mid i \in \textsc{Smaller}(s,(s_1,\ldots,s_\Delta)) \}$\label{alg3:update_b}
    
        \If{ $|\textsc{Smaller}(s,(s_1,\ldots,s_\Delta))| \ge 1$ }
                $\alpha \gets \textbf{true}$
        \EndIf
        \If{ $|\textsc{Larger}(s,(s_1,\ldots,s_\Delta))| \ge 1$ }
                $\beta \gets \textbf{true}$
        \EndIf
        \If{  $\textsc{SpecialTermination}(s,(s_1,\ldots,s_\Delta))$ }\label{alg3:if2}
             \Return $(0,\Delta)$ \label{alg3:ret2}
        \EndIf
     \EndIf
  \Endforever
\EndProcedure 
\end{algorithmic}
\end{algorithm}
In the rest of the section, we prove that Algorithm~\ref{alg:alg3} is correct.

\subsection{Properties of flipped edges.}

We start by proving that, for an edge to be flipped, some special conditions must be satisfied. In the following, by \emph{local minimum w.r.t.\ $x$} we denote a node that has its $x$ variable smaller than all the $x$ variables of its neighbors. Analogously, we define the notions of local maximum and local extremum w.r.t.~$x$.

\begin{lemma}\label{lem:who_flips}
An edge  $\{u,v\}$ can be flipped only if $u$ and $v$ have degree exactly $\Delta$, and they are both local extrema w.r.t.\ $x$ (one maximum and one minimum).
\end{lemma}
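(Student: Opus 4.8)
The plan is to trace the single mechanism by which edges get flipped and reduce the claim to a statement about when the coordinates~$a$ and~$b$ of a node can reach the value~$\Delta$. An edge $\{u,v\}$ becomes flipped only when one endpoint, say~$v$, executes lines~\ref{alg3:flip1}--\ref{alg3:flip2} of \Cref{alg:alg3}; this requires that $v$ itself satisfies $a=\Delta$ or $b=\Delta$, and that the neighbor it adds to~$v.f$ also satisfies $s_i.a=\Delta$ or $s_i.b=\Delta$. Hence it suffices to establish two invariants: (H1)~if $v.a=\Delta$ at some step, then $v$ has degree exactly~$\Delta$, is a local minimum w.r.t.~$x$, and has no incident flipped edge; and (H2)~the symmetric statement for $v.b=\Delta$, a local maximum. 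Given (H1) and (H2), both endpoints of a flipped edge are degree-$\Delta$ local extrema w.r.t.~$x$; and since two adjacent nodes cannot both be local minima (nor both maxima) w.r.t.~$x$, one is a minimum and the other a maximum, which is exactly \Cref{lem:who_flips}.

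I would prove (H1) and (H2) simultaneously by strong induction on the scheduling steps. Before the first flip all sets~$f$ are empty, so the ordering induced by $\textsc{Smaller}$/$\textsc{Larger}$ coincides with the $x$-ordering; then $a=\Delta$ forces all $\Delta$ neighbors to be larger in~$x$ (a local minimum of degree~$\Delta$), and symmetrically for~$b$, giving the base case. The degree-$\Delta$ part of the inductive step is immediate: the update at line~\ref{alg3:update_a} (resp.~line~\ref{alg3:update_b}) sets the coordinate to $\min\mathbb{N}$ minus the values of the $\textsc{Larger}$ (resp.~$\textsc{Smaller}$) neighbors, so reaching~$\Delta$ requires at least~$\Delta$ such neighbors, hence degree exactly~$\Delta$ with \emph{all} neighbors classified as $\textsc{Larger}$ (resp.~$\textsc{Smaller}$).

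The heart of the argument is to control the incident flipped edges when a node~$v$ sets $a\gets\Delta$ in some step, by examining what $v$ did at the preceding flip sub-step of that same step. Using the inductive hypothesis, any neighbor $v$ adds to~$v.f$ has a coordinate equal to~$\Delta$ and is therefore a local extremum of the type opposite to~$v$. Two computations then pin down the situation: if $v$ is a local minimum, every edge it flips turns a would-be $\textsc{Larger}$ neighbor (larger in~$x$) into a $\textsc{Smaller}$ one, so flipping even one edge leaves fewer than~$\Delta$ $\textsc{Larger}$ neighbors and drives the recomputed~$a$ strictly below~$\Delta$; and if $v$ were a local maximum flipping all its edges in one step, those flipped neighbors are local minima carrying $a=\Delta$, so the recomputed value at line~\ref{alg3:update_a} is forced to~$0$. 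In both cases $v$ can hold $a=\Delta$ only as a freshly confirmed local minimum that flipped nothing in that step, so that, together with the symmetric both-sides test built into $\textsc{IsFlipped}$/$\textsc{IsNotFlipped}$, no incident edge is flipped, maintaining~(H1); the case of~(H2) is symmetric.

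The main obstacle is the asynchrony: the snapshot feeding the flip decision and the snapshot feeding the coordinate update are taken at different steps, and the sets~$f$ only grow over time, so a neighbor's coordinate value seen when $v$ last updated~$b$ need not match the value seen when $v$ flips. The delicate point is therefore to rule out, across \emph{all} interleavings, the one configuration not killed by the computations above, namely a local maximum that has \emph{already} accumulated all its incident flipped edges together with neighbor $a$-values covering $\{0,\dots,\Delta-1\}$, which would let it recompute $a=\Delta$ (and dually for a minimum and~$b$). I expect to exclude this by exploiting the monotonicity of the $f$-sets and the symmetry of the flip test: once a node owns an incident flipped edge it can never again satisfy the coordinate condition that, by (H1)/(H2), is the \emph{only} way it could be flipped again, which caps each node's flipping to a single scheduling step and prevents the forbidden configuration from ever arising. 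This temporal bookkeeping, rather than any single inequality, is where the real work lies.
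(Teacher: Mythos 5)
Your overall architecture is partly sound: the degree argument (a coordinate can reach $\Delta$ only if $|\textsc{Larger}|$ or $|\textsc{Smaller}|$ equals $\Delta$, since $a\le|\textsc{Larger}|$ and $b\le|\textsc{Smaller}|$) and the base case before the first flip are exactly right. But there is a genuine gap at the step you yourself flag as ``where the real work lies,'' and the mechanism you propose to close it is false. You claim that once a node owns an incident flipped edge it can never again satisfy the coordinate condition, and that this ``caps each node's flipping to a single scheduling step.'' This conflates a node's \emph{computed} coordinates with its \emph{visible} (written) ones. When a neighbor $u$ flips $\{u,v\}$ at line~\ref{alg3:flip2}, node $v$ learns nothing: $v$'s register still shows the stale value $a=\Delta$ (or $b=\Delta$) that $u$ reacted to, and it keeps showing it until $v$ itself is scheduled and eventually writes a smaller value --- which the scheduler can delay forever. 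During that window, any number of $v$'s other neighbors, at arbitrarily many distinct scheduling steps, can read the same stale $\Delta$ and flip their own edges to $v$. So flipping is not capped to one step, your invariant (H1) is false as stated ($v$ can write $a=\Delta$ at a step where $\{u,v\}$ already sits in $u.f$), and the configuration you identified --- a local maximum with all incident edges flipped whose neighbors' visible $a$-values cover $\{0,\dots,\Delta-1\}$ --- is not excluded by your reasoning.

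The good news is that neither the ``no incident flipped edge'' clause nor the type-matching ($a=\Delta\Rightarrow$ minimum, $b=\Delta\Rightarrow$ maximum) is needed for \Cref{lem:who_flips}, and dropping them dissolves the obstruction. The lemma asserts only degree $\Delta$ and extremality, so the invariant to carry through the induction on flip events is just: any coordinate value $\Delta$ ever computed belongs to a degree-$\Delta$ node that is a local extremum w.r.t.\ $x$. Degree $<\Delta$ is ruled out unconditionally. For a non-extremum $w$, the inductive hypothesis (all earlier flips involve only extrema) implies no edge incident to $w$ is flipped in any view, so $w$'s $\textsc{Smaller}$/$\textsc{Larger}$ classification agrees with the $x$-ordering and each of the two sets misses at least one seen neighbor, keeping $w.a,w.b\le\Delta-1$. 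Since a flip of $\{u,v\}$ requires the executor's own coordinate to equal $\Delta$ (line~\ref{alg3:flip1}) and the seen neighbor's coordinate to equal $\Delta$ (line~\ref{alg3:flip2}), both endpoints are degree-$\Delta$ extrema, and adjacency forces one minimum and one maximum. This is in substance the paper's (much terser) proof; even in the pathological case where a local maximum acquires flipped edges and recomputes $a=\Delta$, it is still a local extremum, so \Cref{lem:who_flips} is unharmed --- pinning down that case is only needed for the stronger statement proved afterwards in \Cref{lem:flipping}.
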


\begin{proof}
    We first prove that $u$ and $v$ need to have degree exactly $\Delta$. 
    Assume, w.l.o.g., that $v$ has degree smaller than $\Delta$. Then, both its $a$ and $b$ variables are in $\{0,\ldots,\Delta-1\}$, and hence the condition in line \ref{alg3:flip1} is not satisfied for $v$, and $u$ will never consider $v$ in line \ref{alg3:flip2}.
    Now, assume that $u$ and $v$ have both degree $\Delta$, and at least one of them, say~$v$, is not a local extremum w.r.t.\ $x$.
    Then, similarly as before, both $v.a$ and $v.b$ are in $\{0,\ldots,\Delta-1\}$. Hence the condition in line \ref{alg3:flip1} is not satisfied for $v$, and $u$ will never consider $v$ in line \ref{alg3:flip2}.
\end{proof}

While \Cref{lem:who_flips} provides a necessary condition for an edge to be flipped, we now prove that an even stronger condition must hold.

\begin{lemma}\label{lem:flipping}
    Let $v$ be a node, and let $u_1,\ldots,u_\Delta$ be its neighbors. Assume that an edge $\{u_i,v\}$ is flipped by either $u_i$ or $v$, for some $i\in\{1,\dots,\Delta\}$. Let $T$ be the first time step in which either $u_i$ performed a $\imm$ operation in which $u_i.f$ contains the $\id{v}$, or $v$ performed a $\imm$ operation in which $v.f$ contains the $\id{u_i}$. Let $T'' > T' \ge T$ be two time steps at which $v$ is scheduled (that is, $v$ performed two $\imm$ operations after the flip).
    Let $s_1,\ldots,s_\Delta$ be the states of the neighbors of $v$ as read by $v$ during the $\imm$ operation of $v$ performed at step $T''$, and let $s$ be the state of $v$. The following holds. 
    \begin{enumerate}
        \item Node $v$ is the only  neighbor of $u_i$ that is local extremum w.r.t.\ $x$;
        \item Node $u_i$ is the only neighbor of $v$ thath is local extremum w.r.t.\ $x$;
        \item Node $v$ has seen all neighbors, and there are $\Delta$ of them (that is, $\bigwedge_{j=1}^{\Delta} (s_j \neq \bot)$);
        \item The $a$ and $b$ variables of all nodes in $\{u_1,\ldots,u_\Delta\}\smallsetminus\{u_i\}$ are in $\{0,\ldots,\Delta-1\}$, that is, \\ $\cup_{j\in\{1,\dots,\Delta\}\smallsetminus\{i\}} \{s_j.a,s_j.b\}\subseteq \{0,\ldots,\Delta-1\}$;
        \item The boolean variables $s.\alpha$ and $s.\beta$ are both true;
        \item $\bigwedge_{j=1}^{\Delta} \big (( (s_j.\alpha) \lor (|\textsc{Smaller}(s_j,[s])| = 1)) \land  ((s_j.\beta)\lor (|\textsc{Larger}(s_j,[s])| = 1))\big )=\mathsf{true}$.
    \end{enumerate}
\end{lemma}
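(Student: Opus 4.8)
The plan is to prove Lemma~\ref{lem:flipping} by carefully tracing the chain of conditions that must have held at the time step~$T$ when the edge~$\{u_i,v\}$ was first flipped, and then arguing that these conditions are stable (i.e., preserved) through to the later step~$T''$. The starting point is \Cref{lem:who_flips}: for the edge to be flipped at all, both~$u_i$ and~$v$ must have degree exactly~$\Delta$, and must be local extrema w.r.t.~$x$ (one a local minimum, the other a local maximum). So immediately we know~$u_i$ and~$v$ are local extrema of opposite type. This already gives the ``one maximum and one minimum'' structure; what remains is to upgrade this to the much stronger uniqueness and special-neighborhood claims (1)--(6).

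First I would examine the flip itself. Inspecting lines~\ref{alg3:flip1}--\ref{alg3:flip2}, a node~$w$ only adds a neighbor's identifier to~$w.f$ during an iteration in which~$w$ has~$(w.a = \Delta) \lor (w.b = \Delta)$, and it only adds the identifiers of neighbors that themselves satisfy~$(s_i.a=\Delta)\lor(s_i.b=\Delta)$. The key combinatorial observation is that~$w.a = \Delta$ (resp.~$w.b=\Delta$) can occur only when~$w$ has~$\Delta$ larger (resp.~smaller) neighbors, all of which already carry distinct values in~$\{0,\dots,\Delta-1\}$, forcing~$w$ to pick the value~$\Delta$. Consequently, \emph{at the moment of the flip}, among the neighbors of~$v$ exactly one (namely~$u_i$) can have an~$a$- or~$b$-value equal to~$\Delta$: if a second neighbor also had value~$\Delta$, that neighbor would itself need~$\Delta$ extremal neighbors, but by \Cref{lem:who_flips} the flip forces~$u_i$ to be a local extremum and a node of degree~$\Delta$ cannot be adjacent to two local extrema of the ``right'' type without violating the value-$\Delta$ requirement. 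This is the argument that yields claims~(1) and~(2): $v$ is the unique extremal neighbor of~$u_i$, and~$u_i$ is the unique extremal neighbor of~$v$. I would state this as the crux and prove it by the counting/pigeonhole argument on how the value~$\Delta$ is attained.

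Next I would establish the structural claims~(3)--(6) at the later step~$T''$. The idea is that~(3) $v$ has seen all~$\Delta$ neighbors, and~(4) all neighbors other than~$u_i$ have both their~$a$- and~$b$-values in~$\{0,\dots,\Delta-1\}$, are exactly the conditions under which~$v$ could have reached its extremal value~$\Delta$ in the first place: to set~$a$ or~$b$ to~$\Delta$, node~$v$ must have read~$\Delta$ distinct values~$\{0,\dots,\Delta-1\}$ from its neighbors, which requires it to have seen all of them, and forces all those neighbors to be non-extremal (hence in~$\{0,\dots,\Delta-1\}$) except~$u_i$. For claims~(5) and~(6), I would invoke the stated invariant from the paper that ``once a node sets~$\alpha$ (resp.~$\beta$) to true it never changes it, because a node never becomes a local minimum (resp.~maximum) by flipping edges.'' Since~$v$ is a local extremum w.r.t.~$x$, flipping the single edge~$\{u_i,v\}$ reverses the comparison with~$u_i$, so~$v$ now has both a smaller and a larger neighbor (it gains the opposite relation with~$u_i$ while retaining its relations with the~$\Delta-1$ non-extremal neighbors); this makes~$s.\alpha$ and~$s.\beta$ both true and justifies~(5). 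Claim~(6) is then the corresponding neighbor-level condition, which I would verify by the same ``look-ahead'' reasoning described in the paper around~\textsc{SpecialNeighborhood}: for each neighbor~$s_j$, either it has already recorded the relevant boolean, or~$v$ can certify it via~$|\textsc{Smaller}(s_j,[s])|=1$ / $|\textsc{Larger}(s_j,[s])|=1$, because~$v$ knows its own comparison to~$s_j$.

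The main obstacle I anticipate is the \emph{stability across time} of these conditions from~$T$ to~$T''$, combined with the subtle asymmetry in how~$u_i$ and~$v$ schedule. A node's~$a$ and~$b$ values are recomputed each iteration, so I must argue that once a non-extremal neighbor's pair lands in~$\{0,\dots,\Delta-1\}$ it cannot later jump to~$\Delta$ (this is where \Cref{lem:who_flips} is reused: only extrema of degree~$\Delta$ ever reach value~$\Delta$, and being an extremum w.r.t.~$x$ is a fixed property of the input coloring, unaffected by~$f$), and that the~$\alpha,\beta$ flags are monotone as asserted. The delicate point is handling the case where~$u_i$ has been scheduled but~$v$ has not yet been seen by some neighbor, so that a neighbor's recorded flag lags behind reality; here I must lean on the look-ahead predicates in condition~(6) exactly as the paper motivates, checking that~$v$'s local view at step~$T''$ suffices to certify the global structural facts. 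I would therefore organize the proof as: (i) analyze the single flip step to get uniqueness~(1)--(2); (ii) derive~(3)--(4) from the necessary preconditions for attaining value~$\Delta$; (iii) derive~(5)--(6) from edge-flipping's effect on the extremum status together with the monotonicity invariant; and throughout, (iv) justify persistence of each property to step~$T''$ using that extremum-w.r.t.-$x$ is input-determined and that value~$\Delta$ is attainable only by degree-$\Delta$ extrema.
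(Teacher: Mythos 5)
Your overall skeleton (invoke \Cref{lem:who_flips}, analyze the preconditions for a variable to reach the value $\Delta$, then argue persistence to step $T''$) matches the paper's, and your treatment of claims (3), (5) and (6) is essentially the paper's argument. But your proof of the crux, claims (1) and (2), has a genuine gap: you argue about which neighbors of $v$ can \emph{hold a value equal to $\Delta$} (``among the neighbors of $v$ exactly one can have an $a$- or $b$-value equal to $\Delta$''), whereas what must be proved is about which neighbors are \emph{local extrema w.r.t.\ the input coloring $x$} --- a static property of the input that a node can enjoy while holding perfectly innocuous values. Concretely, the scenario you must exclude is: a second neighbor $u_j$ of $v$ (besides $u_i$) is a local maximum w.r.t.\ $x$. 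Such a $u_j$ never needs to reach value $\Delta$, so your pigeonhole argument does not touch it. (Your supporting claim is also off: a node reaches $b=\Delta$ when it has $\Delta$ \emph{smaller} effective neighbors showing distinct $b$-values in $\{0,\dots,\Delta-1\}$, not ``$\Delta$ extremal neighbors''.)

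The actual work needed, and what the paper does: at the step $t'$ where $v$ computes $a=\Delta$, all $\Delta$ neighbors of $v$ show distinct $a$-values covering $\{0,\dots,\Delta-1\}$, so every neighbor except one has $a>0$. A neighbor $w$ with $a>0$ has at some point seen a neighbor larger than itself in the \emph{effective} (flip-dependent) ordering; if $w$ were a local maximum w.r.t.\ $x$, this could only result from an earlier flip of some edge $\{w,w'\}$ with $w'\neq v$ (it cannot be $\{w,v\}$, since before $t'$ node $v$ has $v.a\neq\Delta$ and $v.b=0$, so the condition at line \ref{alg3:flip1} never holds for that edge). The paper then derives a contradiction in both sub-cases: if $w'$ has no other flipped incident edge, then $w'.b=0$ forever, and since also $v.b=0$, node $w$ has two smaller neighbors stuck at $b=0$ and can never have reached $b=\Delta$, so $\{w,w'\}$ could not have flipped; if $w'$ does have another flipped incident edge, then $w'$ gains a smaller effective neighbor and $w'.a\le\Delta-1$, again blocking the flip of $\{w,w'\}$. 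This history-of-flips analysis is exactly what your ``counting/pigeonhole on how the value $\Delta$ is attained'' would need to replace, and nothing in your outline substitutes for it: your stability step (iv) leans on extremality w.r.t.\ $x$ being ``unaffected by $f$'', which is true but beside the point, because the danger is precisely that flips elsewhere in the graph let an $x$-extremal neighbor of $v$ masquerade as non-extremal (hold $a>0$) and thereby enable $v$ to reach $\Delta$ next to two local maxima.
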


\begin{proof}
    By \Cref{lem:who_flips}, if $\{u,v\}$ is flipped, then  $u$ and $v$ have degree $\Delta$, and they are both local extrema w.r.t.~$x$. W.l.o.g., assume that $v$ is a local minimum, and that $u$ is a local maximum w.r.t.~$x$. 
    Observe that $a+b \le \Delta$ is always be satisfied. Hence, for the flip to happen, it must hold that $(v.a,v.b) = (\Delta,0)$ and $(u.a,u,b) = (0,\Delta)$. Let $t$ be the first time at which $v$ performs a $\imm$ operation in which $v.a = \Delta$. Observe that point~(3) must be satisfied.
    By the definition of~$t$, during time steps $1,\ldots,t-1$, it holds that $v.a \neq \Delta$. Observe that $v.b$ is initialized to $0$, and by an inductive argument, since $v$ is a local minima, we get that in all time steps in $1,\ldots,t-1$, no edges are flipped and hence $v.b$ stays $0$.
    
    Let $t'$ be the last time in which $v$ is scheduled before step $t$, that is, the time in which $a$ has been updated to $\Delta$. At time $t'$, $v$ performed a $\imm$ operation. It saw all its neighbors, and they must have their values of $a$ all different in $\{0,\ldots,\Delta-1\}$. This is because otherwise $v$ would have not set $a$ to $\Delta$. Consider all neighbors $W$ of $v$ that  have $a > 0$ at time $t'$. 
    
    According to the ordering relation at time $t'$, all nodes in $W$ are not local maxima, and they have $\beta$ set to true. Let us show that all nodes in $W$ have never been local maxima before step~$t'$, implying that they are not local maxima w.r.t.\ $x$. 
    
    For a node $w$ to be a local maximum w.r.t.\ $x$, and not being a local maximum at time $t'$, it must hold that an edge incident to $w$ has been flipped. This edge cannot be $\{w,v\}$ because, before time $t'$, $v.a \neq \Delta$ and $v.b = 0$. So it must be some other edge $\{w,w'\}$ with $w' \neq v$. For this edge to be flipped, by \Cref{lem:who_flips}, it must hold that $w'$ is a local minimum w.r.t.\ $x$, and $w$ is a local maximum w.r.t.\ $x$. We consider two possible cases. 
    
    - If $w'$ has no other incident edges flipped up to time $t'$, then $w'.b = 0$ at any point in time, implying that $w$ would have never set $w.b = \Delta$, since it has two neighbors with $b =0$ (that is, nodes $w'$ and $v$), reaching a contradiction. 
    
    - If $w'$ has some other edge that is flipped, then $w'$ got one smaller neighbor and updated $a$ to a value that is at most $\Delta - 1$, and hence $\{w,w'\}$ would not be flipped, reaching a contradiction. 
    
    We obtain that all nodes in $W$ are not local maxima w.r.t.\ $x$. Note that they cannot be local minima w.r.t.\ $x$ either, because they are neighbors of $v$, which is a local minimum w.r.t.\ $x$. Therefore, we obtain that nodes in $W$ are not local extrema w.r.t.\ $x$, proving points~(1) and~(2). We now prove that all the other conditions are also satisfied. 
    
    Since all nodes in $W$ are not local extrema w.r.t.\ $x$, and by \Cref{lem:who_flips}, they will never be local extrema, and will have their variables $a$ and $b$ in $\{0,\ldots,\Delta-1\}$. Hence, point~(4) is satisfied.
    
    We proved that all nodes in $W$ have their $\beta$ variable set to true. Moreover, for every node $w \in W$, it holds that $|\textsc{Smaller}(s_w,[s_v])| = 1)$.  Hence, point~(6) is satisfied for all nodes in $W$.
    
    Since $u_i$ is a local maximum w.r.t.\ $x$, and since it must have seen all its neighbors, we get that  $u_i.\alpha=\mathsf{true}$.  Since $v$ is a local minimum, and since it must have seen all neighbors, we also get that  $v.\beta=\mathsf{true}$. 
    At time $T$, the flip is written by $u_i$ or~$v$. At time at most~$T'$, $v$ is aware of the flip. Hence, according to the new ordering, $v$ knows that it has a larger neighbor (node~$u_i$), and hence $v$ sets $v.\alpha$ to true. This value is written at time $T''$. Also, according to the new ordering, $|\textsc{Larger}(s_{u_i},s_v)| = 1$. Hence, point~(5) is satisfied, and point~(6) is satisfied too, for $u_i$.
\end{proof}

\Cref{lem:flipping} implies the following two corollaries that we will use later.

\begin{corollary}\label{cor:nodes_stay_mid}
     If a node $v$ has at least one smaller (resp., larger) neighbor according to~$x$, then it never happens that $v$ stops with at least one smaller (resp., larger) neighbor, even after possible flipping of edges.
\end{corollary}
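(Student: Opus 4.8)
The plan is to prove the smaller-neighbor statement; the larger-neighbor statement follows by the symmetric argument, exchanging the roles of ``smaller''/``larger'', ``minimum''/``maximum'', and $a$/$b$. Having at least one smaller neighbor according to $x$ is exactly the assertion that $v$ is not a local minimum w.r.t.\ $x$, so what we must show is that no sequence of edge flips can ever leave $v$ with all of its surviving neighbors larger according to the current (possibly flipped) ordering. The whole argument rests on two facts: that $x$ is fixed throughout the execution, so the set of neighbors of $v$ that are local extrema w.r.t.\ $x$ is determined once and for all; and \Cref{lem:who_flips}, which guarantees that only an edge joining two local extrema (one maximum and one minimum) can ever be flipped.

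First I would split on the position of $v$ in the original ordering. If $v$ is \emph{not} a local extremum w.r.t.\ $x$, then by \Cref{lem:who_flips} no edge incident to $v$ is ever flipped, so the comparison between $v$ and each of its neighbors always coincides with the original comparison by $x$; since $v$ had at least one smaller neighbor, it retains it forever. The delicate case is when $v$ is a local maximum w.r.t.\ $x$, so that \emph{every} neighbor is smaller according to $x$ and, a priori, flips could reverse the order of several of them.

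To control this case I would invoke \Cref{lem:flipping}: as soon as some incident edge $\{u_i,v\}$ is flipped, point~(2) of that lemma guarantees that $u_i$ is the \emph{only} neighbor of $v$ that is a local extremum w.r.t.\ $x$. Since being a local extremum w.r.t.\ $x$ is a time-independent property, and since \Cref{lem:who_flips} allows a flipped edge only between two local extrema, it follows that $u_i$ is the unique neighbor of $v$ whose incident edge can ever be flipped; all other edges incident to $v$ remain unflipped for the entire execution. Moreover, \Cref{lem:who_flips} forces $v$ to have degree exactly $\Delta\geq 2$ whenever any of its edges is flipped, so $v$ has at least one neighbor $u_j$ with $j\neq i$, whose comparison with $v$ stays the original one; as $v$ is a local maximum w.r.t.\ $x$, node $u_j$ remains smaller than $v$. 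Hence $v$ always keeps at least one smaller neighbor, as desired.

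The main obstacle is precisely the local-maximum case, where naively many of $v$'s neighbors could flip from ``smaller'' to ``larger''; the crux is that \Cref{lem:flipping}(2) pins the number of flippable incident edges to at most one, leaving the remaining $\Delta-1\geq 1$ smaller neighbors untouched. The rest is bookkeeping, together with the symmetric treatment of the larger-neighbor statement.
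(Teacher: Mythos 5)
Your proposal is correct and takes essentially the same approach as the paper: the paper compresses the entire argument into a single sentence invoking points~(1) and~(2) of \Cref{lem:flipping}, namely that a flipped edge must join two local extrema w.r.t.~$x$ that have no other local-extremum neighbors, which is precisely the fact your case analysis spells out. Your added details (the time-invariance of the ordering by~$x$, the non-extremum case via \Cref{lem:who_flips}, and the degree-$\Delta\geq 2$ bound guaranteeing a second, unflippable smaller neighbor) just make explicit what the paper leaves implicit.
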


\begin{proof}
    By point~(1) and~(2) of \Cref{lem:flipping}, for an edge $\{u,v\}$ to be flipped, $u$ and $v$ must be local extrema w.r.t.~$x$, and they have no other neighbors that are local extrema w.r.t.~$x$. 
\end{proof}

\begin{corollary}\label{lem:oneflip}
    Assume node $v$ satisfies the condition at line \ref{alg2:if} of \Cref{alg:savecolors}, but not the condition at line \ref{alg3:if} of \Cref{alg:alg3}. Then, at the time in which~$v$ executes line \ref{alg3:if} of \Cref{alg:alg3}, we have $v.f=\varnothing$.
\end{corollary}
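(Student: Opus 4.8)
The plan is to reduce the statement to a structural fact about the pair $(v.a,v.b)$ and then feed it into \Cref{lem:flipping}. First I would observe that the only way the two hypotheses can hold at the same time is that $v$'s pair is one of the two \emph{extreme} pairs $(\Delta,0)$ or $(0,\Delta)$. Indeed, the hypothesis says that the raw pair of $v$ differs from every neighbor's raw pair, and yet, after applying $\textsc{Map}$, it collides with the mapped pair of some neighbor $u$. Since $\textsc{Map}$ is the identity except that it sends $(\Delta,0)$ to $(0,\Delta)$, two distinct pairs can be identified only when they are exactly $(\Delta,0)$ and $(0,\Delta)$. Hence $\{(v.a,v.b),(u.a,u.b)\}=\{(\Delta,0),(0,\Delta)\}$, and in particular $(v.a,v.b)\in\{(\Delta,0),(0,\Delta)\}$ at the moment $v$ executes line~\ref{alg3:if}.

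The core of the argument is then the claim that a node whose flip set is nonempty never presents an extreme pair when it reaches line~\ref{alg3:if}. To prove this I would first record that $a+b\le\Delta$ always holds, because every neighbor that $v$ has already seen lies in exactly one of $\textsc{Smaller}$ and $\textsc{Larger}$ (the input coloring is proper, so neighboring $x$-values are distinct), while $a\le|\textsc{Larger}|$ and $b\le|\textsc{Smaller}|$. Consequently line~\ref{alg3:flip1}, and hence any growth of $f$, can fire only when $v$ itself holds an extreme pair. Now suppose $v$ records a flip of an edge $\{v,u_i\}$ at some iteration; by \Cref{lem:who_flips} node $v$ has degree exactly $\Delta$, and by \Cref{lem:flipping}(1)--(2) node $v$ is a local extremum w.r.t.\ $x$ whose \emph{unique} extremum neighbor is $u_i$, so no further edge of $v$ can ever be flipped and $v$'s Smaller/Larger classification of its neighbors is frozen from then on. Treating the two symmetric cases: if $v$ is a local minimum, then after the flip $u_i$ is its only $\textsc{Smaller}$ neighbor and the remaining $\Delta-1$ neighbors are $\textsc{Larger}$, so the recomputation at lines~\ref{alg3:update_a}--\ref{alg3:update_b} forces $a\le\Delta-1$ and $b\le 1$; if $v$ is a local maximum, the roles of $a$ and $b$ swap, giving $a\le 1$ and $b\le\Delta-1$. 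In either case, using $\Delta\ge 2$, the pair is neither $(\Delta,0)$ nor $(0,\Delta)$, and since the classification is frozen this stays true at line~\ref{alg3:if} in every subsequent iteration.

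Finally I would assemble the two pieces. Within the single iteration where $v$ first records a flip, the test at line~\ref{alg3:if} is evaluated \emph{before} line~\ref{alg3:flip2} modifies $f$, so at that test $v.f$ is still empty even though the checked pair may still be extreme; from the next iteration on the checked pair is non-extreme by the claim. Therefore, whenever $v$ executes line~\ref{alg3:if} while its pair is extreme, we must have $v.f=\varnothing$. Combining this with the reduction of the first paragraph yields the corollary. The main obstacle I anticipate is purely the bookkeeping of the intra-iteration order of operations, namely that the flip update precedes the recomputation of $a$ and $b$ (so the recomputation already sees the flipped ordering) while the termination test precedes both, together with checking $a+b\le\Delta$ and keeping the local-minimum and local-maximum cases straight without sign errors; the genuinely substantive input, that a flipped node has a unique extremum neighbor, is already supplied by \Cref{lem:flipping}.
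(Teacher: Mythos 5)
Your proof is correct and follows essentially the same route as the paper's: reduce the premise to the fact that $(v.a,v.b)$ must be one of the extreme pairs $(\Delta,0)$ or $(0,\Delta)$, then use \Cref{lem:who_flips} and \Cref{lem:flipping}(1)--(2) to show that any recorded flip leaves $v$ with a unique opposite-extremum neighbor, forcing $a\le\Delta-1$ and $b\le 1$ (or the swapped bounds) at every later recomputation, which contradicts extremality. You are merely more explicit than the paper about two points it leaves implicit — the intra-iteration ordering (the test at line~\ref{alg3:if} precedes the update of $f$ at line~\ref{alg3:flip2}) and the derivation of the bounds placing $v.a,v.b$ in $\{0,\ldots,\Delta-1\}$ — so no further changes are needed.
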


\begin{proof}
    If the premise holds, then $v$ flips one of its incident edges, say edge $\{u,v\}$. We now apply \Cref{lem:flipping} on the edge $\{u,v\}$, and we obtain that $u$ is the only neighbor of $v$ that is a local extremum w.r.t.~$x$. By applying \Cref{lem:who_flips}, we obtain that, for every node $u_i \in N(v)$ distinct from~$u$, $u_i.z \notin v.f$. Moreover, if $u \in v.f$, then $v.a$ and $v.b$ are both in $\{0,\ldots,\Delta-1\}$, contradicting the premise.
\end{proof}

\subsection{Proof of Correctness} 

Before showing that our algorithm is correct, we first prove a property that is useful for showing correctness. More precisely, we show that, once a neighborhood satisfies the property of being special, it can never happen that, after additional scheduling steps, the neighborhood is not special anymore.

\begin{lemma}\label{lem:standard}
    A special neighborhood $(v,\{u_1,\ldots,u_\Delta\})$ never becomes non-special.
\end{lemma}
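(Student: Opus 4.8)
The plan is to show that each of the four clauses defining $\textsc{SpecialNeighborhood}(s,(s_1,\ldots,s_\Delta))$ is preserved under any further scheduling, so that once the conjunction holds it holds at every later read by~$v$. Clauses~(1) and~(2) are immediate: a register never reverts to~$\bot$ once written, so the set of neighbors that $v$ has already seen can only grow, and the degree of $v$ is a fixed structural quantity equal to~$\Delta$. The substance is therefore to control clause~(3) (all $a,b$-values of $v$ and its neighbors lie in $\{0,\ldots,\Delta-1\}$) and clause~(4) (the $\alpha,\beta$ conditions), and the crux is to establish the single invariant that, from the first moment the neighborhood is special, no edge incident to $v$ — nor any edge incident to any $u_i$ — ever flips again, and every node of the closed neighborhood keeps both a smaller and a larger neighbor.

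First I would record the monotone facts feeding clause~(4): the Boolean variables $\alpha$ and $\beta$ are set to \textbf{true} at most once and never reset, so $s.\alpha \land s.\beta$ survives forever, and likewise each disjunct $s_i.\alpha$, $s_i.\beta$ survives once true. For the auxiliary disjuncts $|\textsc{Smaller}(s_i,[s])| = 1$ and $|\textsc{Larger}(s_i,[s])| = 1$, I note that $\textsc{Smaller}$ and $\textsc{Larger}$ depend only on the fixed input colors~$x$ and on the flip status of the edge $\{u_i,v\}$; hence these disjuncts persist as long as that edge does not flip, and moreover the next time $u_i$ is scheduled it reads~$v$ (already non-$\bot$) on the unflipped side and permanently sets the corresponding $\alpha$ or $\beta$. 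So clause~(4) reduces to the same no-flip invariant.

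The core step is a bounded-value argument. Because $a$ is computed at \Cref{alg3:update_a} as the least natural number avoided by the $\textsc{Larger}$ neighbors and $b$ symmetrically at \Cref{alg3:update_b}, a node has $a \le \Delta-1$ whenever it has at least one smaller neighbor, and $b \le \Delta-1$ whenever it has at least one larger neighbor. For~$v$, clause~(4) gives $s.\alpha \land s.\beta$, and by \Cref{cor:nodes_stay_mid} (together with the permanence of flips, using \Cref{lem:who_flips} and \Cref{lem:flipping} to handle the case where $v$ is a local extremum w.r.t.\ $x$ whose single flipped edge supplies the missing side) $v$ retains a smaller and a larger neighbor forever; hence $v.a,v.b \le \Delta-1$ at all times. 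The per-neighbor conjunction in clause~(4) supplies, for each $u_i$, one smaller and one larger neighbor in the same persistent sense, so $u_i.a,u_i.b \le \Delta-1$ at all times as well. Consequently the guard $(a=\Delta)\lor(b=\Delta)$ at \Cref{alg3:flip1} is never met by $v$ or by any $u_i$, and by the condition $(s_i.a=\Delta)\lor(s_i.b=\Delta)$ at \Cref{alg3:flip2} no neighbor can add one of these nodes to its flip set either; thus no edge incident to the closed neighborhood ever flips. This freezes every relevant ordering, which retroactively justifies the persistence used above, closes clause~(4), and keeps clause~(3) satisfied since all the bounded values stay in $\{0,\ldots,\Delta-1\}$.

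The main obstacle I anticipate is the apparent circularity between ``no future flips'' and ``$a,b$ stay below~$\Delta$'': flips are what could create $a=\Delta$ or $b=\Delta$, yet it is precisely $a=\Delta$ or $b=\Delta$ that is needed to trigger a flip. I would break it by treating the two as a single invariant maintained jointly from the special moment onward — the bound $a,b \le \Delta-1$ holds at the special moment by clause~(3), it is preserved at each subsequent update by the persistent smaller/larger-neighbor property, and since the flip guard is gated exactly on reaching value~$\Delta$, preservation of the bound is self-sustaining. The secondary delicate point is that the persistence invoked from \Cref{cor:nodes_stay_mid} is stated for the $x$-ordering, whereas $\alpha$ may have become true through a flip; I would dispose of this by the short case analysis noted above, observing that a flipped edge contributing the smaller (or larger) neighbor is permanent, so the neighbor it supplies can never be lost.
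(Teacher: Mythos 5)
Your proof is correct and follows essentially the same route as the paper's: an invariant maintained from the first moment the neighborhood is special, whose core is that every node of the closed neighborhood permanently keeps at least one smaller and at least one larger neighbor (via \Cref{cor:nodes_stay_mid} and \Cref{lem:who_flips,lem:flipping}), which forces every subsequent $a,b$-update to stay in $\{0,\ldots,\Delta-1\}$ and hence preserves all clauses of $\textsc{SpecialNeighborhood}$. The only difference is one of emphasis: you make explicit the ``no further flips incident to the closed neighborhood'' invariant (gated on the $a=\Delta \lor b=\Delta$ guard of \Cref{alg3:flip1,alg3:flip2}), which the paper's shorter argument leaves implicit in its appeal to \Cref{cor:nodes_stay_mid}; this is a harmless, and arguably clarifying, refinement rather than a different approach.
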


\begin{proof}
    Let $t$ a time in which a neighborhood becomes special. We prove by induction that the neighborhood is special at step $t+i$ for any $i$. The base case $i=0$ holds by assumption. Assume that the neighborhood is special at step $t+i$.
    By the definition of special neighborhood, and by \Cref{cor:nodes_stay_mid}, it must hold that, at time $t+i$, all nodes in $U = \{v,u_1,\ldots,u_\Delta\}$ satisfy the property that they have at least one smaller neighbor, and at least one larger neighbor, according to the ordering induced by the flipping status of the edges at time $t+i$. This implies that the nodes in $U$ have the $a$ and $b$ variables in $\{0,\ldots,\Delta-1\}$ at step $t+i+1$. Indeed, in order to have $a = \Delta$ (resp., $b = \Delta$), a node needs to have $\Delta$ larger (res., smaller) neighbors at step $t+i$, which is false by assumption. Finally, by \Cref{cor:nodes_stay_mid}, nodes in $U$ have at least one smaller  neighbor, and at least one larger neighbor at step $t+i+1$. Thus, the neighborhood remains special. 
\end{proof}

We now prove the correctness of our algorithm.
Observe that the pair $(\Delta,0)$ is never returned, and hence the amount of colors is exactly one less than \Cref{alg:savecolors}. We now prove that, if two neighboring nodes terminate, then they obtain different pairs. There are two possible cases where the algorithm terminates, at line \ref{alg3:ret1}, and line \ref{alg3:ret2}. If two neighboring nodes both terminate in line \ref{alg3:ret1}, then their pairs are different for the same reason as in the case of \Cref{alg:savecolors}. Moreover, two neighboring nodes cannot both terminate at line \ref{alg3:ret2}. Indeed, by the definition of the function $\textsc{SpecialTermination}$, for a node to terminate at line \ref{alg3:ret2}, it must be a local maximum w.r.t.~$x$, and two neighboring nodes cannot both be local maxima w.r.t.~$x$.
If a node $v$ terminates at line \ref{alg3:ret2}, and a neighbor $u$ of $v$ terminates at line \ref{alg3:ret1}, then, by \Cref{lem:standard}, $v$ will always be part of a special neighborhood, implying that $\{u.a,u.b\} \subseteq \{0,\ldots,\Delta-1\}$, from which it follows that $u$ does not terminate with $(0,\Delta)$ at line \ref{alg3:ret1}.

\subparagraph{Runtime.}

In the remaining of the section we prove a bound on the runtime of the algorithm. We start by proving a lemma, which is an adaptation of \Cref{lem:ub_no_b_change} to the new algorithm.

\begin{lemma}\label{lem:ub_no_b_change_2}
    Let $\mathcal{S} = S_1,S_2,\ldots$ be a scheduling. Let $r>q$  be indices such that $v \in S_{q}$, $v \in S_{r}$, and $|\{ j\in\{q,\dots,r\} \mid v \in S_j\}| = \Delta+3$. Assume that:
    \begin{enumerate}
        \item For all $j\in\{q,\dots,r\}$ such that $v \in S_j$, node $v$ has the same value $s.b$ when performing the $\imm$ operation at step~$j$. 
        
        \item For all $j\in\{q,\dots,r\}$, node $v$ has the same value $s.f$ when performing the $\imm$ operation at step~$j$. That is, $v$ does not flip additional edges in this time interval.
        
        \item For every node $u\in N(v)$ one of the two cases below holds:
        \begin{itemize}
            \item $v\in u.f$ at all the $\imm$ operations performed in the interval $[q,r]$; 
            
            \item $v\notin u.f$ at all the $\imm$ operations performed in the interval $[q,r]$.  
        \end{itemize}
        That is, the neighbors of $v$ do not flip additional edges incident to $v$ in the time interval $[q,r]$.
    \end{enumerate}
    Then, there exists $t \le r$ such that $v$ terminates at step~$t$.
\end{lemma}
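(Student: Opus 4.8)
The plan is to mirror the proof of \Cref{lem:ub_no_b_change} as closely as possible, treating the frozen flip status as the device that turns the variable custom ordering into a fixed one, and then to deal separately with the single genuinely new ingredient, the $\textsc{Map}$ identification. First I would let $q = k_1 < k_2 < \cdots < k_{\Delta+3} = r$ be the $\Delta+3$ steps at which $v$ is scheduled in $[q,r]$, and define $i$ to be the smallest index such that every neighbor of $v$ scheduled in the window $(k_{i-1},k_i]$ was already scheduled at least twice in $[q,k_i]$. The existence of such an $i$ with $i\le\Delta+2$ follows by the very same pigeonhole argument as in \Cref{lem:ub_no_b_change}, since that argument concerns only the scheduling and is insensitive to the ordering. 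The goal is then to show that $v$ terminates no later than step $k_i$.

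The core of the argument is the claim that, at step $k_i$, every neighbor $u$ of $v$ satisfies $u.b \neq v.b$, hence $(u.a,u.b)\neq(v.a,v.b)$. Here assumptions~(2) and~(3) do the crucial work: together they fix the flip status of every edge incident to $v$ throughout $[q,r]$, so the $\textsc{Smaller}/\textsc{Larger}$ relation between $v$ and each of its neighbors is constant on the interval and behaves exactly like the fixed $x$-ordering of \Cref{lem:ub_no_b_change}. In particular the relation stays symmetric ($v$ treats $u$ as larger iff $u$ treats $v$ as smaller), which is all the original proof needed. With this fixed ordering and the constancy of $v.b$ from assumption~(1), the two-group analysis transfers verbatim: a neighbor $u$ that $v$ treats as larger has $v$ as a smaller neighbor, so $u$'s update at line~\ref{alg3:update_b} excludes $v.b$ and yields $u.b\neq v.b$; a neighbor $u$ that $v$ treats as smaller would, if $u.b=v.b$ held at $k_i$, force $v$ to change $v.b$ at step $k_{i+1}$ (which exists because $i\le\Delta+2$), contradicting assumption~(1). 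Neighbors not scheduled in $(k_{i-1},k_i]$ are disposed of exactly as before, being either unwritten (hence ignored) or already avoided at step $k_{i-1}$.

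It remains to upgrade ``the pairs differ'' to ``the $\textsc{Map}$-ped pairs differ'', which is the only place where \Cref{alg:alg3} departs from \Cref{alg:savecolors}. By definition of $\textsc{Map}$, a collision $\textsc{Map}(v.a,v.b)=\textsc{Map}(u.a,u.b)$ with distinct underlying pairs can occur only when $\{(v.a,v.b),(u.a,u.b)\}=\{(\Delta,0),(0,\Delta)\}$. I would rule this out at step $k_i$ by a flip-contradiction. In the subcase $(v.a,v.b)=(\Delta,0)$, the value $v.a=\Delta$ forces all $\Delta$ neighbors of $v$ to lie on the larger side of the custom ordering; using \Cref{lem:who_flips} and \Cref{lem:flipping} (only the extremal partner edge can be flipped, and flipping it would move that neighbor to the smaller side, capping $v.a$ below $\Delta$), the incident edge $\{u,v\}$ must be unflipped. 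Were $v$ not to terminate at $k_i$, it would enter the \textbf{else} branch, trigger the flip condition at line~\ref{alg3:flip1} (since $v.a=\Delta$), and add $u.z$ to $v.f$ at line~\ref{alg3:flip2} (since $u.b=\Delta$): a genuine new flip, contradicting assumption~(2). The symmetric subcase $(v.a,v.b)=(0,\Delta)$ is handled the same way, with the roles of $a$ and $b$ exchanged and the flip now triggered by $v.b=\Delta$. Hence no surviving $\textsc{Map}$-collision can block termination, and $v$ terminates at some step $t\le k_i\le r$.

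I expect the main obstacle to be precisely this last step, the interaction between the $\textsc{Map}$ identification and the flip mechanism, which has no analogue in \Cref{lem:ub_no_b_change}. The delicate point is establishing, in each problematic subcase, that the relevant incident edge is necessarily \emph{unflipped}, so that the would-be flip at line~\ref{alg3:flip2} is a true change to $v.f$ and therefore collides with assumption~(2); this is exactly what \Cref{lem:who_flips} and \Cref{lem:flipping} are for. A secondary, more routine care is to confirm that freezing the flip status via assumptions~(2) and~(3) really does make the symmetric ``larger/smaller'' correspondence hold uniformly across $[q,r]$, since that symmetry is the silent hypothesis that lets the $b$-difference argument of \Cref{lem:ub_no_b_change} be reused without change.
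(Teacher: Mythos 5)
Your proof is correct and follows essentially the same route as the paper's: freeze the ordering via assumptions~(2) and~(3) so that the argument of \Cref{lem:ub_no_b_change} transfers, and rule out the one new failure mode---a $\textsc{Map}$ collision between $(\Delta,0)$ and $(0,\Delta)$---by showing it would force a genuinely new flip, contradicting assumption~(2). The only difference is presentational: your inline argument that the offending edge must be unflipped (via \Cref{lem:who_flips} and \Cref{lem:flipping}) is exactly what the paper packages as \Cref{lem:oneflip}, which its proof then invokes directly.
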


\begin{proof}
    Observe that if, in some time step in the interval $[q,r]$, the condition at line \ref{alg2:if} of \Cref{alg:savecolors} holds while the condition at line \ref{alg3:if} of \Cref{alg:alg3} does not hold, then $v$ immediately decides to flip an edge. By \Cref{lem:oneflip} that edge is not already flipped, and hence point~(2) of the premise of the lemma would not be satisfied.
    Hence, if the premise of the lemma holds, then the terminating condition at line \ref{alg2:if} of \Cref{alg:savecolors} always behaves in the same way as the condition at line \ref{alg3:if} of \Cref{alg:alg3}. 
    
    Moreover, the variables $a$ and $b$ are updated with the same rules as in \Cref{alg:savecolors},  except from the fact that a different ordering relation is used. Note that, by points~(2) and~(3) in the premise, the ordering relation between $v$ and its neighbors does not change in the time interval $[q,r]$. Hence, the proof follows in the same way as in \Cref{lem:ub_no_b_change}.
\end{proof}

We will use \Cref{lem:ub_no_b_change_2} to prove (by induction) that nodes terminate in $f(\Delta)$ rounds for some function~$f$. However, we need to handle nodes with incident flipped edges in a special way. Recall that, by \Cref{lem:who_flips}, these nodes can only be local extrema.

\begin{lemma}\label{lem:max_min_term}
    Let $v$ be a node with degree $\Delta$ that is a local maxima, or a local minima w.r.t.~$x$. Then $v$ terminates in $O(\Delta)$ rounds.
\end{lemma}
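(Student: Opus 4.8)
The plan is to exploit the fact that a local extremum of degree $\Delta$ has a \emph{stable} coordinate: a local minimum w.r.t.\ $x$ has no smaller neighbour and therefore keeps $b=0$, while a local maximum has no larger neighbour and therefore keeps $a=0$, at least as long as none of its incident edges is flipped. By \Cref{lem:who_flips} and \Cref{lem:flipping}, such a node can flip at most one edge, namely the one joining it to its unique local-extremum neighbour $u$, which is of the opposite type. I would treat the two cases separately, proving the local-maximum case first (it does not rely on the local minimum) and then the local-minimum case, whose termination will be seen to depend only on the already-bounded behaviour of its flip partner.

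For a local maximum $v$, as long as $v$ does not flip, $v.a=0$ stays constant. I would apply the $a$-symmetric analogue of \Cref{lem:ub_no_b_change_2} (whose proof is verbatim that of \Cref{lem:ub_no_b_change_2} with the roles of $a/b$ and smaller/larger exchanged, and where the only genuine asymmetry, the identification performed by $\textsc{Map}$, cannot create a spurious collision: a non-flipping $v$ reaching $v.b=\Delta$ while a neighbour shows $(\Delta,0)$ would necessarily have flipped). This yields that within $v$'s first $\Delta+3$ schedulings, either $v$ terminates or one of its incident edges becomes flipped. In the latter case \Cref{lem:flipping} shows that $v$'s neighbourhood is \emph{special}, \Cref{lem:standard} shows it stays special, and since $v$ is a local maximum w.r.t.\ $x$ the predicate $\textsc{SpecialTermination}$ turns true, so $v$ returns $(0,\Delta)$ within $O(1)$ further schedulings. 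Either way $v$ terminates in $O(\Delta)$ rounds.

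For a local minimum $v$, the same window argument with $v.b=0$ constant, now using \Cref{lem:ub_no_b_change_2} directly, shows that within $\Delta+3$ schedulings $v$ either terminates or flips its unique edge to a local maximum $u$. After the flip $v$ has exactly one smaller neighbour, $u$, and $\Delta-1$ larger neighbours which by \Cref{lem:flipping} are non-extrema with $a,b\in\{0,\dots,\Delta-1\}$; hence $v.b=\min\mathbb{N}\smallsetminus\{u.b\}$ depends only on $u.b$. By the local-maximum case already established, $u$ reaches its terminal value $(0,\Delta)$ after changing its register only $O(1)$ times, so $v.b$ takes only $O(1)$ distinct values before freezing at $0$; across each maximal stretch on which $v.b$ is constant, $\Delta+3$ schedulings of $v$ force termination by \Cref{lem:ub_no_b_change_2}. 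This gives $O(\Delta)$ rounds after the flip, hence $O(\Delta)$ in total.

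The delicate points I would have to argue carefully are two. First, I must verify premise~(3) of \Cref{lem:ub_no_b_change_2} during the pre-flip window, i.e.\ that no \emph{neighbour} silently flips the edge to $v$ while $v$ itself does not. Here I would use \Cref{lem:oneflip} together with the observation that, since $v.a=\Delta$ (resp.\ $v.b=\Delta$) is the only way $v$ can be seen as flippable, any neighbour-initiated flip of $\{u,v\}$ simultaneously destroys the stability of $v$'s constant coordinate; thus the pre-flip window coincides exactly with the stability window, and the two outcomes ``terminate'' and ``flip'' are exhaustive. Second, I must turn ``$O(1)$ changes of the partner $u$'', which is phrased in $u$'s own schedulings, into a bound expressed in $v$'s schedulings; for this I would note that whenever $u$ is not rescheduled its register, and hence $v.b$, is frozen, so each constant-$b$ stretch of $v$ is chargeable to at most one write of $u$. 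This second point, the asynchronous coupling between a node and its flip partner, is where I expect the main difficulty to lie.
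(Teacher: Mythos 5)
Your overall architecture matches the paper's: a pre-flip window argument using a stable coordinate ($v.b=0$ for a local minimum, $v.a=0$ for a local maximum) together with \Cref{lem:ub_no_b_change_2} and its $a$-symmetric analogue, then \Cref{lem:who_flips}/\Cref{lem:flipping} to restrict what a flip can look like, the $\textsc{SpecialTermination}$ route for a flipped local maximum, and finally the coupling of the local minimum's runtime to the $O(1)$ post-flip writes of its partner. Your two ``delicate points'' are also genuine and are handled in the paper essentially as you suggest.

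However, your local-maximum case has a genuine gap: after the flip you claim that ``\Cref{lem:flipping} shows that $v$'s neighbourhood is special,'' so that $\textsc{SpecialTermination}$ fires within $O(1)$ schedulings. This is not what \Cref{lem:flipping} gives: its point~(4) bounds the $a,b$ variables of all neighbors of $v$ \emph{except} the flip partner $u_i$. If $v$ is a local maximum whose partner $u$ takes no further step after the flip (e.g., $u$ crashes --- the central scenario in a wait-free model), then $u$'s register shows $(\Delta,0)$ forever, the condition $\{s.a,s.b\}\cup\bigcup_i\{s_i.a,s_i.b\}\subseteq\{0,\ldots,\Delta-1\}$ of $\textsc{SpecialNeighborhood}$ is permanently violated, and $\textsc{SpecialTermination}$ never becomes true at $v$. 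Your argument then yields no termination bound at all for $v$ in this execution. The paper closes exactly this hole with an explicit case distinction: either $u$ takes no step during $\Delta+3$ consecutive schedulings of $v$ --- in which case $v.a$ stays constant (post-flip, $u$ is $v$'s unique larger neighbor and it never writes), so the $a$-symmetric version of \Cref{lem:ub_no_b_change_2} makes $v$ terminate through the \emph{normal} return --- or $u$ runs at least once, and only then do both endpoints push their $a,b$ values below $\Delta$ and the $\textsc{SpecialTermination}$ route applies.

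Note also that this gap partially propagates to your local-minimum case, since you invoke ``the local-maximum case already established'' to bound the partner's number of writes; strictly speaking what you need there is only the branch in which the partner of the local maximum does take steps (which holds in that application, since the local minimum is the node being scheduled), so that part can be repaired by quoting the correct branch rather than the full local-maximum statement.
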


\begin{proof}
Let us assume that there is a time interval $[q,r]$ during which $v$ is scheduled for $\Delta+3$ times such that, during  the whole interval, it holds that $v.f=\varnothing$, and $v \notin u.f$ for all $u \in N(v)$.
Observe that points~(2) and~(3) in the premise of \Cref{lem:ub_no_b_change_2} are satisfied.
If $v$ is a local minimum w.r.t.~$x$, then, since $v$ has no flipped edges, $v$~never updates $v.b$, and thus the condition of \Cref{lem:ub_no_b_change_2} applies, from which it follows that $v$ terminates in at most $\Delta+3$ steps. If $v$ is a local maximum then $v$ never updates $v.a$, and thus a symmetric condition to the one of \Cref{lem:ub_no_b_change_2} holds (w.r.t.~$v.a$ instead of w.r.t.~$v.b$), thus $v$ terminates in at most $\Delta+3$ steps.
    
The remaining case to consider is when $v$ becomes incident to flipped edges within its first $\Delta+3$ steps. Let us consider the case in which $v$ is a local maximum w.r.t.~$x$. By \Cref{lem:flipping}, $v$~can only be incident to a single flipped edge, say $\{u,v\}$, and the only neighbor of $v$ that is larger than $v$ w.r.t.\ the ordering after the flipping is $u$.
There are two cases to consider: either $v$ performs $\Delta+3$ steps without $u$ taking any step, or $u$ runs at least once. 

In the former case, $v$ never updates $v.a$, and hence it terminates by \Cref{lem:ub_no_b_change_2}. 

In the latter case, we obtain that both $v$ and $u$ update $a$ and $b$ to values that are at most $\Delta-1$. Moreover $v$ sets $\alpha$ and $\beta$ to true. 
Observe that, by \Cref{lem:flipping}, the condition of $\textsc{SpecialTermination}$ is now satisfied on $v$, and hence it terminates in at most two additional time steps. We obtain that, if $v$ is a local maximum, and $u$ is scheduled at least once, then $v$ terminates within $O(1)$ rounds after $u$ is scheduled.

Consider now the case in which $v$ is a local minimum w.r.t.~$x$. If it is scheduled for $\Delta+3$ steps without $u$ being scheduled, then it terminates by \Cref{lem:ub_no_b_change_2}, merely because $v.b$ is never updated. 
However, as already discussed, $u$ can be scheduled at most $O(1)$ times before terminating. Therefore, after $O(\Delta)$ steps of $v$ there must be $\Delta+3$ steps in which $u$ is not scheduled, from which it follows that $v$ terminates thanks to \Cref{lem:ub_no_b_change_2}.
\end{proof}

We now combine \Cref{lem:ub_no_b_change_2} and \Cref{lem:max_min_term} to prove a bound on the runtime of the algorithm.
We define a partition of nodes as follows. Let 
\[
V_1 = \{ v \mid \forall u \in N(v), \; \inp{v} < \inp{u} \}. 
\]
That is, $V_1$ contains the nodes that are local minima w.r.t.\ their input colors. For $i>1$, let 
\[
V_i = \{ v \mid \forall u \in N(v) \smallsetminus \cup_{j=1}^{i-1} V_j, \; \inp{v} < \inp{u}\}.
\]
That is, $V_i$ contains the nodes that are local minima among the nodes not in $V_1,\ldots,V_{i-1}$. Let $C$ be the amount of colors returned by $\textsc{WaitFreeLinial}$. Observe that, for all $i > C$, $V_i = \emptyset$. Moreover,  for all $i\geq 1$, $V_i$~is an independent set.  We prove by induction on~$i$ that the nodes in $V_i$ terminate in at most $f(i,\Delta)$ steps for some function $f$.

For the nodes in $V_1$, if none of their incident edges get flipped within $\Delta+3$ steps, then, by \Cref{lem:ub_no_b_change_2}, they terminate in at most $\Delta+3$ steps. Otherwise, if they flip an edge, then, by \Cref{lem:who_flips}, they must either be a local maxima, or a local minima. Thus, by \Cref{lem:max_min_term}, they terminate in $O(\Delta)$ steps. Hence, $f(1,\Delta) = O(\Delta)$.

We now consider the case $2 \le i \le C$. Let us consider a node $v \in V_i$, and let $B = N(v) \cap (\cup_{j=1}^{i-1} V_j)$. If $v$ gets an incident flipped edge, then, by \Cref{lem:who_flips}, it is a local extrema, and thus, by \Cref{lem:max_min_term}, $v$ terminates in $O(\Delta)$ additional steps. Thus, in the following, we assume that $v$ never gets an incident flipped edge. 

If there exist two indices $q$ and $r$ such that $v$ is scheduled for $\Delta+4$ times in the interval $[q,r]$, and no node in $B$ is scheduled in the interval $[q,r]$, then, since $v.b$ can only change the first time $v$ is scheduled in this interval, we get from \Cref{lem:ub_no_b_change_2} that node $v$ terminates at time at most~$r$. Therefore, for node $v$  not to terminate, at least one node in $B$ needs to be scheduled at least once every $\Delta+4$ steps of~$v$. By the inductive hypothesis, this can happen for at most $\Delta \cdot f(i-1,\Delta)$ times because $v$ has at most $\Delta$ neighbors in $\cup_{j=1}^{i-1} V_j$, and the runtime of these neighbors is bounded by $f(i-1,\Delta)$. This implies that the runtime of $v$ is itself bounded by 
\[
f(i,\Delta) = (\Delta+4) + (\Delta+4)\cdot \Delta \cdot f(i-1,\Delta).
\]
Since the colors returned by $\textsc{WaitFreeLinial}$ are bounded by $O(\Delta^2)$, we obtain that the runtime is bounded by  $(\Delta^2)^{O(\Delta^2)} = 2^{O(\Delta^2 \log \Delta)}$. This completes the proof of Theorem~\ref{thm:generalization}.

\section{Impossibility Results}\label{sec:lb}

In this section, we will establish several impossibility results for the \WFLOC model. In particular, we will prove that it is impossible to compute a 4-coloring in infinitely many cycles. To establish this specific result, we extend an impossibility result in the standard shared-memory model regarding solving weak symmetry-breaking. 

\subsection{Preliminaries}

Our goal is to simulate algorithms designed for \WFLOC model on the standard shared-memory model, in which $n$ asynchronous crash-prone processes exchange information via single-writer multiple-reader registers. In this model, the $n$ processes have distinct identifiers in $\{1,\dots,n\}$, where $n$ is known to all processes. The major difference between this model and \WFLOC is the absence of graph structure in the shared-memory model. That is, at each activation, a process has access to the registers of all the other processes.  To simulate on a shared-memory system an algorithm designed for the cycle in \WFLOC, we need that each process knows (as an input) which processes are its two neighbors in the cycle. Given this information the simulation is straightforward. However, by providing additional knowledge to the nodes, one cannot use impossibility results for shared-memory blindly, as the additional knowledge may allow the nodes to solve problems that were impossible without this knowledge. Our goal is to extend the proof of impossibility of weak symmetry-breaking for the standard shared-memory model in~\cite{AttiyaP16} to the case in which processes are provided with additional inputs. 

Let us first recall basic concepts used in~\cite{AttiyaP16}.
Let us fix an algorithm $\mathcal{A}$ (later, we shall focus on algorithms solving different versions of symmetry-breaking).
We consider schedulings $\mathcal{S}=S_1,S_2,\ldots$ where every process terminates in~$\mathcal{A}$.
Moreover, we assume that a process that has terminated at step~$i$ no longer appears in any sets $S_j$ for $j>i$.
A finite prefix  $\alpha=S_1,\dots,S_k$ of $\mathcal{S}$ is called an \emph{execution}. If $\alpha$ is a strict prefix then not all processes terminate in~$\alpha$. 
We say that execution $\alpha$ is an  execution \emph{by a set of processes} $P\subseteq [n]$ if $P=\bigcup_{i=1}^kS_j$. That is, $P$ is the set of processes taking steps in $\alpha$, or, said differently, $P$ is the \emph{participating set} of $\alpha$, and any process in $P$ is a \emph{participating process}.
An execution is \emph{complete} if every process in $[n]$ has terminated in $\mathcal{A}$ at the end of this execution.
We denote by $\operatorname{dec}(\alpha)$, for ``decide'', the set of output values produced by the processes that have terminated in $\mathcal{A}$ during execution $\alpha$, and, for $P\subseteq [n]$,  we denote by $\operatorname{dec}(\alpha,P)$ the set of outputs values produced by the processes of $P$ that terminated in~$\mathcal{A}$ during execution~$\alpha$.
A process $i\in [n]$ is \emph{unseen} in an execution $\alpha=S_1,\ldots S_k$ if $i$ appears in~$\alpha$ (i.e., there exists $j\in\{1,\dots,k\}$ such that $i\in S_j$), and all processes in $\alpha$ but process~$i$ have terminated before process~$i$ takes a step. That is, $\alpha=S_1,\ldots S_h,\{i\}^{h-k}$ where $1\leq h<k$, and, for every $j\in\{1,\dots,h\}$, $i \notin S_j$. 

The notion of order-preserving permutation plays a central role in the impossibility of weak symmetry-breaking. We denote by $\Sigma_n$ the set of all permutations on $n$ elements. 

\begin{definition}[\cite{AttiyaP16}]
A permutation $\pi\in\Sigma_n$ over $[n]$ is \emph{order preserving} on a set $P\subseteq [n]$ if, for every $(i,j) \in P \times P$, $i<j\Rightarrow \pi(i)<\pi(j)$. 
\end{definition}

Given an execution $\alpha$, $\pi(\alpha)$ corresponds to the execution where each occurrence of $i$ in every ``block'' $S_j$ in $\alpha$ is replaced by $\pi(i)$.

\begin{definition}[\cite{AttiyaP16}]
An algorithm is \emph{symmetric} if for every execution $\alpha$ on a subset $P\subseteq [n]$, and for every permutation $\pi\in \Sigma_n$ order preserving on $P$, it holds that, for every $i\in P$, process $i$ outputs $x$ in $\alpha$ if and only if process $\pi(i)$ outputs~$x$ on $\pi(\alpha)$.
\end{definition}

Note that we may consider executions where not all processes participate. In particular, if a single process participates in a symmetric algorithm, its output must be the same, whatever its identifier $i\in [n]$.

Recall that in its strong and weak versions, \emph{Symmetry Breaking} asks the processes to output a value~0 or~1. In \emph{Weak} Symmetry Breaking (WSB), the only constraint is that if \emph{all} $n$ processes output, then at least one outputs~0, and at least one output~1. \emph{Strong} Symmetry Breaking (SSB) introduces the additional constraint that at least one process outputs~1 in every execution in which at least one process participate. 

\begin{theorem}[\cite{AttiyaP16,BorowskyG93,CastanedaR10,GafniRH06}] ~
    \begin{itemize}
        \item For every $n\geq 2$, there are no wait-free algorithms solving SSB in an $n$-process asynchronous shared-memory system.
    
        \item For every $n\geq 2$ prime power (i.e. $n=p^k$ for some prime number~$p$, and some positive integer~$k$), there are no symmetric wait-free algorithms solving WSB in an $n$-process asynchronous shared-memory system.
    \end{itemize}
\end{theorem}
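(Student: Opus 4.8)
The plan is to establish both impossibilities through the standard correspondence between wait-free immediate-snapshot computation and the topology of the \emph{protocol complex}, which for $n$ processes is an iterated standard chromatic subdivision of the $(n-1)$-simplex $\Delta^{n-1}$: its vertices are (process, local view) pairs, its simplices are the reachable global views, and wait-freedom together with termination guarantees that this complex is a subdivided simplex rather than an arbitrary complex. A deterministic terminating algorithm then amounts to a binary labelling of the vertices of this subdivided simplex, each process colouring its view with its $0/1$ output. Under this dictionary the two tasks become combinatorial constraints on the labelling: strong symmetry breaking forbids the all-$0$ pattern on \emph{every} simplex while additionally forcing a $0$ on full simplices, whereas weak symmetry breaking only forbids \emph{monochromatic} full simplices (those corresponding to complete executions in which all $n$ processes participate).

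For the SSB part (every $n\geq 2$), I would run each process solo: the requirement ``at least one $1$ in every execution with a participant'' forces every solo view to be labelled $1$. Since the protocol complex is connected for $n\geq 2$, two solo runs can be joined by a path of executions in which consecutive executions are indistinguishable to some not-yet-terminated process; propagating the forced labels along this path, together with the WSB requirement that a complete execution also exhibits a $0$, isolates a critical configuration at which wait-freedom (a process cannot block waiting for another) forces a complete execution whose labelling violates one of the two constraints. Equivalently, one argues that SSB would implement test-and-set, hence $2$-process consensus, which has consensus number at least $2$ and is therefore not wait-free solvable; this is the route of the cited works, and crucially it does \emph{not} use the symmetry hypothesis.

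For the WSB part (prime power $n$), the heart is a Sperner-type index argument. Endow the subdivided simplex with the binary labelling $\chi$ induced by the outputs; WSB asserts that no top-dimensional simplex coming from a complete execution is monochromatic. I would attach to each top simplex an integer index recording the ``colour-change parity'' of its $n$ labelled vertices and prove a discrete Stokes identity expressing the total index as a sum over the boundary $\partial\Delta^{n-1}$; by induction on the number of participants this boundary sum evaluates to a fixed residue that is $\not\equiv 0\pmod p$. On the other hand, symmetry of the algorithm makes the family of complete executions invariant under the cyclic relabelling action of $\mathbb{Z}/n$ on identifiers, partitioning the top simplices into orbits; when $n=p^k$ a $p$-group fixed-point count shows that every orbit contributes $0$ modulo $p$ unless it is fixed, and the fixed simplices are exactly those a symmetric labelling is forced to make monochromatic, which WSB forbids. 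Hence the same total index is $\equiv 0\pmod p$, contradicting the boundary evaluation. This is precisely the combinatorial core of \cite{AttiyaP16}, equivalent topologically to \cite{BorowskyG93,GafniRH06,CastanedaR10}.

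I expect the main obstacle to be making the prime-power divisibility rigorous: one must set up the index so that the Stokes cancellation is exact, identify the fixed points of the permutation action with the monochromatic simplices that WSB rules out, and verify that non-fixed orbits have size divisible by $p$ — and it is here, through the Sylow/$p$-group fixed-point count, that $n=p^k$ (rather than merely $n$ prime) is genuinely used. Care is also needed at the boundary of the subdivided simplex, where \emph{unseen} processes and partial views must be handled so that the induction on the participating set is well founded. Since these are classical facts, I would ultimately recall them from the cited references and reserve the full combinatorics for the new input-bearing extension of \Cref{thm:wsb}, which reuses exactly this machinery.
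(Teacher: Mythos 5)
Your overall plan---recasting both impossibilities on the protocol complex via a Sperner-type index and a group action---is the topological counterpart of what the paper actually recalls, namely the purely combinatorial counting machinery of~\cite{AttiyaP16} (the trimmed algorithm $T(\mathcal{A})$, signs of executions, the univalued signed count, and \Cref{lem:univalue}, whose content is essentially your ``discrete Stokes identity''). However, two of your concrete steps have genuine gaps. For SSB, your reduction runs in the wrong direction: test-and-set implements SSB, not conversely. For $n>2$, an execution of an $n$-process SSB algorithm in which only two processes participate guarantees only that at least one of them outputs~$1$ (both may), because the clause forcing some process to output~$0$ applies only to complete executions in which all $n$ processes output; so no test-and-set, and hence no $2$-process consensus, is extracted, and a BG-style simulation does not repair this since two simulators cannot force all $n$ simulated processes to terminate. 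Your alternative ``connectivity/critical configuration'' sketch is not developed enough to substitute for it. The paper's route needs neither a reduction nor symmetry: if $\mathcal{A}$ solves SSB, then $\mathcal{C}^{\mathcal{A}}_0=\mathcal{C}^{\mathcal{A}}_1=\varnothing$, so the univalued signed count of $\mathcal{A}$ is zero, while the ``some participant outputs~$1$'' clause makes the count of $T(\mathcal{A})$---and hence of $\mathcal{A}$, by \Cref{lem:univalue}---nonzero, a contradiction for every $n\geq 2$.

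For WSB, the step that would fail is the appeal to the cyclic relabelling action of $\mathbb{Z}/n$: invariance of the labelled complex under this action does \emph{not} follow from the symmetry hypothesis, which constrains the algorithm only with respect to permutations that are \emph{order-preserving} on the relevant set of processes, and a cyclic shift is never order-preserving on a set containing both the largest identifier and a smaller one (in particular not on $[n]$, which is exactly where your orbit argument needs it). Your identification of the fixed simplices with the monochromatic ones is likewise asserted rather than proved. The paper's argument avoids group actions altogether: order-preserving relabellings of $SIM_\alpha$ and of its complement partition the executions of $T(\mathcal{A})$ into equivalence classes of size exactly $\binom{n}{m}$ with $m=|SIM_\alpha|$, all of the same sign and decision pattern; since $\binom{p^k}{m}\equiv 0\pmod{p}$ for $1\le m\le n-1$, and since $SIM_\alpha=\varnothing$ only for $\alpha_{all}$, the count of $T(\mathcal{A})$ is nonzero modulo~$p$, contradicting the zero count forced by WSB. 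So prime-power-ness enters through binomial divisibility rather than through a Sylow-type fixed-point count; pursuing your equivariant route would require either strengthening the symmetry assumption to genuine $\mathbb{Z}/n$-equivariance or building equivariant subdivisions as in~\cite{CastanedaR10}---precisely the delicate point that the counting proof recalled in Section~\ref{sec:lb}, and then extended to inputs in \Cref{ssec:wsb}, is designed to bypass.
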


We shall extend this result to the case in which processes are given some inputs. We base our approach on the proof from~\cite{AttiyaP16}. So, let us recall some of the concepts defined there. For any algorithm~$\mathcal{A}$,  let $T(\mathcal{A})$ denote its \emph{trimmed} version, i.e., where every process stops if it has heard of all the $n$ processes in the system. More precisely, let $i$ be the first step at which every process has been activated at least once (i.e. $\cup_{j\le i}S_j=[n]$). There are then three possibilities for the output of each process in $T(\mathcal{A})$:
\begin{enumerate}
   \item Output 0 if its first activation was in $S_i$ (i.e., it sees all the processes at its first activation);
    \item Output 1 if it is activated in $S_j$ and $S_k$ for $j< i\le k$ (i.e., it did not output before step $i$);
    \item Output the same output as in $\mathcal{A}$ if it terminates before step~$i$.
\end{enumerate}
Given a scheduling $\mathcal{S}$, we can directly infer in which class a process falls in: 
\begin{itemize}
    \item All processes in $S_i\smallsetminus \cup_{j< i}S_j$ are class~(1); 
    \item The processes in $\cup_{j\geq i} S_j\smallsetminus (S_i\smallsetminus \cup_{j< i}S_j)$ are in class~(2);
    \item All the other processes are in class~(3). 
\end{itemize}
We denote by $SIM_{\alpha}$, for ``simulation'', the set of processes in classes~(2) or~(3) for execution~$\alpha$.

Note that there is a single execution of $T(\mathcal{A})$ where all processes fall in class~(1), namely  $\alpha_{all}=S_1=\{1,\dots,n\}$, i.e. the execution in which all processes are activated at the first step. As a consequence, 
\[
SIM_{\alpha}=\varnothing\iff\alpha=\alpha_{all}.
\]
For every $x\in\{0,1\}$, the set $\mathcal{C}^\mathcal{A}_x$ of \emph{$x$-univalued complete executions} of $\mathcal{A}$ is defined as the set of all the executions where all processes decide~$x$ in~$\mathcal{A}$, i.e., 
\[
\mathcal{C}^\mathcal{A}_x=\{\mbox{execution $\alpha\mid\alpha$ is complete for $\mathcal{A}$, and $\operatorname{dec}(\alpha)=\{x\}$}\}.
\]
As a consequence, one can show (cf.~\cite{AttiyaP16}) that 
\[
\alpha_{all}\in\mathcal{C}^{T(\mathcal{A})}_0, 
\;\mbox{and}\; 
\mathcal{C}^{T(\mathcal{A})}_1=\varnothing. 
\]
Let $\alpha=S_1\ldots S_k$ be an execution.
The \emph{sign} of $\alpha$ is defined as
\[
\sign{\alpha} \coloneqq \prod^k_{i=1}(-1)^{|S_i|+1}.
\]
Therefore $\sign{\alpha}=1$ if and only if $\alpha$ has an even number of ``blocks'' $S_i$ with odd size. Finally, the \emph{univalued signed count} of $\mathcal{A}$ on $n$ processes is 
\[
\sum_{\alpha\in\mathcal{C}^\mathcal{A}_0}\sign{\alpha} +
    (-1)^{n-1}\sum_{\alpha\in\mathcal{C}^\mathcal{A}_1}\sign{\alpha}.
\]

\begin{lemma}[\cite{AttiyaP16}]\label{lem:univalue}
For every algorithm $\mathcal{A}$, $\mathcal{A}$ and $T(\mathcal{A})$ have the same univalued signed count.
Moreover, if $\mathcal{A}$ satisfies that at least one process outputs~1 whenever at least one process outputs, then the univalued signed count of $T(\mathcal{A})$ is nonzero.
\end{lemma}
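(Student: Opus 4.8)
The plan is to prove the two assertions separately, since the second (nonzero-ness) reduces almost immediately to the preliminary facts already recorded above, while the first (invariance under trimming) is where the real combinatorial work lies. I would actually dispatch the nonzero claim first, as it also sharpens intuition for the trimming operation. Because $\mathcal{C}^{T(\mathcal{A})}_1=\varnothing$, the univalued signed count of $T(\mathcal{A})$ collapses to $\sum_{\alpha\in\mathcal{C}^{T(\mathcal{A})}_0}\sign{\alpha}$. I claim that, under the hypothesis that at least one process outputs~$1$ whenever some process outputs, the \emph{only} $0$-univalued complete execution of $T(\mathcal{A})$ is $\alpha_{all}$. Indeed, take any $0$-univalued complete $\alpha\neq\alpha_{all}$; then $SIM_{\alpha}\neq\varnothing$, so $\alpha$ has a process in class~(2) or class~(3). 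A class~(2) process outputs~$1$ in $T(\mathcal{A})$, which is impossible in a $0$-univalued execution, so $SIM_{\alpha}$ consists only of class~(3) processes, i.e.\ of the processes that terminate in $\mathcal{A}$ within the prefix preceding the panoramic step~$i$. In that prefix at least one process has already output, so by hypothesis at least one process outputs~$1$ there; but that process terminated before step~$i$, hence is class~(3), hence keeps its $\mathcal{A}$-output~$1$ in $T(\mathcal{A})$, contradicting $0$-univaluedness. Since $\alpha_{all}\in\mathcal{C}^{T(\mathcal{A})}_0$ is already known, the sum is the single term $\sign{\alpha_{all}}=(-1)^{n+1}=(-1)^{n-1}\neq 0$.

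For the invariance, I would organize the complete executions of both $\mathcal{A}$ and $T(\mathcal{A})$ around their common panoramic prefix $P=S_1,\dots,S_i$, where $i$ is the first step with $\bigcup_{j\le i}S_j=[n]$. Every complete execution of $T(\mathcal{A})$ is exactly such a $P$, terminating at step~$i$, whereas a complete execution of $\mathcal{A}$ with prefix $P$ continues with a residual execution run by the still-active processes $R(P)$ (classes~(1) and~(2)). The univalued signed count of $\mathcal{A}$ then factors as $\sum_{P}\sign{P}\big(c_0(P)\,\Sigma_0(P)+(-1)^{n-1}c_1(P)\,\Sigma_1(P)\big)$, where $c_x(P)\in\{0,1\}$ records whether the already-terminated class-(3) processes all output~$x$, and $\Sigma_x(P)$ is the univalued-$x$ signed count of the residual protocol on $R(P)$. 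The target is to show this equals $\sum_{P}\sign{P}$ taken over those prefixes whose $T(\mathcal{A})$-completion is $0$-univalued (no class~(2) process, class~(3) all outputting $0$). The engine I would use is the elementary identity that splitting one block of an execution into two consecutive blocks (or merging two) always multiplies $\sign{\cdot}$ by $-1$, since a block of size $s$ contributes $(-1)^{s+1}$ while two blocks of sizes $s_1+s_2=s$ contribute $(-1)^{s}$. This lets me build a sign-reversing involution on the residual tails that cancels them in pairs, collapsing each $\Sigma_x(P)$ to the single forced trimmed configuration and yielding the desired equality prefix by prefix.

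The hard part will be precisely this last step: verifying that the split/merge involution on the residual tail preserves both completeness and the univalued value, since rescheduling can a priori change which outputs the residual processes compute, and then identifying the all-forced trimmed execution (and ultimately $\alpha_{all}$) as the unique fixed point. I expect this to demand the same careful bookkeeping as in~\cite{AttiyaP16}, tracking for each class-(1)/(2) process the step at which it first sees the whole system and arguing that the involution never alters a process's ``sees-everyone'' status nor its decided value; this invariance-under-reshuffling is the crux, and it is where essentially all of the proof's length would reside.
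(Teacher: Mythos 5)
Your argument for the second claim is correct and self-contained: since $\mathcal{C}^{T(\mathcal{A})}_1=\varnothing$, and since in any $0$-univalued complete execution $\alpha\neq\alpha_{all}$ the nonempty set $SIM_{\alpha}$ would have to consist solely of class-(3) processes whose $\mathcal{A}$-outputs are all $0$ --- contradicting the hypothesis applied to the prefix of $\alpha$ preceding the panoramic step, an execution of $\mathcal{A}$ in which somebody outputs but nobody outputs $1$ --- the count collapses to the single term $\sign{\alpha_{all}}=(-1)^{n+1}\neq 0$. (For calibration: the paper itself does not prove this lemma; it imports it from~\cite{AttiyaP16}, so your attempt can only be judged on its own merits.)

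The genuine gap is in the first claim, and it is worse than the ``careful bookkeeping'' you explicitly defer: the prefix-by-prefix reduction you propose is \emph{false}, so no sign-reversing involution confined to the residual tails of a fixed panoramic prefix $P$ can establish it. Take $n=2$ and let $\mathcal{A}$ be the algorithm in which a process outputs $1$ at its first step if it sees the other process, and outputs $0$ otherwise. The complete executions are $(\{1,2\})$, where both processes output $1$, and $(\{1\},\{2\})$, $(\{2\},\{1\})$, which have mixed outputs; each is its own panoramic prefix with empty tail. The prefix $P=(\{1,2\})$ contributes $(-1)^{n-1}\sign{P}=+1$ to the count of $\mathcal{A}$, but contributes $\sign{P}=-1$ to the count of $T(\mathcal{A})$ (both processes are class~(1), hence output $0$); the prefix $(\{1\},\{2\})$ contributes $0$ to the count of $\mathcal{A}$ but $+1$ to the count of $T(\mathcal{A})$. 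The totals do agree ($1=-1+1+1$), but only through cancellation \emph{across} prefixes, which your involution --- acting within the set of tails of a fixed $P$ and leaving $\sign{P}$ and the class structure of $P$ untouched --- can never produce. Any correct proof must pair executions whose panoramic prefixes differ (splitting or merging blocks moves the panoramic step and changes which processes fall into classes (1), (2), (3)); that cross-prefix pairing is precisely the content of the argument in~\cite{AttiyaP16} that your sketch leaves out. A secondary slip: it is not true that every complete execution of $T(\mathcal{A})$ terminates at step~$i$; when class-(2) processes exist the execution continues past the panoramic step (each such process terminates at its first activation at or after step~$i$), and only the $0$-univalued complete executions of $T(\mathcal{A})$ end exactly at step~$i$. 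This does not affect your formula for the target sum, but it does affect the claimed one-to-one correspondence on which your decomposition rests.
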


Lemma~\ref{lem:univalue} is enough to derive the impossibility to solve SSB, merely because the univalued sign of $\mathcal{A}$ is~0, due to the fact that if $\mathcal{A}$ solves WSB then, for every $x\in\{0,1\}$, $\mathcal{C}^\mathcal{A}_x=\varnothing$. 
For the case of WSB, the symmetry notion enables to define an equivalence relation between executions. Specifically, we have $$\alpha\sim\alpha'$$ if there exists  permutation $\pi\in \Sigma_n$ such that 
(1)~$\alpha'=\pi(\alpha)$, and 
(2)~$\pi$ is order preserving on $SIM_\alpha$ and its complement $\overline{SIM_\alpha}=[n]\smallsetminus SIM_\alpha$. 
For every execution~$\alpha$, there exist $\binom n m$ executions equivalent to~$\alpha$, where $m=|SIM_{\alpha}|$, and they all have the same sign. Now, on the one hand, it known that if $n$ is power of a prime~$p$, then $\binom n m\equiv0[p]$ for every $m\in\{1,\dots,n-1\}$. On the other hand, $SIM_\alpha\neq [n]$ because there is always a process in class~(1). Moreover, $\alpha_{all}$ is the unique execution resulting in an empty set~$SIM$. It follows that  the univalued signed count modulo $p$ of $T(\mathcal{A})$  is nonzero. This contradicts the fact that the univalued signed count of $\mathcal{A}$ is null, yielding the impossibility result.

\subsection{Weak Symmetry Breaking with Input}\label{ssec:wsb}

In this section, we extend the concepts introduced in the previous section to the case where processes have inputs. We thus consider input functions 
\[
\sigma:[n] \to [n]^*\times I,
\]
where $I$ is a set of possible inputs, $[n]^*$ is a (possibly empty) sequence of identifiers, and $\sigma(i)$ is the input assigned to process~$i$. In this way, input functions can provide each process with its neighboring processes in a graph, and with some input value, e.g., a coloring. An another example, input functions can provide a matching between the processes. 

We directly extend the notion of equivalence between execution to executions involving inputs. For every two executions~$\alpha$ and $\alpha'$, and for every two input functions~$\sigma:[n] \to [n]^*\times I$ and $\sigma':[n] \to [n]^*\times I$, we say that $$(\alpha,\sigma)\sim (\alpha',\sigma')$$ if there exists a permutation $\pi\in\Sigma_n$ such that 
(1)~$\alpha'=\pi(\alpha)$, 
(2)~$\pi$ is order preserving on $SIM_\alpha$ and its complement $\overline{SIM_\alpha}$, and 
(3)~$\sigma'=\pi\circ \sigma\circ\pi$, i.e., for every $i\in[n]$, if $\sigma(\pi(i))=((i_1,\dots,i_k),x)\in [n]^k\times I$ for some $k\geq 0$, then $\sigma'(i)= ((\pi(i_1),\dots,\pi(i_k)),x)$.

\subparagraph{Example: Cycle Input.}

Let $\sigma:[n]\to [n]^2$ be the input function defined as 
\[
\sigma(i)=(\sigma_l(i),\sigma_r(i))
\]
where $\sigma_l\in\Sigma_n$ and $\sigma_r\in\Sigma_n$ are two circular permutations satisfying $\sigma_l\circ \sigma_r= \mbox{Id}_{[n]}$. Let $\pi\in\Sigma_n$, and let $\sigma'=(\sigma'_l,\sigma'_r)$ be the input function defined by $\sigma'_l=\pi\circ \sigma_l\circ \pi$ and $\sigma'_r=\pi\circ \sigma_r\circ \pi$. 
For every execution $\alpha$, we have $(\alpha,\sigma)\sim (\pi(\alpha),\sigma')$ whenever $\pi$ is order preserving on $SIM_\alpha$. Note that there are $(n-1)!$ circular permutations over~$[n]$. 

\medbreak

We also extend the definition of univalued complete execution to pairs of executions-input functions by setting 
\[
\mathcal{C}^\mathcal{A}_x=\{(\alpha,\sigma) \mid \mbox{$\alpha$ is complete for $\mathcal{A}$, and $\operatorname{dec}(\alpha)=\{x\}$ with input~$\sigma$}\}.
\]
In particular, for any input function~$\sigma$, we have $(\alpha_{all},\sigma)\in\mathcal{C}^{T(\mathcal{A})}_0$. Also, we set
\[
\sign{\alpha,\sigma}=\sign{\alpha}, 
\]
and we reset the definition of univalued signed count of an algorithm~$\mathcal{A}$ with inputs accordingly. It is tedious but straightforward to check that Lemma~\ref{lem:univalue} still holds with these extensions to algorithms with inputs. 

\begin{definition}\label{def:goodinput}
Given a set~$I$ of input values, a set $\mathcal{I}=\{\sigma:[n]\to [n]^*\times I\}$ of input functions is \emph{\goodInput} if the following two conditions hold:
\begin{itemize}
\item If $n$ is prime, then $|\mathcal{I}|$ is not divisible by~$n$.
   \item For every permutation $\pi\in\Sigma_n$, and every input function $\sigma\in \mathcal{I}$, the input function $\sigma'=\pi\circ\sigma\circ\pi$ belongs to~$\mathcal{I}$.
\end{itemize}
\end{definition}

For instance, the set of Cycle inputs is \goodInput, because it has $(n-1)!$ elements. 
On the other hand, for every $k\in\{1,\dots,n-1\}$, if one considers the set of inputs where exactly~$k$ processes have input~1, and the $n-k$ others have input~0, then there are $\binom n k$ possible inputs, and thus this set of inputs is not non-prime-divisible. A typical example of an input function that is not order-invariant is  $\sigma(1)=$``leader'', and $\sigma(i)=$``defeated'' for every $i>1$. We can now state our main result in this section. 

\wsb*

\begin{proof}
    Let $n$ be a prime number, and let $\mathcal{I}$ be a \goodInput\/ set of inputs. Let $\alpha$ be an execution, let $\sigma\in \mathcal{I}$, and let us focus on the equivalent class of $(\alpha,\sigma)$. By definition, for any permutation $\pi\in\Sigma_n$ that is order-invariant on $SIM_\alpha$, any input function $\sigma'$ such as $(\alpha,\sigma)\sim (\pi(\alpha),\sigma')$ satisfies $\sigma'\in \mathcal{I}$. It follows that every pair $(\alpha,\sigma)$ in each equivalence class induces the same set $\operatorname{dec}(\alpha,\sigma)$ of outputs. As a consequence, the equivalence class of $(\alpha,\sigma)$ is of size $\binom n m$ where $m=|SIM_{\alpha,\sigma}|$. 
    
    Now, for every $m\in\{1,\dots,n-1\}$,  $\binom n m \equiv 0 \bmod n$. So, 
    every equivalence class of a pair $(\alpha,\sigma)\in\mathcal{C}^{T(\mathcal{(A)})}_0$ with $SIM_{\alpha,\sigma}\neq\emptyset$ contributes for~0 to the univalued signed count of $T(\mathcal{A})$.
    On the other hand, $SIM_{\alpha,\sigma}\neq [n]$, and we have that if $SIM_{\alpha,\sigma}=\varnothing$ then $\alpha=\alpha_{all}$. 
    Moreover, for every $\sigma\in \mathcal{I}$,  $(\alpha_{all},\sigma)\in\mathcal{C}^{T(\mathcal{(A)})}_0$.
    Therefore, the class of $(\alpha_{all},\sigma)$ contributes  to the univalued signed count of $T(\mathcal{A})$ for a number equal to the total number of input functions, which is different from~0 modulo~$n$. 
    Thanks to Lemma~\ref{lem:univalue} generalized with inputs, the univalued signed count of $\mathcal{A}$ is non-zero. This contradicts the fact that $\mathcal{A}$ solves weak symmetry-breaking, as such algorithms must have a zero univalued signed count.
\end{proof}

\subsection{Applications}

We now show how to apply Theorem~\ref{thm:wsb} for obtaining impossibility results in the \WFLOC model.  

\subsubsection{Impossibility of $4$-Coloring}
\label{sssec:4col}

\fourcol*

\begin{proof}
    Let us assume for the purpose of contradiction that there exists an algorithm $\mathcal{A}$ for 4-coloring the cycles in the \WFLOC model. This algorithm can be simulated in an asynchronous shared-memory system with a Cycle input, by having each process considering only the registers of its two neighbors in the cycles, as specified by the input. Given a 4-coloring of the input cycle, a process outputs~0 whenever its color is even, and it outputs~1 otherwise. By assumption, the size of the cycle is an odd prime natural number. In every odd cycle, every proper 4-coloring satisfies that there must be two nodes whose colors do not have the same parity.  It follows that the simulation of $\mathcal{A}$ solves WSB. As $n$ is prime, this contradicts Theorem~\ref{thm:wsb}.
\end{proof}

\subsubsection{Impossibility of Weak-$2$-Coloring}
\label{sssec:w2c}

\weaktwocol*

\begin{proof}
    The proof uses the same arguments as in the proof of Corollary~\ref{thm:4col}, by performing weak-2-coloring with colors~0 and~$1$, and mapping output color~$0$ (resp.,~1)  to the output~0 (resp.,~1) for WSB.
\end{proof}

\subsubsection{Impossibility of $(\Delta+2)$-Coloring}
\label{sssec:deltaplus3coloring}

\deltaplusthreecolor*

\begin{proof}
    Let $\Delta=2k$, and let $n>\Delta$ be prime. We construct a $\Delta$-regular graph $G$ on $n$ nodes, with chromatic number at least $k+2$. Let $u_0,\ldots,u_{n-1}$ be the $n$ nodes of~$G$. Each node $u_i$ is connected to the $k$ nodes $u_{i+1},\dots,u_{i+k}$ (the right neighbors), and to the $k$ nodes $u_{i-1},\dots,u_{i-k}$ (the left neighbors) --- the operations on the indices are performed modulo~$n$. The graph $G$ is indeed $\Delta$-regular. Every $k+1$  nodes with consecutive indices form a clique in~$G$, and thus $\chi(G)\geq k+1$. 
    However, $k+1$ colors do not suffice for a proper coloring of~$G$. Indeed, two nodes $u_i$ and $u_j$ such that $i\equiv j[k+1]$ must have the same color whenever using solely $k+1$ colors. Moreover, $u_{n-1}$ must have the same color as $u_{k}$, as $u_{n-1}$ is connected to the nodes $u_0,\ldots,u_{k-1}$. As $n$ is prime, and $n>k$, we have  $k \not\equiv n\pmod k$. It follows that the chromatic number of $G$ is at least $k+2$.

    Let us now consider a shared memory system with the following input  function $\sigma$. For every $i\in[n]$, 
    $
\sigma(i)=(\sigma_{-k}(i),\ldots,\sigma_{-1}(i),\sigma_1(i),\ldots\sigma_k(i))
$
where, for $j\in \{-k,\dots,-1,1,\dots,k\}$, $\sigma_j(i)$ is the identifier of the $j$th neighbor of processor~$i$ on the left if $j<0$, and on the right if $j>0$. Moreover, we add the restrictions that $\sigma_1$ is a circular permutation, and that, for every $j\in \{-k,\dots,-1,2,\dots,k\}$, $\sigma_j=(\sigma_1)^j$. This set of inputs is \goodInput, as it is fully defined by the circular permutation~$\sigma_1$.
Let us assume, for the purpose of contradiction, that there exists an algorithm $\mathcal{A}$ that $2k+2$-color~$G$. As $\chi(G)\geq k+2$, at least one process must output a color in $A=\{1,\dots,k+1\}$, and at least one node must output a color in $\{k+2,\dots,2k+2\}$. It follows that $\mathcal{A}$ enable to solve weak symmetry-breaking as follows: every node with a color in~$A$ outputs~0, and every node with a color in $B$ outputs~1. A contradiction. Therefore, $2k+2=\Delta+2$ colors are not sufficient to color~$G$ in \WFLOC.    
\end{proof}

\subsection{Impossibility of Weak MIS}
\label{ssec:wmis}

\weakmis*

\begin{proof}
    Let us assume for contradiction that there exists an algorithm $\mathcal{A}$ solving weak MIS in the cycles. Let us then consider two consecutive nodes $u$ and $v$ of the cycle, and executions in which they both terminate, but their other neighbors are never scheduled. Among such possible executions, their must be one for which both $u$ and $v$ output~0. Indeed, they cannot both output~1, and they always output~0 for one, and~1 for the other, then there would exist an algorithm solving WSB on 2~processes. 

    Let us now consider an $n$-node cycle with $n\geq 7$, and let us consider a sequence  $u_0$, $u_1$, $u_2$, $u_3$, $ u_4$, $u_5$, $ u_6$, $u_7$ of consecutive nodes in this cycle ($u_0=u_7$ if $n=7$). Let us consider an execution that does not activate nodes $u_0,u_3,u_4,u_7$,  but only $u_1,u_2$ and $u_5,u_6$ until they terminate. There is an execution in which $u_1$ and $u_2$ both output~0, and $u_5$ and $u_6$ both output~0. After $u_1,u_2,u_5$, and $u_6$ have terminated, let us assume that $u_3$ and $u_4$ are scheduled. One of these two nodes, say~$u_3$, must output~0. This implies that there are three consecutive nodes $u_1,u_2,u_3$ that all output~0, contradicting the fact that $\mathcal{A}$ solves weak MIS. 
\end{proof}

\subsection{Impossibility of $(\Delta+1)$-Coloring Trees}
\label{ssec:lbcol}

\lbcol*

\begin{proof}
    For the purpose of contradiction, let us assume the existence of an algorithm $\mathcal{A}$ solving $(\Delta+1)$-coloring in trees. The proof is based on a series of constructions in which the trees $T_1,\dots,T_k$ of a forest are connected to a node~$v$ so that to form a single tree~$T$. The node $v$ is actually connected to specific nodes $u_i\in V(T_i)$, $i=1,\dots,k$, whose outputs have special properties when  $\mathcal{A}$ is run in each $T_i$ separately. We shall argue that these properties still holds when each $u_i$ is connected to~$u$ (which modifies the neighborhood of~$u_i$, and thus may modify the action of each node~$u_i$ in~$\mathcal{A}$). The reason why this is valid argument is that node $u$ will always be scheduled after all nodes in each $T_i$ have terminated. Indeed, for the specific properties considered in the proof, if a better solution might be obtained for each $T_i$ whenever $u_i$ has an extra neighbor which is not scheduled, then a better solution could be obtained in $T_i$ alone, by having each node simulating the presence of an extra (non scheduled) node. 
    
Another important element is that we will duplicate the trees we construct to have several copies behaving the same way in different components of the graphs. However, we cannot directly do it when each process has a unique ID. We prove that for any ID assignment on some given tree, a coloring with some property will happen.

To bypass the ID issue, we argue that for any tree of size $n$, there exists an infinite collection of disjoint identifier sets of size $n$ and an ID assignment on this tree such that this assignment on the tree provides the same outcome coloring (with the given property).

For example, to $V=\{v_1,\ldots v_n\}$, for each $k\ge0$, we consider the ID assignment on set $\{kn+1,\ldots,(k+1)n\}$ that maps $v_i$ to the identifier $kn+i$. If we look, for each $k$, at each coloring produced, we know that there exists a coloring that appears infinitely many times. When we duplicate the tree for our recursive construction, we then use different set of identifiers for each copy.
    
    Keeping this preamble in mind, we can proceed with constructions, avoiding the tedious embedding of each tree in a larger tree whose nodes are not scheduled and choice of new adequate identifiers at each duplication of a previous construction.\\
    
    For every $\Delta\geq 1$, let $P_\Delta$ be the property: There exists a tree $T_\Delta$ of maximal degree $\Delta$, and a scheduling $\mathcal{S}_\Delta$ on $T_\Delta$ where $\Delta+1$ nodes $u_1,\ldots,u_{\Delta+1}$ of $T_\Delta$ output different colors in~$\mathcal{A}$. We prove by induction on~$\Delta$ that, for every $\Delta\geq 1$, $P_\Delta$ holds. 
    $P_1$ holds since a tree of maximum degree~1 is just an edge, and two adjacent nodes must output different colors.   
    Let us assume that $P_\Delta$ holds, and let us consider the forest $F_{\Delta+1}$ equal to $\Delta+1$ copies of $\mathcal{T}_\Delta$. Let $\mathcal{S}_{\Delta,j}$ be the scheduling  $\mathcal{S}_{\Delta}$ in the $j$th copy of $\mathcal{T}_\Delta$, and let us denote by $u_{1,j},\dots,u_{\Delta+1,j}$ the $\Delta+1$ nodes in the $j$th copy of $\mathcal{T}_\Delta$ outputting different colors in~$\mathcal{A}$, say $u_{i,j}$ outputs color~$i$. Let us add a node~$u$ connected to every node $u_{i,i}$, $i\in\{1,\dots,\Delta+1\}$, thus creating a tree $T_{\Delta+1}$. Let us then consider the scheduling 
    \[   \mathcal{S}_{\Delta+1}=\mathcal{S}_{\Delta,1},\ldots,\mathcal{S}_{\Delta,\Delta+1},\{u\}^k
    \] 
    where $k\geq 1$ is the number of activation needed for~$u$ to output a color. As $u$ has $\Delta+1$ colors in its neighborhood, $u$~must output another color, not in $\{1,\dots,\Delta+1\}$. Thus $P_{\Delta+1}$ holds. This concludes the proof of $P_\Delta$, for every $\Delta\geq 1$. 
      
    To complete the impossibility result, let us consider an edge $\{u,v\}$, and let us connect each extremity of this edge to $\Delta-1$ copies of $T_{\Delta-1}$. More specifically, $u$ (resp.,~$v$) is connected to the $\Delta-1$ copies of $T_{\Delta-1}$ denoted by $T^u_{\Delta-1,j}$, $j=1,\dots,\Delta-1$ (resp., $T^v_{\Delta-1,j}$, $j=1,\dots,\Delta-1$). We execute the appropriate scheduling for each of these trees, so that to get $\Delta$ nodes outputting different colors in each of them, say node $w^u_{i,j}$ colored $i$ in $T^u_{\Delta-1,j}$, and node $w^v_{i,j}$ colored $i$ in $T^u_{\Delta-1,j}$, for every $i\in\{1,\dots,\Delta\}$ and $j\in\{1,\dots,\Delta-1\}$. Node $u$ is connected to all nodes $w^u_{i,i}$ for $i\in\{1,\dots,\Delta-1\}$, and node $v$ is connected to all nodes $w^v_{i,i}$ for $i\in\{1,\dots,\Delta-1\}$. Again we consider the appropriate schedule for each of the trees, and $u$ and $v$ are activated only when all nodes in these trees have terminated. None of the two nodes $u$ and $v$ may use colors in $\{1,\dots,\Delta-1\}$, as they are taken by their neighbors. So, using $\mathcal{A}$, $u$ and $v$ are able to output two different colors in $\{\Delta,\Delta+1\}$. As a consequence, one could use $\mathcal{A}$ to solve WSB on 2~processes, a contradiction.
\end{proof}

\section{Open Questions}\label{sec:open}

We have shown that every $n$-node graph of maximum degree~$\Delta$ can be properly colored with $\frac12(\Delta+1)(\Delta+2)-1$ colors in $\WFLOC$, in $O(\log^\star n)+f(\Delta)$ rounds. The number of colors may seem large, but the \WFLOC model is considerably weaker than the (synchronous and failure-free) \LOCAL model. In particular, it is known that even the clique with $n=\Delta+1$ nodes cannot be colored with less than $2\Delta+1$ colors in \WFLOC (whenever $\Delta+1$ is power of a prime), and we have shown that there exists an infinite family of regular graphs with even degree~$\Delta$ that cannot be colored with less than $\Delta+3$ colors in~\WFLOC. One major question as far as solving graph problems in asynchronous crash-prone networks is thus the following. 

\medbreak 

\noindent\textbf{Open Problem:} Is there a $(2\Delta+1)$-coloring algorithm for graphs with maximum degree~$\Delta$ in the \WFLOC model, for every $\Delta\geq 2$? 

\medbreak 

Of course, if one puts aside the cliques, there might be a coloring algorithm for \WFLOC using a palette of less than $2\Delta+1$ colors. However, we have shown that, for $\Delta=2$, the bound $2\Delta+1=5$ is tight for infinitely many cycles. The only generic bound applying to infinitely many graphs of maximum degree~$\Delta$ is however only $\Delta+3$, so there might be room for improvement. Yet,  saving even just a single color in a palette of $\frac12(\Delta+1)(\Delta+2)$ colors was very delicate and difficult. So, progressing from a quadratic number of colors to a linear number of colors appears to be a challenge in \WFLOC. 

Finally, we question the efficiency of randomized algorithms in the \WFLOC model. 

\medbreak 

\noindent\textbf{Open Problem:} To which extent  randomized algorithms help in the \WFLOC model, in term of both complexity and computability? 

\medbreak 

\bibliographystyle{plainurl}
\bibliography{biblio}

\appendix

\centerline{\Large\bf A P P E N D I X}

\section{Example of an execution of an algorithm for 6-coloring cycles}
\label{app:example:alg:cyclesixcoloring}

An example of an execution of \Cref{alg:cyclesixcoloring} is provided in \Cref{tab:example}, in which the old and new states of each node after each step is displayed.

\begin{table}[htbp]
\resizebox{\textwidth}{!}{
\begin{tabular}{c | c | c|c| c|c| c|c| c|c |c }
& \multicolumn{2}{|c|}{3} & \multicolumn{2}{|c|}{5} & \multicolumn{2}{|c|}{4} & \multicolumn{2}{|c|}{1} & \multicolumn{2}{|c}{6} \\
                 &   Old & New  & Old & New  & Old & New  & Old & New  & Old & New \\\hline
  Initialization & $\bot$ & $(3,0,0)$ & $\bot$ & $(5,0,0)$ & $\bot$ & $(4,0,0)$ & $\bot$ & $(1,0,0)$ & $\bot$ & $(6,0,0)$ \\\hline
  $\{1,3,5\}$ & & & & & & & & & &\\
  after write & \boldmath $(3,0,0)$ & $(3,0,0)$ & \boldmath $(5,0,0)$ & $(5,0,0)$ & $\bot$ & $(4,0,0)$ & \boldmath $(1,0,0)$ & $(1,0,0)$ & $\bot$ & $(6,0,0)$ \\
  update & $(3,0,0)$ & \boldmath $(3,1,0)$ & $(5,0,0)$ & \boldmath $(5,0,1)$ & $\bot$ & $(4,0,0)$ & $(1,0,0)$ & \boldmath $T(0,0)$ & $\bot$ & $(6,0,0)$ \\\hline
  $\{4, 5\}$  & & & & & & & & & &\\
  after write & $(3,0,0)$ & $(3,1,0)$ & \boldmath $(5,0,1)$ & $(5,0,1)$ & \boldmath $(4,0,0)$ & $(4,0,0)$ & $(1,0,0)$ & $T(0,0)$ & $\bot$ & $(6,0,0)$ \\
  update & $(3,0,0)$ & $(3,1,0)$ & $(5,0,1)$ & \boldmath $T(0,1)$ & $(4,0,0)$ & \boldmath $(4,1,1)$ & $(1,0,0)$ & $T(0,0)$ & $\bot$ & $(6,0,0)$ \\\hline
  $\{3, 4\}$  & & & & & & & & & &\\
  after write & \boldmath $(3,1,0)$ & $(3,1,0)$ & $(5,0,1)$ & $T(0,1)$ & \boldmath $(4,1,1)$ & $(4,1,1)$ & $(1,0,0)$ & $T(0,0)$ & $\bot$ & $(6,0,0)$ \\
  update & $(3,1,0)$ & \boldmath $T(1,0)$ & $(5,0,1)$ & $T(0,1)$ & $(4,1,1)$ & \boldmath $T(1,1)$ & $(1,0,0)$ & $T(0,0)$ & $\bot$ & $(6,0,0)$ \\\hline
  $\{6\}$  & & & & & & & & & &\\
  after write & $(3,1,0)$ & $T(1,0)$ & $(5,0,1)$ & $T(0,1)$ & $(4,1,1)$ & $T(1,1)$ & $(1,0,0)$ & $T(0,0)$ & \boldmath $(6,0,0)$ & $(6,0,0)$ \\
  update & $(3,1,0)$ & $T(1,0)$ & $(5,0,1)$ & $T(0,1)$ & $(4,1,1)$ & $T(1,1)$ & $(1,0,0)$ & $T(0,0)$ & $(6,0,0)$ & \boldmath $(6,0,1)$ \\\hline
  $\{6\}$  & & & & & & & & & &\\
  after write & $(3,1,0)$ & $T(1,0)$ & $(5,0,1)$ & $T(0,1)$ & $(4,1,1)$ & $T(1,1)$ & $(1,0,0)$ & $T(0,0)$ & \boldmath $(6,0,1)$ & $(6,0,1)$ \\
  update & $(3,1,0)$ & $T(1,0)$ & $(5,0,1)$ & $T(0,1)$ & $(4,1,1)$ & $T(1,1)$ & $(1,0,0)$ & $T(0,0)$ & $(6,0,1)$ & \boldmath $T(0,1)$ \\\hline
\end{tabular}
}
\caption{An example of execution of \Cref{alg:cyclesixcoloring}, for the cycle $C_5$ with consecutive node identifiers $(3,5,4,1,6)$. The example corresponds to the scheduling $\mathcal{S}=\{1, 3, 5\}, \{4, 5\}, \{3, 4\}, \{6\}, \{6\},\dots$, and  $T$ stands for $\stateterm$. At each step, the states that are updated are highlighted in bold.}
\label{tab:example}
\end{table}

\section{A Counterexample for an Existing Algorithm for 5-Coloring Cycles}
\label{ssec:5cycle}
\label{app:algo-has-a-bug}

We merely exhibit an instance of 5-coloring $C_4$ for which the algorithm in~\cite{FraigniaudLR22} does not terminate\footnote{For the interested reader, we found this counterexample by implementing a simulator for the \WFLOC model. This simulator tests a given algorithm with random schedulings.}. The algorithm presented in~\cite{FraigniaudLR22} is shown in \Cref{alg:bug}. In \Cref{tab:loop}, we provide an example of execution where the algorithm loops forever.

\begin{algorithm}[htb]
\caption{The (erroneous) 5-coloring algorithm of \cite{FraigniaudLR22}} \label{alg:bug}
\begin{algorithmic}
\Procedure{FiveColoring}{$\id{v}$,$\inp{v}$}
  \State $x \gets \id{v}$; \; $a \gets 0$; \; $b \gets 0$
  \Comment{$s=(x,a,b)$ is the state of node $v$}
  \Forever
     \State $(s_1,s_2) \gets \imm(s)$ 
     \State $P^+ \gets \{ i \in\{1,2\} \mid s_i \neq \bot \land s_i.x > x \}$ \Comment{neighbors with larger id}
     \State $C^+ \gets \{x_i.a \mid i \in P^+\}\cup \{x_i.b \mid i \in P^+\}$ \Comment{$a$ and $b$ of neighbors with larger id}
     \State $C \gets \{ x_i.a \mid  i \in\{1,2\} \land s_i \neq \bot \}\cup \{x_i.b \mid  i \in\{1,2\} \land s_i \neq \bot \}$  \Comment{$a$ and $b$ of all neighbors}
     \If{ $a \notin C$} \Return $a$
     \Else \If{$b \notin C$}  \Return $b$
     \Else
        \State $a \gets \min \mathbb{N} \smallsetminus C^+$
        \State $b \gets \min \mathbb{N} \smallsetminus C$
     \EndIf
     \EndIf
  \Endforever
\EndProcedure
\end{algorithmic}
\end{algorithm}

\begin{table}[htbp]
\resizebox{\textwidth}{!}{
\begin{tabular}{c | c | c|c| c|c| c|c| c }
& \multicolumn{2}{|c|}{3} & \multicolumn{2}{|c|}{4} & \multicolumn{2}{|c|}{2} & \multicolumn{2}{|c}{1} \\
                 &   Old & New  & Old & New  & Old & New  & Old & New   \\\hline
  Initialization & $\bot$ & $(3,0,0)$ & $\bot$ & $(4,0,0)$ & $\bot$ & $(2,0,0)$ & $\bot$ & $(1,0,0)$ \\\hline
  $\{2,3,4\}$ & & & & & & & &\\
  after write &  \boldmath $(3,0,0)$ & $(3,0,0)$ & \boldmath  $(4,0,0)$ & $(4,0,0)$ & \boldmath $(2,0,0)$ & $(2,0,0)$ & $\bot$ & $(1,0,0)$  \\
  update &  $(3,0,0)$ & \boldmath $(3,1,1)$ &  $(4,0,0)$ & \boldmath 
 $(4,0,1)$ & $(2,0,0)$ & \boldmath  $(2,1,1)$ & $\bot$ & $(1,0,0)$  \\\hline

 $\{1,3,4\}$ & & & & & & & &\\
  after write &  \boldmath $(3,1,1)$ & $(3,1,1)$ & \boldmath  $(4,0,1)$ & $(4,0,1)$ & $(2,0,0)$ & $(2,1,1)$ &  \boldmath $(1,0,0)$ & $(1,0,0)$  \\
  update &  $(3,1,1)$ &  \boldmath  $(3,2,2)$ &  $(4,0,1)$ & \boldmath 
 $(4,0,2)$ & $(2,0,0)$ & $(2,1,1)$ &  $(1,0,0)$ & \boldmath  $(1,2,2)$ \\\hline

  $\{3,4\}$ & & & & & & & &\\
  after write &  \boldmath $(3,2,2)$ &   $(3,2,2)$ &  \boldmath $(4,0,2)$ & 
 $(4,0,2)$ & $(2,0,0)$ & $(2,1,1)$ &  $(1,0,0)$ &  $(1,2,2)$   \\
  update &    $(3,2,2)$ &   \boldmath $(3,1,1)$ &  $(4,0,2)$ & \boldmath 
 $(4,0,1)$ & $(2,0,0)$ & $(2,1,1)$ &  $(1,0,0)$ &  $(1,2,2)$  \\\hline

$\{3,4\}$ & & & & & & & &\\
  after write &  \boldmath $(3,1,1)$ &   $(3,1,1)$ &  \boldmath $(4,0,1)$ & 
 $(4,0,1)$ & $(2,0,0)$ & $(2,1,1)$ &  $(1,0,0)$ &  $(1,2,2)$   \\
  update &    $(3,1,1)$ &   \boldmath $(3,2,2)$ &  $(4,0,1)$ & \boldmath 
 $(4,0,2)$ & $(2,0,0)$ & $(2,1,1)$ &  $(1,0,0)$ &  $(1,2,2)$  \\\hline
\end{tabular}
}
\caption{An example of execution where Algorithm~\ref{alg:bug} loops, for a 4-cycle with nodes' identifiers $(3,4,2,1)$ in consecutive order. 
The example is for the scheduling $\{2, 3, 4\}, \{1, 3, 4\}, \{3, 4\}, \{3, 4\},\dots$. Observe that the state obtained after scheduling $\{1,3,4\}$ is the same state as the one obtained after the fourth step (when $\{3,4\}$ is scheduled for the second time). Therefore, there exists a scheduling that makes the algorithm looping forever.}
\label{tab:loop}
\end{table}

\end{document}